\definecolor{darkblue}{rgb}{0.1,0.1,.7}
\newcommand{\assign}{:=}
\newcommand{\backassign}{=:}
\newcommand{\nospace}{}
\newtheorem{theorem}{Theorem}[section]
\newtheorem{lemma}[theorem]{Lemma}
\newtheorem{proposition}[theorem]{Proposition}
\theoremstyle{remark}
\newtheorem{remark}[theorem]{Remark}
\def\@fpheader{\ }
\title{Twist accumulation in conformal field theory.\\A rigorous approach to
	the lightcone bootstrap}
\author{Sridip Pal$^{a,b}$, Jiaxin Qiao$^{c,d,e}$, Slava Rychkov$^{e}$}
\affiliation{$^a$Walter Burke Institute for Theoretical Physics,  California Institute of Technology,  Pasadena, CA, USA  \\ $^b$School of Natural Sciences, Institute for Advanced Study, Princeton, NJ 08540, USA\\ $^c$Laboratory for Theoretical Fundamental Physics, Institute of Physics, École Polytechnique Fédérale de Lausanne (EPFL), CH-1015 Lausanne, Switzerland\\ 
$^d$Laboratoire de Physique de l'École normale supérieure, ENS, Université PSL, CNRS, Sorbonne Université, Université de Paris, F-75005 Paris, France\\ 
	 $^e$Institut des Hautes \'Etudes Scientifiques, 91440 Bures-sur-Yvette, France\\
	}
\abstract{We prove that in any unitary CFT, a twist gap in the spectrum of operator
	product expansion (OPE) of identical scalar quasiprimary operators (i.e.\,$\phi
	\times \phi$) implies the existence of a family of quasiprimary operators
	$\mathcal{O}_{\tau, \ell}$ with spins $\ell \rightarrow \infty$ and twists
	$\tau \rightarrow 2 \Delta_{\phi}$ in the same OPE spectrum. A similar
	twist-accumulation result is proven for any two-dimensional
	Virasoro-invariant, modular-invariant, unitary CFT with a normalizable
	vacuum and central charge $c > 1$, where we show that a twist gap in the
	spectrum of Virasoro primaries implies the existence of a family of Virasoro
	primaries $\mathcal{O}_{h, \bar{h}}$ with $h \rightarrow \infty$ and
	$\bar{h} \rightarrow \frac{c - 1}{24}$ (the same is true with $h$ and
	$\bar{h}$ interchanged). We summarize the similarity of the two problems and
	propose a general formulation of the lightcone bootstrap. }
\begin{document}

\maketitle

\section{Introduction}

In this paper we will discuss conformally invariant unitary theories in $d
\geqslant 2$ dimensions, in Lorentzian signature. In this context, an
important quantity characterizing any local operator is its twist $\tau =
\Delta - \ell$ where $\Delta, \ell$ are the dimension and spin. We will be
considering theories with a twist gap, i.e.~theories in which all operators
have twist $\tau \geqslant \tau_{\rm gap} > 0$.

It has been noticed that a twist gap in the spectrum implies the existence of
twist-accumulating families of operators, i.e.~families of quasiprimaries
$\mathcal{O}_{\tau,\ell}$ whose spin $\ell \rightarrow \infty$ while twist
$\tau$ approaches a finite limit $\tau_{\infty}$. Namely, globally
conformally invariant theories in $d \geqslant 2$ with a twist gap should
contain one twist-accumulating theory of quasiprimaries for every scalar
quasiprimary $\phi$, with $\tau_{\infty} = 2 \Delta_{\phi}$ \cite{Fitzpatrick:2012yx,Komargodski:2012ek}. 
This was argued to follow from
the crossing symmetry of the four-point (4pt) function $\langle \phi \phi
\phi \phi \rangle$ in a certain ``lightcone limit''---an argument known
as ``lightcone bootstrap''. However, as we review below, the existing
derivations have various caveats, which may raise doubts in the universal
validity of this result. One goal of this work is to remove this doubt. We
will show a way to do the lightcone bootstrap in a mathematically rigorous
way.\footnote{Historically, twist accumulation of high-spin operators was first studied in 1973 in perturbation theory by Parisi \cite{Parisi:1973xn}, and by Callan and Gross \cite{Callan:1973pu}.}

There is a related twist-accumulation result for the two-dimensional (2D) Virasoro-invariant
unitary CFTs with $c > 1$ and a twist gap in the spectrum of
Virasoro primaries. In particular, such theories should have a
twist-accumulating family of Virasoro primaries with $\tau_{\infty} = (c - 1)
/ 12$ {\cite{Collier:2016cls,Afkhami-Jeddi:2017idc,Benjamin:2019stq}}. This
should follow from the modular invariance of the CFT partition function, in a
``lightcone limit'' of modular parameters. Once again, a rigorous derivation
has been lacking. We will provide it here, by adapting our new lightcone
bootstrap argument to the modular bootstrap case.

We will start with the lightcone bootstrap case in Section \ref{LC}: we
present the intuitive argument, explain the caveats, and then give a rigorous
proof, based on the Euclidean CFT axioms as stated e.g.~in \cite{Kravchuk:2021kwe}. We then treat the modular bootstrap case in Section \ref{LMB}. Here we
start with a rigorous proof and then discuss the intuitive arguments from the
prior literature and explain the caveats. The lightcone and modular sections
are written so that they can be read independently.
Similarities between the two problems are emphasized in Section \ref{GF}. In
Section \ref{Virasoro} we review the conjectures in 2D CFT coming from
analyzing the 4pt function using the Virasoro fusion kernel and crossing
symmetry, as opposed to the modular invariance of the partition function. We then
conclude and outline directions for further work. Some of the more involved
computations are delegated to the appendices.

\section{Lightcone bootstrap}\label{LC}

In a unitary CFT in $d \geqslant 2$ dimensions, we consider the 4pt function
$g (z, \bar{z})$ of a Hermitian scalar quasiprimary\footnote{In this paper we use the word ``quasiprimary'' to denote primary w.r.t. the global conformal group, even in $d>2$ where the word primary is often used for this purpose.} operator $\phi$ of dimension
$\Delta_{\phi}$, satisfying the crossing equation:
\begin{equation}
	g (z, \bar{z}) = \left( \frac{z \bar{z}}{(1 - z) (1 - \bar{z})}
	\right)^{\Delta_{\phi}} g (1 - z, 1 - \bar{z})\,. \label{crosseq}
\end{equation}
The function $g (z, \bar{z})$ is expanded in conformal blocks $g_{\tau, \ell}
(z, \bar{z})$ of the global conformal group:
\begin{equation}
	g (z, \bar{z}) = 1 + \sum_{\tau, \ell}\, p_{\tau, \ell}\, g_{\tau, \ell} (z,
	\bar{z}), \label{CBexp}
\end{equation}
where 1 is the contribution of the unit operator, and $\tau$ and $\ell$ are
the twist and the spin of the nontrivial quasiprimaries $\mathcal{O}_{\tau,
	\ell}$ in the OPE $\phi \times \phi$. The coefficients $p_{\tau, \ell}
\geqslant 0$ from unitarity. 

The 4pt function $g (z, \bar{z})$ is finite for $0\leqslant z,\bar{z}<1$, and the expansion \eqref{CBexp} converges in this region. This can be shown rigorously from the Euclidean CFT axioms \cite{Fitzpatrick:2012yx,Kravchuk:2021kwe}.

Conformal blocks have an expansion of the form
\begin{equation}
	g_{\tau, \ell} (z, \bar{z}) = (z \bar{z})^{\tau / 2} \sum_{n, m \in
		\mathbb{Z}_{\geqslant 0}} a_{n, m} z^n \bar{z}^m, \label{CBindexp}
\end{equation}
where $a_{n, m} \geqslant 0$ depend on $\tau, \ell, d$ \cite{Pappadopulo:2012jk}\cite{Fitzpatrick:2012yx}\cite{Kravchuk:2021kwe}.

We impose the twist gap assumption
\begin{equation}
	\tau \geqslant \tau_{\rm gap} > 0 \label{TGA}
\end{equation}
for all blocks appearing in {\eqref{CBexp}}.\footnote{Operators of twist exactly equal or close to $\tau_{\rm gap}$ need not exist. In other words, we do not assume that $\tau_{\rm gap}$ is the \emph{maximal} twist gap.} For $d > 2$, this assumption is
satisfied due to the unitarity bounds. For $d = 2$, the twist gap assumption
may be satisfied by theories without a local conserved stress tensor, such as
conformal defects or long-range fixed points. The twist gap assumption on
quasiprimaries is not satisfied in local 2D CFTs, because
infinitely many Virasoro descendants of the unit operator have twist 0 (the
stress tensor being the first of them), so this case is excluded from the
present discussion.\footnote{It is interesting to inquire what happens in
   local 2D CFTs satisfying the twist gap assumption on the
	spectrum of Virasoro primaries {\cite{Kusuki:2018wpa,Collier:2018exn}}. This
	is not the focus of our work but it will be discussed briefly in Section
	\ref{Virasoro}.}

Under the above assumptions, Refs.\,{\cite{Fitzpatrick:2012yx,Komargodski:2012ek}} argued that among the operators
$\mathcal{O}_{\tau, \ell}$ contributing to {\eqref{CBexp}}, there must be a
series of operators whose twist converges to $2 \Delta_{\phi}$ while spin
$\ell \rightarrow \infty$.

\subsection{Intuitive argument}\label{intuitive}

Here we will present an intuitive argument for the twist accumulation result
of {\cite{Fitzpatrick:2012yx,Komargodski:2012ek}}. The argument uses the CFT
crossing equation in the range of real $0 < z, \bar{z} < 1$, corresponding to
the Lorentzian signature. Furthermore, one considers the limit $z \rightarrow
0$, $\bar{z} \rightarrow 1$. This is known as the double lightcone
(DLC) limit, as it corresponds to one point in the 4pt function approaching
the lightcones of two other points.

In the DLC limit, the \textit{direct channel} (l.h.s. of Eq.\,{\eqref{crosseq}}) is expected to be dominated by the unit operator which by the
twist gap assumption is the only operator of zero twist:
\begin{equation}
	g (z, \bar{z}) \approx 1 \qquad \text{in the DLC limit} \label{unitdom}
	\quad (?),
\end{equation}
where (?) reminds that this is a non-rigorous statement and the rigorous result
will be more subtle, see Section \ref{direct}.

To motivate {\eqref{unitdom}}, one notes that each nontrivial conformal block
in the r.h.s.~of {\eqref{CBexp}} is \ $O (z^{\tau / 2})$, thus goes to zero as
$z \rightarrow 0$ for fixed $\bar{z}$. One then hopes that this remains true:
\begin{eqnarray}
	&  & \text{- when summed over all blocks;}  \label{hopeA}\\
	&  & \text{- not only for fixed $\bar{z}$ but as $\bar{z} \rightarrow 1$,
		i.e.~in the DLC limit} .  \label{hopeB}
\end{eqnarray}
Assuming {\eqref{unitdom}}, one then asks how this can be reproduced by the
\textit{crossed channel} (r.h.s.~of Eq.\,{\eqref{crosseq}}). One studies
how crossed-channel conformal blocks behave in the DLC limit. It can be seen
that:
\begin{equation}
	g_{\tau, \ell} (1 - z, 1 - \bar{z}) \approx (1 - \bar{z})^{\tau / 2}
	F_{\tau, \ell} (1 - z) \qquad \text{in the DLC limit} \label{CBcrossed},
\end{equation}
where the function $F_{\tau, \ell} (1 - z)$ blows up at most logarithmically
as $z \rightarrow 0$. Since this logarithmic growth cannot compensate the
factor $z^{\Delta_{\phi}}$ in the r.h.s.~of {\eqref{crosseq}}, one
concludes that a finite number of crossed conformal blocks cannot reproduce
{\eqref{unitdom}}. Rather, one needs a cooperative effect from an infinite
number of blocks. One notes that the needed asymptotics can be achieved by an
infinite family of operators $\mathcal{O}_{\tau, \ell}$ with
\begin{eqnarray}
	\tau = 2 \Delta_{\phi,} &  &  \label{condQ1}\\
	\sum_{\ell} p_{\tau, \ell} F_{\tau, \ell} (1 - z) & \sim & 1 /
	z^{\Delta_{\phi}} \qquad (z \rightarrow 0) .  \label{condQ2}
\end{eqnarray}
Condition {\eqref{condQ1}} allows the factor $ (1 - \bar{z})^{\tau / 2}$ in
{\eqref{CBcrossed}} to cancel $(1 - \bar{z})^{\Delta_{\phi}}$ in the
denominator of {\eqref{crosseq}}. Condition {\eqref{condQ2}} then allows the
total contribution of this family of operators to cancel
$z^{\Delta_{\phi}}$ in the numerator of {\eqref{crosseq}}. Condition
{\eqref{condQ1}} does not have to be satisfied exactly but only asymptotically
as $\ell \rightarrow \infty$.

This completes our review of the intuitive argument why an infinite family of
operators with twist accumulating to $2 \Delta_{\phi}$ may be expected to
exist.\footnote{For completeness, we mention two further ideas which were used
	in the past to argue for the twist accumulation. These ideas appear even more
	nontrivial to make rigorous and they will not play a role in our work. (1) Ref.~{\cite{Fitzpatrick:2012yx}} phrased its argument in terms of a spectral
	density in the twist space, which is a linear functional acting by integrating
	against continuous functions $f (\tau)$, and defined as a $z \rightarrow 0$
	limit of a family of functionals. However {\cite{Fitzpatrick:2012yx}} did not provide a
	proof that the limit exists, and this appears quite nontrivial. See the
	discussion in {\cite{Qiao:2017xif}}, footnote 26. (2) Ref.\,{\cite{Komargodski:2012ek}}, Section 3.5, gave an argument following the ideas
	of Alday-Maldacena {\cite{Alday:2007mf}} which view the CFT 4pt function as a
	4pt function in a massive 2D theory. } Now let us discuss the caveats.

The first caveat concerns Eq.\,{\eqref{unitdom}}. Passages in {\eqref{hopeA}},
{\eqref{hopeB}} clearly require justification, absent in the existing
literature. To have a chance for {\eqref{hopeB}}, it is clear that $\bar{z}$
should not approach 1 too fast, compared to how $z$ approaches $0$. This
suggests that DLC has to be formulated more precisely than merely $z
\rightarrow 0$, \ $\bar{z} \rightarrow 1$.

The second caveat concerns relying on Eq.\,{\eqref{CBcrossed}} to conclude that
only blocks with $\tau$ near $2 \Delta_{\phi}$ have good $\bar{z}$
dependence with respect to the $(1 - \bar{z})^{\Delta_{\phi}}$ factor in
the denominator of {\eqref{crosseq}}, to do the job of reproducing
{\eqref{unitdom}}. Indeed Eq.\,{\eqref{CBcrossed}} is for a single block. What
will happen when we sum infinitely many blocks? Can the blocks with $\tau$ not
near $2 \Delta_{\phi}$ somehow conspire to reproduce the needed behavior?
One needs to rule out this scenario before concluding that {\eqref{condQ1}},
{\eqref{condQ2}} is the only way of satisfying {\eqref{unitdom}}.

The third caveat concerns Eq.\,{\eqref{CBcrossed}} itself. It is easy to see
that this equation holds when $1 - \bar{z} \rightarrow 0$ for a fixed $z$, but
it's harder to control what happens in the DLC limit, except in even $d$ where
explicit formulas for $g_{\tau, \ell}$ are available. We have that
\begin{equation}
	g_{\tau, \ell} (1 - z, 1 - \bar{z}) = (1 - \bar{z})^{\tau / 2} F_{\tau,
		\ell} (1 - z) + O ( (1 - \bar{z})^{(\tau + 1) / 2}), \label{CBcrossed1}
\end{equation}
but it is nontrivial to control how the constant in the error term depends on $\tau,
\ell$, and $z$.

In the next section we will present a rigorous argument which resolves all
these caveats:
\begin{itemize}
	\item We will find the range of $z, \bar{z}$ such that Eq.\,{\eqref{unitdom}}
	is true, i.e.\,the precise formulation of DLC limit, called
	DLC$_{\ast}$ below. It will be more subtle than the condition $z \ll 1 - \bar{z}$
	previously used by many authors.
	
	\item We will show that crossed-channel blocks with $\tau$ not accumulating
	to $2 \Delta_{\phi}$, even infinitely many of them, cannot reproduce
	{\eqref{unitdom}}. Namely, the total crossed channel contribution of all
	blocks with $\lvert \tau - 2 \Delta_{\phi} \rvert > \varepsilon$ goes to zero in
	the DLC$_{\ast}$ limit, for an arbitrary $\varepsilon > 0$.
	
	\item As a part of the proof, we will show that the error term in
	{\eqref{CBcrossed1}} cannot overwhelm the leading term, in a sense which will be made precise, in an arbitrary $d \geqslant 2$.
\end{itemize}
\subsection{Rigorous proof}\label{rig-proof}

Here is a rigorous formulation of the twist accumulation result. 
\begin{theorem}
	\label{theorem:DLCtwist}Let $\phi$ be a Hermitian scalar quasiprimary in
	a unitary CFT in $d \geqslant 2$. Suppose that
	\begin{itemize}
		\item all nontrivial quasiprimary operators in the OPE $\phi \times
		\phi$ satisfy the twist gap assumption {\eqref{TGA}} with $\tau_{\rm gap}>0$;
		
		\item the spectrum of operators in this OPE is discrete, i.e.\,there is a
		finite number of operators with $\Delta \leqslant \Delta_{\ast}$ for any
		$\Delta_{\ast}$.
	\end{itemize}
	Then:
	\begin{enumerate}
	\item[(a)] $\tau_{\rm gap} \leqslant 2 \Delta_{\phi}$;
	\item[(b)] The OPE $\phi \times \phi$ contains an infinite sequence of
	quasiprimary operators $(\mathcal{O}_i)_{i = 1}^{\infty}$ whose twist $\tau_i
	\rightarrow 2 \Delta_{\phi}$ while spin $\ell_i \rightarrow \infty$.
	\end{enumerate}
\end{theorem}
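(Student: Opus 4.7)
The plan is to implement the three bullet points listed just before the theorem statement. I would first pin down the precise form of the DLC$_*$ limit by coupling $z \to 0^+$ and $\bar z \to 1^-$ through a constraint such as $z^{\tau_{\rm gap}/2}\,|\log(1-\bar z)|^{A} \to 0$ for an appropriate constant $A = A(\Delta_\phi, d)$: strong enough to defeat the (at worst logarithmic) blow-up of $F_{\tau,\ell}(1-z)$ in \eqref{CBcrossed}, but loose enough that $\bar z$ still reaches $1$. The proof then splits into three stages: (i) unit dominance $g(z,\bar z) \to 1$ in DLC$_*$; (ii) rearrange \eqref{crosseq} to isolate the crossed-channel asymptotic $\sum_{\tau,\ell} p_{\tau,\ell}\, g_{\tau,\ell}(1-z, 1-\bar z) \sim \bigl((1-z)(1-\bar z)/(z\bar z)\bigr)^{\Delta_\phi}$; (iii) a Tauberian-style argument showing that only blocks with $\tau$ accumulating at $2\Delta_\phi$ can supply the leading $(1-\bar z)^{\Delta_\phi}$ singularity on the right. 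Part (a) then falls out of (b) for free, since producing operators of twist arbitrarily close to $2\Delta_\phi$ forces $\tau_{\rm gap} \leqslant 2\Delta_\phi$.

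For step (i) I would split the sum in \eqref{CBexp} into a light, finite part (operators with $\Delta \leqslant \Delta_\ast$ for a chosen cutoff) and a heavy tail. Each light block is $O(z^{\tau/2})$ with $\tau \geqslant \tau_{\rm gap}$, and the worst $\bar z$-dependence of the finitely many light blocks can be absorbed into the exponent $A$ defining DLC$_*$. The heavy tail is controlled by $\rho$-coordinate convergence estimates of Pappadopulo et al., together with positivity of the $p_{\tau,\ell}$: one dominates the tail at the DLC$_*$ point by the tail at a fixed interior point $(z_0,\bar z_0)$ with $\bar z_0$ strictly less than $1$, where the expansion converges. Nonnegativity of $a_{n,m}$ in \eqref{CBindexp} is what allows these pointwise comparisons without worrying about cancellations.

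For step (ii), I would plug $g(z,\bar z) \to 1$ into \eqref{crosseq} and pass to crossed variables $w = 1-z$, $\bar w = 1-\bar z$. The crossed block expansion combined with \eqref{CBcrossed1} then separates the sum by twist. I would integrate both sides against a nonnegative test weight $\omega_\varepsilon(\bar w)$ designed to suppress $\bar w^{\tau/2}$ for $|\tau - 2\Delta_\phi| > \varepsilon$ while remaining strictly positive near $\tau = 2\Delta_\phi$; the resulting inequality bounds the $\omega_\varepsilon$-weighted contribution of twist-distant blocks, forcing the leading behavior to come entirely from a family with twists clustering at $2\Delta_\phi$. The discreteness hypothesis then pushes spins to infinity: since only finitely many operators have $\Delta \leqslant \Delta_\ast$ for any cutoff, any infinite sequence with bounded twist $\tau_i \to 2\Delta_\phi$ must have $\ell_i \to \infty$.

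The main obstacle, as the paper itself flags in Section \ref{intuitive}, is the third caveat: uniform control of the error term in \eqref{CBcrossed1} in $(\tau,\ell,z)$, for general $d \geqslant 2$ where no closed form for the blocks is available. My strategy would be to re-expand the crossed block using \eqref{CBindexp} applied in the crossed frame, and use positivity once more to bound the subleading terms by a constant (depending mildly on $\tau,\ell$) times the leading one. The Pappadopulo-type damping in the sum over operators should swallow any polynomial growth of this constant, and matching the resulting uniform bound to the Tauberian extraction in step (iii) is what I expect to be the technical heart of the proof.
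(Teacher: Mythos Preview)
Your step (i) has a genuine gap. The direct-channel divergence as $\bar z\to 1$ is \emph{power-law}, not logarithmic: while a single block $g_{\tau,\ell}(z_0,\bar z)$ grows only like $\log\frac{1}{1-\bar z}$, the full sum $g(z_0,\bar z)$ blows up as $(1-\bar z)^{-\Delta_\phi}$. So your proposed DLC$_*$ condition $z^{\tau_{\rm gap}/2}|\log(1-\bar z)|^A\to 0$ is far too weak; the paper's correct condition is $z^{\tau_{\rm gap}/2}/(1-\bar z)^{\Delta_\phi}\to 0$. Relatedly, your heavy-tail argument fails: blocks have nonnegative expansion coefficients in $z,\bar z$, so $g_{\tau,\ell}(z,\bar z)$ is \emph{increasing} in $\bar z$, and you cannot dominate the tail at $\bar z$ near $1$ by the tail at a fixed interior $\bar z_0<1$. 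The paper's fix is to keep $\bar z$ and instead freeze $z$: use $g_{\tau,\ell}(z,\bar z)\leqslant (z/z_0)^{\tau/2}g_{\tau,\ell}(z_0,\bar z)$, sum over $\tau,\ell$ to get $g(z_0,\bar z)$, and then bound \emph{that} by crossing, which yields exactly the power-law $(1-\bar z)^{-\Delta_\phi}$.

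Your step (iii) misses the structural point that the low-twist crossed-channel analysis needs a \emph{two-sided} bound on blocks, $C_1\leqslant g_{\tau,\ell}(z,\bar z)/[\bar z^{\tau/2}F_{\tau,\ell}(z)]\leqslant C_2$ with $C_1,C_2$ uniform in $\tau,\ell$ (the paper's Lemma~\ref{AFLz}). Positivity of the $a_{n,m}$ only ever gives one direction; it lets you bound the block at $\bar z$ by its value at a \emph{larger} $\bar z'$, never at a smaller one. But the whole low-twist mechanism is that blocks with $\tau<2\Delta_\phi-\varepsilon$ have the ``wrong'' $\bar z$-scaling, which you exploit by comparing $g_{\tau,\ell}(1-z,1-\bar z)$ to $g_{\tau,\ell}(1-z,1-\bar z')$ with $\bar z'>\bar z$, i.e.\ at a \emph{smaller} argument $1-\bar z'$---and that comparison requires both the upper and lower bounds simultaneously. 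Your one-sided positivity strategy cannot deliver this, and the Tauberian weight $\omega_\varepsilon(\bar w)$ does not help without it. Finally, note that the paper establishes part (a) \emph{first} and uses it in the proof of (b) (e.g.\ to show individual crossed blocks are subleading); your ordering ``(a) falls out of (b)'' would create a circularity.
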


The rest of this subsection will be devoted to the rigorous proof of this
theorem. Although part (b) is stronger than part (a), it
will be convenient to establish part (a) first.
	
As above, consider the 4pt function $g (z, \bar{z})$, which satisfies
the crossing equation {\eqref{crosseq}} and has expansion {\eqref{CBexp}} in
conformal blocks, each of which can be expanded in turn as in
{\eqref{CBindexp}}. We will work for $0 < z, \bar{z} < 1$. Expansions
{\eqref{CBexp}} and {\eqref{CBindexp}} converge in this range. The proof
naturally splits in several steps:
\begin{enumerate}
	\item \textit{Direct channel} (Section \ref{direct}). We will show that
	the 4pt function in the direct channel is indeed dominated by the unit
	operator, provided that one considers a modified DLC limit
	denoted $\text{DLC}_{\ast}$ and defined by
	\begin{equation}
		\text{DLC${}_{\ast}$ limit} : \qquad z \rightarrow 0, \quad \bar{z}
		\rightarrow 1, \quad \frac{z^{\tau_{\rm gap} / 2}}{(1 -
			\bar{z})^{\Delta_{\phi}}} \rightarrow 0 . \label{DLC*}
	\end{equation}
	The third requirement is what makes $\text{DLC}_{\ast}$ different from DLC.
	
	\item \textit{Crossed channel: high twists} (Section \ref{crossed-high}).
	Here we turn our attention to the crossed channel, and show that, for any
	$\varepsilon > 0$, the total contribution of operators with twists $\tau
	\geqslant 2 \Delta_{\phi} + \varepsilon$ vanishes in the
	$\text{DLC}_{\ast}$ limit.
	
	\item \textit{Crossed channel: low twists }(Section \ref{crossed-low}).
	We then consider low twists and show that the same is true for them: for any
	$\varepsilon > 0$, the total contribution of operators with $\tau \leqslant
	2 \Delta_{\phi} - \varepsilon$ also vanishes in the $\text{DLC}_{\ast}$
	limit. Here a key role will be played by the bounds on conformal blocks
	discussed in Section \ref{lemma}, which make precise Eq.
	{\eqref{CBcrossed}}.
\end{enumerate}
The proof is then completed in Section \ref{crossed-around}.

\subsubsection{Direct channel}\label{direct}

Let us carry out this plan. Since $z \rightarrow 0$ and $\bar{z} \rightarrow
1$ we may assume that $\bar{z} \geqslant z$. We choose and fix a number $z_0
\in (0, 1)$, and will assume that
\begin{equation}
	0 \leqslant z \leqslant z_0 \leqslant \bar{z} < 1 \label{CFT:regime} .
\end{equation}
The precise value of $z_0$ does not matter; we could have taken $z_0 = 1 / 2$.
In Section \ref{crossed-low} below we will have to restrict the range \label{CFT:regime}
further a bit, so that $\bar{z} / z \geqslant 1 + \delta$ for a fixed positive
$\delta$.

In this subsection we will show, using the direct channel conformal block
decomposition, that there is a constant $C$ such that
\begin{equation}
	1 \leqslant g (z, \bar{z}) \leqslant 1 + C \frac{z^{\tau_{\rm gap} / 2}}{(1 -
		\bar{z})^{\Delta_{\phi}}}  \qquad (0 \leqslant z \leqslant z_0 \leqslant
	\bar{z} < 1) . \label{CFT:LHS}
\end{equation}
The lower bound here is obvious, since the blocks are non-negative for $0
\leqslant z, \bar{z} < 1$, and $p_{\tau, \ell} \geqslant 0$. This constant $C$
will in general depend on the 4pt function we are considering, and on $z_0$.
The point is that it is independent of $z, \bar{z}$. Eq.\,{\eqref{CFT:LHS}}
implies that
\begin{equation}
	g (z, \bar{z}) \rightarrow 1 \qquad \text{as}\quad \frac{z^{\tau_{\rm gap} / 2}}{(1
		- \bar{z})^{\Delta_{\phi}}} \rightarrow 0,
\end{equation}
and in particular in the DLC$_{\ast}$ limit {\eqref{DLC*}}. Unlike the intuitive argument of
Section \ref{intuitive}, our proof of {\eqref{CFT:LHS}} will apply to the full
sum of blocks and not just to the individual blocks.

We will start with two simple lemmas.

\begin{lemma}
	\label{lemma:CBineq}For each $(\Delta, \ell)$ that satisfies the unitarity
	bound, we have
	\begin{equation}
		g_{\tau, \ell} (z, \bar{z}) \leqslant \left( {z}/{z_0} \right)^{\tau /
			2} g_{\tau, \ell} (z_0, \bar{z}) \qquad \left(
		\forall \ 0 \leqslant z \leqslant z_0 < 1, 0 \leqslant \bar{z}
		< 1 \right) \label{L2stat} .
	\end{equation}
\end{lemma}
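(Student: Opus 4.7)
The plan is to derive the inequality directly from the convergent series representation \eqref{CBindexp} of the conformal block, using the non-negativity of the coefficients $a_{n,m}$. Concretely, I would factor the prefactor $(z\bar{z})^{\tau/2}$ out of the block and isolate the $z$-dependence from the $z^{\tau/2}$ piece.

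First I would write
\begin{equation*}
z^{-\tau/2}\, g_{\tau,\ell}(z,\bar{z}) = \bar{z}^{\tau/2}\sum_{n,m\geqslant 0} a_{n,m}\, z^{n}\, \bar{z}^{m},
\end{equation*}
valid (with convergent right-hand side) for $0\leqslant z,\bar{z}<1$ by \eqref{CBindexp}. Next, I would use that $0\leqslant z\leqslant z_0<1$ implies $z^n\leqslant z_0^n$ for every integer $n\geqslant 0$. Combined with $a_{n,m}\geqslant 0$ and $\bar{z}^{m}\geqslant 0$, this gives the termwise bound $a_{n,m} z^n \bar{z}^m \leqslant a_{n,m} z_0^n \bar{z}^m$, which sums (the series on the right converging at $(z_0,\bar{z})$) to
\begin{equation*}
z^{-\tau/2}\, g_{\tau,\ell}(z,\bar{z}) \;\leqslant\; z_0^{-\tau/2}\, g_{\tau,\ell}(z_0,\bar{z}).
\end{equation*}
Multiplying both sides by $z^{\tau/2}$ and rearranging yields \eqref{L2stat}.

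There is really no main obstacle here: the lemma is a direct consequence of the positivity of the block expansion coefficients, which the authors have already stated holds by the results cited after \eqref{CBindexp}. The only things to be careful about are noting that the inequality $z^n\leqslant z_0^n$ requires $n\geqslant 0$ (which is built into the range of summation) and that we are free to compare the two sides term by term because all summands are non-negative and both series converge absolutely in the stated range of $(z,\bar{z})$ and $(z_0,\bar{z})$.
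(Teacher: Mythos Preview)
Your proof is correct and is essentially identical to the paper's own proof: both use the expansion \eqref{CBindexp} with $a_{n,m}\geqslant 0$ and the monotonicity $z^n\leqslant z_0^n$ to bound the block termwise. The only cosmetic difference is that the paper writes the inequality directly without first dividing by $z^{\tau/2}$, which avoids the harmless edge case $z=0$.
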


\begin{proof}
	(See {\cite{Fitzpatrick:2012yx}}, Eq.\,(80)) This is an elementary
	consequence of Eq.\,{\eqref{CBindexp}}:
	\begin{equation}
		g_{\tau, \ell} (z, \bar{z}) \leqslant (z \bar{z})^{\tau / 2} \sum_{n,
			m \geqslant 0} a_{n, m} z_0^n \bar{z}^m = \left( {z}/{z_0}
		\right)^{\tau / 2} g_{\tau, \ell} (z_0, \bar{z}),
	\end{equation}
	where we used that $a_{n, m} \geqslant 0$ when the unitarity bound holds.
\end{proof}

By the convergence of the expansion {\eqref{CBexp}}, the 4pt function $g (z,
\bar{z})$ is bounded away from $z = 1$ and $\bar{z} = 1$. The second lemma
bounds the growth of $g (z, \bar{z})$ as one of the arguments tends to 1.

\begin{lemma}
	\label{lemma:gbound1}There exists a constant $B$ such that
	\begin{equation}
		g (z_0, \bar{z}) \leqslant \frac{B}{(1 - \bar{z})^{\Delta_{\phi}}}
		\qquad \left( \forall \ 0 \leqslant
		\bar{z} < 1 \right) . \label{gbound1}
	\end{equation}
\end{lemma}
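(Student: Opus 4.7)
The plan is to use the crossing equation \eqref{crosseq} to convert the potentially singular $\bar z \to 1$ behavior of $g(z_0,\bar z)$ into the behavior of $g$ near the Euclidean origin, where it is controlled by the unit operator, and to use positivity of the conformal block expansion to make the resulting bound uniform. The positivity of $a_{n,m}$ in \eqref{CBindexp}, together with $p_{\tau,\ell}\geqslant 0$ and $\tau\geqslant 0$, implies that $(z,\bar z)\mapsto g(z,\bar z)$ is coordinatewise monotone non-decreasing on $[0,1)^{2}$.

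I would split $\bar z\in [0,1)$ into $[0,z_0]$ and $[z_0,1)$. On $[0,z_0]$, monotonicity gives $g(z_0,\bar z)\leqslant g(z_0,z_0)$, and since $(1-\bar z)^{-\Delta_\phi}\geqslant 1$ throughout this range, the claimed bound holds with constant $g(z_0,z_0)$, which is finite by convergence of \eqref{CBexp} at the interior point $(z_0,z_0)$. For the more delicate range $\bar z\in [z_0,1)$, I would apply crossing to write
\[
g(z_0,\bar z)=\left(\frac{z_0\,\bar z}{(1-z_0)(1-\bar z)}\right)^{\Delta_\phi} g(1-z_0,\,1-\bar z).
\]
Now $1-\bar z\in (0,1-z_0]$, so by monotonicity in the second argument $g(1-z_0,1-\bar z)\leqslant g(1-z_0,1-z_0)$, a finite constant. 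Bounding the prefactor using $\bar z\leqslant 1$ then yields
\[
g(z_0,\bar z)\leqslant \left(\frac{z_0}{1-z_0}\right)^{\Delta_\phi} g(1-z_0,1-z_0)\cdot\frac{1}{(1-\bar z)^{\Delta_\phi}}.
\]

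Taking $B$ to be the maximum of $g(z_0,z_0)$ and $(z_0/(1-z_0))^{\Delta_\phi} g(1-z_0,1-z_0)$ gives the lemma. There is no substantive obstacle here; the only conceptual move is to recognize that crossing trades the $\bar z\to 1$ limit, where the function could a priori blow up, for a $1-\bar z\to 0$ Euclidean corner where positivity and OPE convergence at an interior point keep the function bounded. Monotonicity, available thanks to unitarity, is what promotes the pointwise finiteness at that corner to a uniform bound over the whole range of $\bar z$.
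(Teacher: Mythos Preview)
Your proof is correct and follows essentially the same approach as the paper: split at $\bar z=z_0$, use boundedness (via monotonicity) on $[0,z_0]$, and use crossing plus boundedness of $g$ away from $z=\bar z=1$ on $[z_0,1)$. Your explicit constant $B=\max\{g(z_0,z_0),(z_0/(1-z_0))^{\Delta_\phi}g(1-z_0,1-z_0)\}$ is exactly the one the paper records in the remark following the lemma.
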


\begin{proof}
	The constant $B$ in {\eqref{gbound1}} in general depends on the considered
	4pt function, as it will be clear from the proof, see Remark \ref{leitmotif} below.
	
	For $\bar{z} \leqslant z_0$, Eq.~{\eqref{gbound1}} is trivially true since
	the 4pt function is then bounded. 
	
	For $\bar{z} \geqslant z_0$ we use the
	crossing symmetry Eq.\,{\eqref{crosseq}} to write
	\begin{equation}
		g (z_0, \bar{z}) = \frac{1}{(1 - \bar{z})^{\Delta_{\phi}}} \times \left[
		\left( \frac{z_0 \bar{z}}{1 - z_0} \right)^{\Delta_{\phi}} g (1 - z_0, 1
		- \bar{z}) \right] . \label{crossing:z0zb}
	\end{equation}
	Since $\bar{z} \geqslant z_0$, both arguments of the 4pt function in the
	r.h.s.~are separated from 1. Hence the expression in the square brackets is
	uniformly bounded, and we obtain {\eqref{gbound1}}.
\end{proof}

\begin{remark}\label{leitmotif} Going through the proof, we see that we can take 
	\begin{equation}
		B = \max(B_1,B_2),\quad B_1= g(z_0,z_0),\ B_2=\left(\frac{z_0}{1-z_0}\right)^{\Delta_\phi} g(1-z_0,1-z_0)\,. \label{modeleq}
		\end{equation}
	This equation shows one of the main ideas of this paper, that constants in bounds of infinite sums should be expressed in terms of the 4pt function at some fixed positions. Since the 4pt function is itself an infinite sum, and it's finite by assumption, this provides us a rigorous handle to estimate various other sums. Below we will not go into trouble of writing out all our constants as explicitly as in Eq.~\eqref{modeleq}, but it can be done if needed.
\end{remark}

For any $\tau_{\ast}$, define $g_{\tau \geqslant \tau_{\ast}}$ to be the sum
of all terms in the direct channel expansion {\eqref{CBexp}} having twist
$\tau \geqslant \tau_{\ast}$:\footnote{Since the convergent expansion
	{\eqref{CBexp}} consists of positive terms, the series defining $g_{\tau
		\geqslant \tau_{\ast}}$ is convergent and $g_{\tau \geqslant \tau_{\ast}}$ is
	well defined.}
\begin{equation}
	g_{\tau \geqslant \tau_{\ast}} (z, \bar{z}) = \sum_{\tau \geqslant
		\tau_{\ast}}\sum_{\ell} p_{\tau, \ell}\, g_{\tau, \ell} (z, \bar{z}) .
	\label{ghigh}
\end{equation}
\begin{proposition}
	\label{lemmahigh}There is a constant $C$ such that
	\begin{equation}
		g_{\tau \geqslant \tau_{\ast}} (z, \bar{z}) \leqslant C
		\frac{z^{\tau_{\ast} / 2}}{(1 - \bar{z})^{\Delta_{\phi}}}  \qquad (0
		\leqslant z \leqslant z_0 \leqslant \bar{z} < 1) \label{ghighbound} .
	\end{equation}
\end{proposition}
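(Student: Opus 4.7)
The plan is to chain together the two lemmas just established. The two ingredients are tailor-made: Lemma \ref{lemma:CBineq} lets us trade $z$-dependence for a factor $(z/z_0)^{\tau/2}$ while freezing the first argument at $z_0$, and Lemma \ref{lemma:gbound1} controls the full 4pt function at $(z_0,\bar z)$ as $\bar z\to 1$. The only point to notice is that the high-twist cutoff $\tau\geqslant\tau_\ast$ combined with $z\leqslant z_0$ allows us to replace $(z/z_0)^{\tau/2}$ by its worst case $(z/z_0)^{\tau_\ast/2}$ uniformly in $(\tau,\ell)$, after which the surviving sum of blocks is just bounded by the full 4pt function.

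Concretely, I would first apply Lemma \ref{lemma:CBineq} term by term. Since $0\leqslant z/z_0\leqslant 1$ and $\tau\geqslant\tau_\ast$, we have $(z/z_0)^{\tau/2}\leqslant (z/z_0)^{\tau_\ast/2}$, and therefore for each $(\tau,\ell)$ in the high-twist tail
\begin{equation}
g_{\tau,\ell}(z,\bar z)\;\leqslant\;(z/z_0)^{\tau_\ast/2}\,g_{\tau,\ell}(z_0,\bar z).
\end{equation}
Multiplying by $p_{\tau,\ell}\geqslant 0$ and summing over $\tau\geqslant\tau_\ast$ and all $\ell$ gives
\begin{equation}
g_{\tau\geqslant\tau_\ast}(z,\bar z)\;\leqslant\;(z/z_0)^{\tau_\ast/2}\sum_{\tau\geqslant\tau_\ast}\sum_\ell p_{\tau,\ell}\,g_{\tau,\ell}(z_0,\bar z)\;\leqslant\;(z/z_0)^{\tau_\ast/2}\,g(z_0,\bar z),
\end{equation}
where the last inequality uses positivity of all contributions in \eqref{CBexp}, so that the tail sum is bounded by the full 4pt function at $(z_0,\bar z)$.

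Finally I would invoke Lemma \ref{lemma:gbound1} to estimate $g(z_0,\bar z)\leqslant B/(1-\bar z)^{\Delta_\phi}$, obtaining the claim with an explicit constant $C=B/z_0^{\tau_\ast/2}$. No step looks like an obstacle: positivity of the $p_{\tau,\ell}$ and of the coefficients $a_{n,m}$ in \eqref{CBindexp} is what makes the high-twist tail so cleanly dominated. The mild asymmetry between $z$ and $\bar z$ in the statement precisely reflects the asymmetric roles of the two lemmas; the argument would fail without the restriction $z\leqslant z_0\leqslant\bar z$, because Lemma \ref{lemma:CBineq} needs $z\leqslant z_0$ while Lemma \ref{lemma:gbound1} is only a non-trivial bound as $\bar z\to 1$.
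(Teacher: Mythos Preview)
Your proof is correct and follows exactly the same approach as the paper: apply Lemma \ref{lemma:CBineq} termwise, replace $(z/z_0)^{\tau/2}$ by $(z/z_0)^{\tau_\ast/2}$ using $\tau\geqslant\tau_\ast$, bound the tail sum by the full 4pt function, and finish with Lemma \ref{lemma:gbound1}, arriving at the same explicit constant $C=B/z_0^{\tau_\ast/2}$.
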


\begin{proof}
	We have the following chain of estimates (see explanations below):
	\begin{eqnarray}
		g_{\tau \geqslant \tau_{\ast}} (z, \bar{z}) & \leqslant & (
		{z}/{z_0} )^{\tau_{\ast} / 2}\sum_{\tau \geqslant
			\tau_{\ast}} \sum_\ell p_{\tau, \ell} g_{\tau, \ell} (z_0, \bar{z}) 
		\label{line1}\\
		& \leqslant & ( {z}/{z_0} )^{\tau_{\ast} / 2} g (z_0,
		\bar{z}) \leqslant ( {z}/{z_0} )^{\tau_{\ast} / 2} \frac{B}{(1
			- \bar{z})^{\Delta_{\phi}}} . 
	\end{eqnarray}
In \eqref{line1}, we applied Lemma \ref{lemma:CBineq} to every term in
	{\eqref{ghigh}}, and used $(z / z_0)^{\tau / 2} \leqslant (z /
	z_0)^{\tau_{\ast} / 2}$. Then, by the direct channel expansion, we 
	bounded the sum in the r.h.s.~of {\eqref{line1}} by the whole of $g (z_0,
	\bar{z})$. Finally, we used Lemma \ref{lemma:gbound1}. Eq.\,{\eqref{CFT:LHS}}
	now follows with $C = B / z_0^{\tau_{\ast} / 2}$.
\end{proof}

Applying this proposition with $\tau_{\ast} = \tau_{\rm gap}$, we get the
promised Eq.\,{\eqref{CFT:LHS}}.

We thus resolved the first caveat of the intuitive argument from Section
\ref{intuitive}. The main idea was to relate the constant of the $z^{\tau}$
growth of $g_{\tau, \ell}$ to the value of the same conformal block at $z =
z_0$ (Lemma \ref{lemma:CBineq}). These constants can then be resummed into the
value of the 4pt function itself, which we know how to bound from crossing
(Lemma \ref{lemma:gbound1}).

\begin{remark}
	A weaker version of Eq.\,{\eqref{CFT:LHS}}, with $\frac{z^{\tau_{\rm gap} /
			2}}{(1 - \bar{z})^{\Delta_{\phi}}}$ replaced by $\frac{z^{\tau_{\rm gap} /
			2}}{(1 - \bar{z})^{2 \Delta_{\phi}}}$, was previously shown in
	{\cite{Qiao:2017xif}}, App. F. It would be interesting to know if Eq.
	{\eqref{CFT:LHS}} is optimal or if it can be further improved. 
\end{remark}

\begin{proof}[Proof of Theorem \ref{theorem:DLCtwist}(a)] 
	Consider the limit $z \rightarrow 0$,
	$\bar{z} \rightarrow 1$, $\frac{z^{\tau_{\rm gap} / 2}}{(1 -
		\bar{z})^{\Delta_{\phi}}} = \text{const}$. Eq.\,{\eqref{CFT:LHS}} shows
	that the 4pt function is bounded in this limit. On the other hand, from the crossed channel the 4pt function is at least as large as the
	unit operator contribution, which is given by
	\begin{equation}
		\left( \frac{z \bar{z}}{(1 - z) (1 - \bar{z})} \right)^{\Delta_{\phi}} \times 1
		\equiv \left( \frac{\bar{z}}{1 - z} \right)^{\Delta_{\phi}}
		\frac{z^{\tau_{\rm gap} / 2}}{(1 - \bar{z})^{\Delta_{\phi}}}\;
		z^{\Delta_{\phi} - \tau_{\rm gap} / 2} .
	\end{equation}
	If $\tau_{\rm gap} > 2 \Delta_{\phi}$, this blows up in the considered
	limit, leading to a contradiction.
\end{proof}

\subsubsection{Crossed channel: individual blocks}\label{lemma}

We will start with a two-sided bound on conformal blocks, 
a key ingredient for the analysis of crossed channel contributions at low twists. We
normalize conformal blocks by
\begin{equation}
	\underset{x \rightarrow 0^+}{\lim} x^{- \Delta} g_{\tau, \ell} (x, x) = 1.
	\label{CB:normalization}
\end{equation}
Define
\begin{equation}
	k_{\beta} (z) = z^{\frac{\beta}{2}} {}_2 F_1 \left( \tfrac{\beta}{2},
	\tfrac{\beta}{2} ; \beta ; z \right)
\end{equation}
and\footnote{\label{nuratio}For $d = 2$ we have $\nu = 0$ and the coefficient
	$\frac{(\nu)_{\ell}}{(2 \nu)_{\ell}}$ is defined as $\underset{\nu \rightarrow
		0^+}{\lim} \frac{(\nu)_{\ell}}{(2 \nu)_{\ell}} = \frac{1 + \delta_{\ell,
			0}}{2}$}
\begin{equation}
	F_{\tau, \ell} (z) \equiv \frac{(\nu)_{\ell}}{(2 \nu)_{\ell}} 4^{\tau / 2}
	k_{\tau + 2 \ell} (z) . \label{def:Ftaul}
\end{equation}
We also denote as usual $\bar{\rho} = \bar{z} /( 1 + \sqrt{1 - \bar{z}})^2$ (and analogously $\rho$ in terms of $z$).

\begin{lemma}[Approximate factorization]
	\label{AFLz}Let $a, b \in (0, 1)$, $a < b$. Let $d\geqslant 2$ and assume that $\Delta, \ell$ satisfy the
	unitarity bounds. If $\ell = 0$, assume in addition that $\Delta > \nu =
	\frac{d - 2}{2} .$ Then the conformal block $g_{\tau, \ell} (z, \bar{z})$
	satisfies the following two-sided bound
	\begin{equation}
		1 \leqslant \frac{g_{\tau, \ell} (z, \bar{z})}{\bar{\rho}^{\tau / 2}
			F_{\tau, \ell} (z)} \leqslant K_d (a, b) \left[ 1 + \frac{\theta (d
			\geqslant 3) \delta_{\ell 0} }{\Delta - \nu} \right] 
		\qquad (\forall\ 0 \leqslant \bar{z} \leqslant a <
		b \leqslant z < 1), \label{CB:sandwich}
	\end{equation}
	where $K_d (a, b)$ is a finite constant independent of $\tau, \ell$ and of
	$z, \bar{z}$ in the shown range. 
\end{lemma}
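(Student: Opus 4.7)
The plan is to derive both inequalities from a manifestly positive expansion of the conformal block in $\bar\rho$ at fixed $\rho$. Using the Hogervorst--Rychkov radial-coordinate analysis of unitary blocks (and, where needed, the positivity of the coefficients $a_{n,m}$ in \eqref{CBindexp}), the block can be rewritten as a series
\begin{equation*}
g_{\tau,\ell}(z,\bar z)=\bar\rho^{\,\tau/2}\sum_{n\ge 0}\bar\rho^{\,n}\,\Psi_n(\rho;\tau,\ell,d),\qquad \Psi_n\ge 0,
\end{equation*}
whose leading coefficient is $\Psi_0=F_{\tau,\ell}(z)$. This identification is the standard collinear limit $\bar z\to 0$: the block reduces there to the $SL(2)$ block $k_{\tau+2\ell}(z)$ times the transverse-integration factor $\frac{(\nu)_\ell}{(2\nu)_\ell}$, and the $4^{\tau/2}$ in \eqref{def:Ftaul} just implements the change of variable $\bar z\leftrightarrow\bar\rho$. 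The lower bound in \eqref{CB:sandwich} is then immediate: dropping the non-negative $n\ge 1$ tail yields $g_{\tau,\ell}(z,\bar z)\ge\bar\rho^{\,\tau/2}F_{\tau,\ell}(z)$.

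For the upper bound I would reuse the same resummation-via-a-fixed-reference-point strategy as in Lemma \ref{lemma:CBineq} and Remark \ref{leitmotif}. The anchor point is $(z,\bar z)=(b,a)$, so that $\bar\rho\le\bar\rho_a$ and $\rho\ge\rho_b$. Term-by-term one writes $\bar\rho^n\le(\bar\rho/\bar\rho_a)^n\bar\rho_a^n$, and uses monotonicity of $\Psi_n$ as a power series in $\rho$ (with non-negative coefficients) to compare $\Psi_n(\rho;\tau,\ell,d)$ with $\Psi_n(\rho_b;\tau,\ell,d)$, so that the tail is bounded by a geometric series in $\bar\rho/\bar\rho_a<1$ times the value of the full series at the anchor. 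The latter equals $g_{\tau,\ell}(b,a)/\bar\rho_a^{\,\tau/2}$, which is finite and, after dividing by $F_{\tau,\ell}(b)\le F_{\tau,\ell}(z)$, yields the sought constant $K_d(a,b)$. The scalar correction $\theta(d\ge 3)\delta_{\ell 0}/(\Delta-\nu)$ originates from a simple pole at $\Delta=\nu$ in one of the normalization factors in the Hogervorst--Rychkov recursion, a well-documented artifact of the scalar unitarity bound in $d\ge 3$.

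The main obstacle I foresee is making the upper bound uniform in $(\tau,\ell)$. Both the tail coefficients $\Psi_n$ and the leading piece $F_{\tau,\ell}(z)$ depend non-trivially on $\tau,\ell$, and $F_{\tau,\ell}(z)$ in particular has increasingly sharp behavior near $z=1$ as $\ell\to\infty$. The gap $a<b$ in the hypothesis is precisely what ensures $\bar\rho/\rho$ stays bounded below $1$ by a constant that depends only on $a,b$, enabling a geometric bound on the tail with a rate independent of $\tau,\ell$. Carefully reconciling the $\ell$-dependence of the monotonicity factors in $\rho$ with the known asymptotics of $k_{\tau+2\ell}(z)$ as $\ell\to\infty$ and $z\to 1$ is where I expect the bulk of the technical work to concentrate.
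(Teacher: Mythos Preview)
Your lower-bound argument is correct and is precisely the one the paper itself notes in Remark \ref{AFLrem}(d): dropping the $n\ge 1$ tail of the positive $\bar\rho$-expansion gives $g_{\tau,\ell}\ge \bar\rho^{\tau/2}\Psi_0=\bar\rho^{\tau/2}F_{\tau,\ell}(z)$.

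The upper bound, however, has a genuine gap, and the paper explicitly flags it in the same remark: ``This approach by itself does not give the needed upper bound.'' The problem is that your anchor-point comparison runs the wrong way in the $z$-variable. Since $z\ge b$ we have $\rho\ge\rho_b$, so monotonicity of $\Psi_n$ in $\rho$ gives $\Psi_n(\rho)\ge\Psi_n(\rho_b)$, not $\le$. Thus you cannot bound $\sum_n\bar\rho^n\Psi_n(\rho)$ by the anchor value $g_{\tau,\ell}(b,a)/\bar\rho_a^{\tau/2}=\sum_n\bar\rho_a^n\Psi_n(\rho_b)$: the $\bar\rho$-factor helps but the $\Psi_n$-factor hurts, and there is no a priori control on $\Psi_n(\rho)/\Psi_n(\rho_b)$ uniform in $n,\tau,\ell$ as $\rho\to 1$. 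Equivalently, what you really need is a bound on $\Psi_n(\rho)/\Psi_0(\rho)$ uniform in $\rho\in[\rho_b,1)$ and in $\tau,\ell$, and nothing in your outline supplies it. Your final paragraph correctly identifies exactly this as the ``main obstacle'', but the mechanism you propose (the geometric rate $\bar\rho/\bar\rho_a$) does not touch it.

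The paper's route is entirely different: it proves an equivalent statement (Lemma \ref{lemma:AFL}) by induction on the spacetime dimension, using Hogervorst's dimensional reduction formula to write $g^{(d)}_{\Delta,\ell}$ as a positive sum of $g^{(d-1)}$-blocks. The $d=2$ base case is handled via the explicit hypergeometric form, and the induction step requires detailed estimates on the reduction coefficients $\mathcal{A}^{(d)}_{n,j}(\Delta,\ell)$ (this is where the $1/(\Delta-\nu)$ pole for scalars in $d\ge 3$ emerges, from the factor $(\Delta-\nu)_n$ in the denominator of $\mathcal{A}^{(d)}_{n,0}$). The uniformity in $\tau,\ell$ that you could not obtain from the anchor-point idea is won here from explicit Pochhammer-ratio bounds combined with Lemmas \ref{Gprop} and \ref{lemma:Gratiobound} controlling ratios of $G_\beta(\rho)$.
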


\begin{remark}
	\label{AFLrem}The proof of this lemma is a bit technical and is postponed to
	Appendix \ref{appendix:proofAFL}. Here we collect several remarks:
	\begin{itemize}
	\item[(a)] The lemma states that in the shown range of $z, \bar{z}$ the blocks are
	given, up to constants, by the factorized expression $\bar{\rho}^{\tau / 2}
	F_{\tau, \ell} (z)$. Excluding the case $\ell = 0$, $\Delta \rightarrow
	\nu$, the constants are independent of $\tau, \ell$. This fact will play a
	crucial role in the analysis of the low-twist contribution in Section
	\ref{crossed-low}.
	
	\item[(b)] For $\ell = 0$ we see the $\frac{\theta (d \geqslant 3)}{\Delta - \nu}$
	pole in the r.h.s.~of the bound. This is as expected since the scalar blocks
	are known to have such a singular behavior near the unitarity bound for $d
	\geqslant 3$. (For $d = 2$ the scalar block at $\Delta = \nu = 0$ coincides
	with the unit operator block.)
	
	Being singular, the scalar block with $\Delta = \nu$ ($d \geqslant 3$)
	cannot appear in the conformal block decomposition of the 4pt function of
	four identical scalar quasiprimaries that we are studying.
	
	The previous observation can also be understood by realizing that $\ell =
	0$, $\Delta = \nu$ corresponds to the exchanged scalar being free scalar
	field, call it $\chi$. In the free scalar theory, $\chi$ only appears in the OPE of
	nonidentical quasiprimaries such as $\chi^n \times \chi^{n - 1}$.
	
	\item[(c)] Using
	\begin{equation}
		\bar{z} / 4 \leqslant \bar{\rho} \leqslant \bar{z} \qquad (0 \leqslant
		\bar{z} < 1),
	\end{equation}
	the lemma implies a two-sided bound in terms of $\bar{z}^{\tau / 2} F_{\tau,
		\ell} (z)$. However in this case the lower-bound constant becomes
	$\tau$-dependent. Assuming that twist $\tau$ is bounded from above, $\tau
	\leqslant \tau_{\max}$, we can choose a $\tau$-independent lower constant
	$4^{- \tau_{\max} / 2}$. Below, Lemma \ref{AFLz} will be used to analyze
	contributions of operators at low twists, so the twist will indeed be
	bounded from above.
	
	\item[(d)] Note that, analogously to {\eqref{CBindexp}}, the blocks allow a
	positive expansion in $\rho, \bar{\rho}$:
	\begin{equation}
		g_{\tau, \ell} (z, \bar{z}) = (\rho \bar{\rho})^{\tau / 2} \sum_{n, m \in
			\mathbb{Z}_{\geqslant 0}} \tilde{a}_{n, m} \rho^n \bar{\rho}^m, \qquad
		\tilde{a}_{n, m} \geqslant 0.
	\end{equation}
	It can be shown that the sum of terms having $m = 0$ is precisely
	$\bar{\rho}^{\tau / 2} F_{\tau, \ell} (z)$. Thus
	\begin{equation}
		\bar{\rho}^{\tau / 2} F_{\tau, \ell} (z) \leqslant g_{\tau, \ell} (z,
		\bar{z}) \leqslant \bar{\rho}^{\tau / 2} F_{\tau, \ell} (z) +(\text{$m>0$ terms})\,.
	\end{equation}
	This is a simple way to prove the lower bound in {\eqref{CB:sandwich}}. This
	approach by itself does not give the needed upper bound. Indeed while the $m>0$ terms
    are $O (\bar{\rho}^{\tau / 2 + 1})$ for a fixed $z$, it is not immediately obvious how they depend on $z$, $\tau$, $\ell$.
	
	Our proof will follow a different strategy. We will first consider $d = 2$,
	using the explicit expressions of conformal blocks. We will then perform
	induction in $d$ using the dimensional reduction formula of Hogervorst
	{\cite{Hogervorst:2016hal}}.
	
	\item[(e)] Lemma \ref{AFLz} implies the well-known result that individual conformal
	blocks (except the unit block) grow logarithmically as $z \rightarrow 1$ for
	a fixed $\bar{z}$.\footnote{Surprisingly, we do not know any other general rigorous
		proof of this result.} Indeed we have
	\begin{equation}
		F_{\tau, \ell} (z) \sim \frac{(\nu)_{\ell}}{(2 \nu)_{\ell}} 4^{\tau / 2}
		C_{\tau + 2 \ell} \log \frac{1}{1 - z} \qquad (z \rightarrow 1), \qquad
		C_{\beta} = \frac{\Gamma (1 / 2 + \beta / 2) 2^{\beta - 1}}{\sqrt{\pi}
			\Gamma (\beta / 2)} .
	\end{equation}
\end{itemize}
\end{remark}

\begin{lemma}
	\label{ind}Each individual conformal block contribution in the crossed
	channel vanishes in the DLC$_{\ast}$ limit.
\end{lemma}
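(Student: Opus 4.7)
The plan is to control the crossed-channel block $g_{\tau,\ell}(1-z,1-\bar z)$ via Lemma \ref{AFLz} and then verify by elementary estimates that, for each fixed $(\tau,\ell)$, the resulting expression vanishes in the DLC$_*$ limit.

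First, I would apply Lemma \ref{AFLz} to $g_{\tau,\ell}(1-z,1-\bar z)$, treating $1-\bar z$ as the ``small'' argument and $1-z$ as the argument close to $1$. Since in DLC$_*$ we have $1-\bar z\to 0$ and $1-z\to 1$, the hypothesis $0\leqslant 1-\bar z\leqslant a<b\leqslant 1-z<1$ is eventually satisfied for any fixed $0<a<b<1$. (The excluded case $\ell=0,\ \Delta=\nu$ does not occur in the $\phi\times\phi$ OPE by Remark \ref{AFLrem}(b).) Since $(\tau,\ell)$ is fixed, this gives a bound
\begin{equation*}
g_{\tau,\ell}(1-z,1-\bar z)\;\leqslant\; C_{\tau,\ell}\,(1-\bar z)^{\tau/2}\,F_{\tau,\ell}(1-z),
\end{equation*}
after replacing $\bar\rho$ by $(1-\bar z)$ at the cost of a constant. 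By Remark \ref{AFLrem}(e), $F_{\tau,\ell}(1-z)=O(\log(1/z))$ as $z\to 0$.

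Multiplying by the crossing prefactor $\bigl(z\bar z/((1-z)(1-\bar z))\bigr)^{\Delta_\phi}$ and using that $\bar z$ and $1/(1-z)$ stay bounded in the regime \eqref{CFT:regime}, the crossed-channel contribution of the block is bounded by a constant multiple of
\begin{equation*}
z^{\Delta_\phi}\,(1-\bar z)^{\tau/2-\Delta_\phi}\,\log(1/z).
\end{equation*}
If $\tau\geqslant 2\Delta_\phi$ this vanishes trivially, as $(1-\bar z)^{\tau/2-\Delta_\phi}$ stays bounded and $z^{\Delta_\phi}\log(1/z)\to 0$. In the remaining case $\tau_{\rm gap}\leqslant \tau<2\Delta_\phi$, I would pick any $\beta\in\bigl(1-\tau/(2\Delta_\phi),\,1\bigr)$ and use the algebraic identity
\begin{equation*}
z^{\Delta_\phi}(1-\bar z)^{\tau/2-\Delta_\phi}\;=\;z^{\Delta_\phi-\beta\tau_{\rm gap}/2}\,(1-\bar z)^{\tau/2-(1-\beta)\Delta_\phi}\left[\frac{z^{\tau_{\rm gap}/2}}{(1-\bar z)^{\Delta_\phi}}\right]^{\beta}.
\end{equation*}
Using $\tau_{\rm gap}\leqslant 2\Delta_\phi$ (Theorem \ref{theorem:DLCtwist}(a)) together with the choice of $\beta$, both displayed exponents are strictly positive; the bracketed factor is bounded (in fact tends to $0$ by DLC$_*$), and the positive power of $z$ defeats the logarithm.

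The only real subtlety is the low-twist case $\tau<2\Delta_\phi$: it is exactly here that the third condition in the definition \eqref{DLC*} of DLC$_*$ is indispensable, since without it $(1-\bar z)^{\tau/2-\Delta_\phi}$ could blow up faster than $z^{\Delta_\phi}$ decays and the claim would fail in the plain DLC limit.
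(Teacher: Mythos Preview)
Your proposal is correct and follows essentially the same route as the paper: bound the crossed-channel block via Lemma \ref{AFLz} to obtain a $(1-\bar z)^{\tau/2}\log(1/z)$ estimate, then combine the DLC$_*$ condition with part (a) of the theorem to show the result vanishes. The paper's algebra is marginally more direct---it immediately replaces $(1-\bar z)^{\tau/2}$ by the weaker $(1-\bar z)^{\tau_{\rm gap}/2}$ and uses the single exponent $\kappa=1-\tau_{\rm gap}/(2\Delta_\phi)$, avoiding your case split and the auxiliary parameter $\beta$---but the substance is identical. One small omission: the lemma is meant to cover the unit operator block as well, which the paper dispatches first and separately; Lemma \ref{AFLz} does not apply to it, but that case is immediate from $(z/(1-\bar z))^{\Delta_\phi}\leqslant z^{\tau_{\rm gap}/2}/(1-\bar z)^{\Delta_\phi}$ together with the DLC$_*$ definition.
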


\begin{proof}
	First consider the unit operator block, for which we have, keeping just potentially singular factors,
	and using $\tau_{\rm gap} \leqslant 2 \Delta_{\phi}$ (as we know from Part
	(a)):
	\begin{equation}
		\left( \frac{z}{1 - \bar{z}} \right)^{\Delta_{\phi}} \leqslant
		\frac{z^{\tau_{\rm gap} / 2}}{(1 - \bar{z})^{\Delta_{\phi}}},
	\end{equation}
	which vanishes in the DLC$_{\ast}$ limit.
	
	Next consider any non-unit block $g_{\tau, \ell}$, $\tau \geqslant
	\tau_{\rm gap} > 0$. By Remark \ref{AFLrem}(c,e), Lemma \ref{AFLz}
	implies a logarithmic bound:
	\begin{eqnarray}
		g_{\tau, \ell} (1 - z, 1 - \bar{z}) & \leqslant & C (1 - \bar{z})^{\tau /
			2} \log \frac{1}{z}  \label{log-bound1}\leqslant C (1 - \bar{z})^{\tau_{\rm gap} / 2} \log \frac{1}{z} . 
	\end{eqnarray}
	Thus, the crossed channel contribution of $g_{\tau, \ell}$ is bounded by a
	constant times
	\begin{equation}
		\left( \frac{z}{1 - \bar{z}} \right)^{\Delta_{\phi}} (1 -
		\bar{z})^{\tau_{\rm gap} / 2} \log \frac{1}{z} . \label{discrbound}
	\end{equation}
	Denoting $\kappa = 1 - \frac{\tau_{\rm gap}}{2 \Delta_{\phi}}$, this can be
	rewritten as
	\begin{equation}
		\left[ \frac{z^{\tau_{\rm gap} / 2}}{(1 - \bar{z})^{\Delta_{\phi}}}
		\right]^{\kappa} \times z^{\Delta_{\phi} - \tau_{\rm gap} \kappa / 2} \log
		\frac{1}{z} .
	\end{equation}
Since $\kappa \geqslant 0$ (part (a) of the theorem), the first
	factor is bounded in the DLC$_{\ast}$ limit. At the same time the second
	factor goes to zero, since a positive power of $z$ kills the log.
\end{proof}

\subsubsection{Crossed channel: high twists}\label{crossed-high}

In the previous subsection we studied the contribution of individual conformal
blocks in the crossed channel. In this subsection we will study the
\textit{total} crossed channel contribution of operators with twists $\tau
\geqslant 2 \Delta_{\phi} + \varepsilon$ to the r.h.s.~of the crossing
equation, and we will show that it vanishes in the $\text{DLC}_{\ast}$ limit.
The argument for this is rather simple and robust, and does not need
considerations of the previous subsection.

We need to estimate $g_{\tau \geqslant \tau_{\ast}} (1 - z, 1 - \bar{z})$ for
$\tau_{\ast} = 2 \Delta_{\phi} + \varepsilon$. The needed estimate follows
from Proposition \ref{lemmahigh}, which we need to apply exchanging $z$ with
$\bar{z}$:\footnote{The proposition remains
	true in this formulation, as it is clear from its proof.}
\begin{equation}
	g_{\tau \geqslant \tau_{\ast}} (z, \bar{z}) \leqslant C
	\frac{\bar{z}^{\tau_{\ast} / 2}}{(1 - z)^{\Delta_{\phi}}}
	\qquad (0 < \bar{z} \leqslant z_0 \leqslant z < 1) .
\end{equation}
Replacing $z \rightarrow 1 - z$, $\bar{z} \rightarrow 1 - \bar{z}$, $z_0
\rightarrow 1 - z_0$, we obtain:
\begin{equation}
	g_{\tau \geqslant \tau_{\ast}} (1 - z, 1 - \bar{z}) \leqslant C \frac{(1 -
		\bar{z})^{\tau_{\ast} / 2}}{z^{\Delta_{\phi}}} \qquad (0 < z \leqslant z_0
	\leqslant \bar{z} < 1) . \label{high-twist-bound}
\end{equation}
This bound has a powerlaw growth as $z \rightarrow 0$ compared to the bound
{\eqref{log-bound1}} on the individual blocks which only grows
logarithmically. The point is that it applies to the sum of all high-twist
blocks. Thanks to the $(1 - \bar{z})^{\tau_{\ast} / 2}$ factor, we will still
be able to show that the contribution to the crossing equation vanishes.

Plugging bound {\eqref{high-twist-bound}} into the r.h.s.~of the crossing
equation, the singular factors $z^{\Delta_{\phi}}$ cancel and we get:
\begin{equation}
	\left( \frac{z \bar{z}}{(1 - z) (1 - \bar{z})} \right)^{\Delta_{\phi}}
	g_{\tau \geqslant \tau_{\ast}} (1 - z, 1 - \bar{z}) \leqslant \text{const} .
	(1 - \bar{z})^{\tau_{\ast} / 2 - \Delta_{\phi}} = \text{const} . (1 -
	\bar{z})^{\varepsilon / 2}, \label{end-high}
\end{equation}
where we used $\tau_{\ast} = 2 \Delta_{\phi} + \varepsilon$.

As promised, we see that the crossed channel contribution of twists $\tau
\geqslant 2 \Delta_{\phi} + \varepsilon$ vanishes in the
$\text{DLC}_{\ast}$ limit. (We only need
$\bar{z} \rightarrow 1$ but not the full strength of $\text{DLC}_{\ast}$.)

\begin{remark}
In $d \geqslant
3$ the conformal blocks are symmetric w.r.t.~$z \leftrightarrow \bar{z}$. In $d = 2$ the transformation $z \leftrightarrow \bar{z}$ is a parity
	transformation. If a 2D CFT does not preserve parity, the blocks are not symmetric. They are characterized by a pair $(h,
	\bar{h})$ with $h + \bar{h} = \Delta$ and $h - \bar{h}$ an integer. Without the parity assumption, the 2D CFT
	spectrum is not necessarily symmetric w.r.t. $h \leftrightarrow \bar{h}$. 
	
	The above proof works in 2D independently of the parity assumption. This is because the $(h,
	\bar{h})$ and $(\bar{h},h)$ parts of parity-invariant 2D conformal blocks separately satisfy the assumptions which went into the proof of Proposition \ref{lemmahigh}.
	
    In the 2D case without parity, the theorem can be strengthened:
	one can show that there is an infinite series of blocks with $h \rightarrow
	\Delta_{\phi}$ and $\bar{h} \rightarrow \infty$, and vice versa. To keep
	the exposition simple we will not consider this more general 2D case, and we
	will continue assuming that the conformal blocks are symmetric. However in
	Section \ref{LMB}, when we talk about the modular bootstrap, we consider the
	general case.
\end{remark}

\subsubsection{Crossed channel: low twists}\label{crossed-low}

We saw in the previous subsection that the high-twist contribution is
suppressed at small $1 - \bar{z}$ simply thanks to the $(1 - \bar{z})^{\tau /
	2}$ factor which wins over the $(1 - \bar{z})^{\Delta_{\phi}}$ denominator
in the crossing relation, see Eq.\,{\eqref{end-high}}. For low twists, the argument will be more subtle. We will have to use that the low twist
conformal blocks have ``wrong scaling with $(1 - \bar{z})$'' to cancel the $(1
- \bar{z})^{\Delta_{\phi}}$ denominator (recall the naive argument). This
will rely on Lemma \ref{AFLz}.

Let us pick $\tau_{\ast} = 2 \Delta_{\phi} - \varepsilon$, $\varepsilon >
0$. Define $g_{\tau \leqslant \tau_{\ast}} (z, \bar{z})$ by
\begin{equation}
	g_{\tau \leqslant \tau_{\ast}} (z, \bar{z}) = \sum_{\tau \leqslant
		\tau_{\ast}}\sum_{\ell} p_{\tau, \ell}\, g_{\tau, \ell} (z, \bar{z})\, . \label{glow}
\end{equation}
This is analogous to Eq.\,{\eqref{ghigh}} for high twists. Using Lemma
\ref{AFLz}, we can show that every block in $g_{\tau \leqslant \tau_{\ast}}$
satisfies a two-sided bound:
\begin{equation}
	C_1 \leqslant \frac{g_{\tau, \ell} (z, \bar{z})}{ \bar{z}^{\tau / 2}
		F_{\tau, \ell} (z)} \leqslant C_2 \qquad (0 < \bar{z} \leqslant a < b
	\leqslant z < 1), \label{2sidebnd}
\end{equation}
where $C_1, C_2$ are independent of $\tau$ and $\ell$. [The precise values of
$a, b$ won't matter, we could have chosen e.g. $a = 1 / 3$, $b = 2 / 3$. Below $a$ and $b$ are kept fixed.]
Indeed:
\begin{itemize}
	\item The unit block obviously satisfies such a bound.
	
	\item By Remark \ref{AFLrem}(e), all blocks satisfy a uniform lower bound of
	this form, since we are restricting $\tau \leqslant \tau_{\ast}$.
	
	\item By Lemma \ref{AFLz}, blocks of spin $\ell \geqslant 1$ satisfy an
	upper bound uniform in $\tau, \ell$. By the unitarity bound $\Delta
	\geqslant \ell + 2 \nu$ these blocks stay away from the $\frac{1}{\Delta -
		\nu}$ pole in the r.h.s.~of {\eqref{CB:sandwich}}.
	
	\item For the upper bound on the scalar blocks, we argue as follows. The $d
	= 2$ case is clear since the $\frac{1}{\Delta - \nu}$ pole is absent. For $d
	= 3$ we still get a uniform bound with $\frac{1}{\Delta - \nu} \rightarrow
	\frac{1}{\Delta_{\min}^0 - \nu}$,  where $\Delta_{\min}^0$ is the lowest
	scalar quasiprimary dimension in the conformal block decomposition of $\langle
	\phi \phi \phi \phi \rangle$. We are using here the assumption
	that the spectrum is discrete. Note that $\Delta_{\min}^0 > \nu$, since the
	$\Delta = \nu$ scalar block does not appear for $d \geqslant 3$, see Remark
	\ref{AFLrem}(b).
\end{itemize}
Using Eq.\,{\eqref{2sidebnd}}, we have the following bound for every block in
$g_{\tau \leqslant \tau_{\ast}}$:
\begin{equation}
	g_{\tau, \ell} (z, \bar{z}) \leqslant C \left(
	\frac{\bar{z}}{\bar{z}'} \right)^{\tau / 2} g_{\tau, \ell} (z,
	\bar{z}') \qquad \forall\ 0 < \bar{z}, \bar{z}' \leqslant a < b
	\leqslant z < 1,
\end{equation}
where $C = C_2 / C_1$ independent of $\tau$ and $\ell$. Let us assume
further that $\bar{z} / \bar{z}' \geqslant 1$. The precise value of $\bar{z}'$
will be chosen below. Then, using $\tau \leqslant \tau_{\ast}$, we can
continue the estimate as
\begin{equation}
	g_{\tau, \ell} (z, \bar{z}) \leqslant C \left(
	\frac{\bar{z}}{\bar{z}'} \right)^{\tau_{\ast} / 2} g_{\tau, \ell} (z,
	\bar{z}') \qquad \forall\ 0 < \bar{z}' \leqslant \bar{z} \leqslant a
	< b \leqslant z < 1 \quad (\tau \leqslant \tau_{\ast}) . \label{nontriv}
\end{equation}
Summing this equation over all blocks, we get
\begin{equation}
	\begin{split}
		g_{\tau \leqslant \tau_{\ast}} (z, \bar{z}) & \leqslant C \left(
		\frac{\bar{z}}{\bar{z}'} \right)^{\tau_{\ast} / 2} g_{\tau \leqslant
			\tau_{\ast}} (z, \bar{z}')
		\leqslant C \left( \frac{\bar{z}}{\bar{z}'} \right)^{\tau_{\ast} /
			2} g (z, \bar{z}'),  \\
	\end{split}
\end{equation}
where we also used that $g_{\tau \leqslant \tau_{\ast}}$ is bounded by the
full 4pt function. Moving to the crossed channel variables $z \rightarrow 1 -
z$, $\bar{z} \rightarrow 1 - \bar{z}$, we obtain:
\begin{equation}
	\begin{split}
		g_{\tau \leqslant \tau_{\ast}} (1 - z, 1 - \bar{z}) \leqslant\,& C\left(
		\frac{1 - \bar{z}}{1 - \bar{z}'} \right)^{\tau_{\ast} / 2} g(1 - z, 1 -
		\bar{z}') \hspace{3em} \\
		\forall\ 0 < z \leqslant 1 - b <& 1 - a
		\leqslant \bar{z} \leqslant \bar{z}' < 1 . \\
	\end{split}
\end{equation}
Plugging this into the r.h.s.~of the crossing equation, we have, for the same $z,\bar{z},\bar{z}'$,
	\begin{align}
		\left( \frac{z \bar{z}}{(1 - z) (1 - \bar{z})} \right)^{\Delta_{\phi}}
		g_{\tau \leqslant \tau_{\ast}} (1 - z, 1 - \bar{z}) & \leqslant C
		\left( \frac{1 - \bar{z}}{1 - \bar{z}'} \right)^{\tau_{\ast} / 2} \left(
		\frac{z \bar{z}}{(1 - z) (1 - \bar{z})} \right)^{\Delta_{\phi}} g (1 - z,
		1 - \bar{z}') \nonumber\\
		& = C \left( \frac{1 - \bar{z}}{1 - \bar{z}'} \right)^{\tau_{\ast} /
			2} \left( \frac{\bar{z}}{\bar{z}'} \right)^{\Delta_{\phi}} \left( \frac{1 -
			\bar{z}'}{1 - \bar{z}} \right)^{\Delta_{\phi}} g (z, \bar{z}') \nonumber\\
		& \leqslant C \left( \frac{1 - \bar{z}'}{1 - \bar{z}}
		\right)^{\varepsilon / 2} g (z, \bar{z}'), \label{lowtwistend}
	\end{align}
using the crossing equation for $g (z, \bar{z}')$ in the
second line, and $\left( \frac{\bar{z}}{\bar{z}'} \right)^{\Delta_{\phi}}
\leqslant 1$ in the third line.

Now we choose
\begin{equation}
	\bar{z}' = 1 - z^{\tau_{\rm gap} / (2 \Delta_{\phi})} .
\end{equation}
By Eq.\,{\eqref{CFT:LHS}} we know that $g (z, \bar{z}')$ is bounded as $z
\rightarrow 0$. On the other hand
\begin{equation}
	\frac{1 - \bar{z}'}{1 - \bar{z}} = \frac{z^{\tau_{\rm gap} / (2
			\Delta_{\phi})}}{1 - \bar{z}} \rightarrow 0 \qquad \text{in the DLC$_{\ast}$ limit} .
\end{equation}
Thus, the r.h.s.~of \eqref{lowtwistend} goes to zero, and we conclude that the crossed-channel contribution of low twists vanishes in the
DLC$_*$ limit.

The main idea of the low twist analysis is illustrated in Fig. \ref{fig-low}.

\begin{figure}[h]
\centering
	\resizebox{190pt}{116pt}{\includegraphics{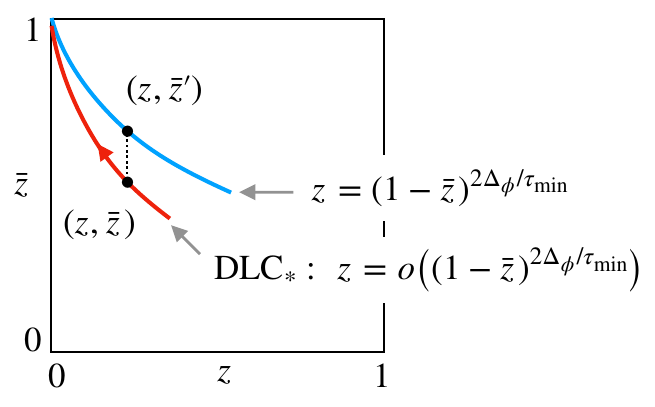}}
	\caption{\label{fig-low}To show that the low twists vanish in the
		DLC$_{\ast}$ limit (red curve) we show that they are (a) bounded on the blue
		curve $z = (1 - \bar{z})^{2 \Delta_{\phi} / \tau_{\rm gap}}$ and (b) get
		suppressed when moving from the blue to the red curve.}
\end{figure}

\subsubsection{End of proof of Theorem
	\ref{theorem:DLCtwist}}\label{crossed-around}

So far we have shown that, in the DLC$_{\ast}$ limit,
\begin{itemize}
	\item the l.h.s.~of the crossing equation (direct channel) goes to 1
	
	\item in the r.h.s.~of the crossing equation, the total contribution of
	twists $\tau$ with $\lvert\tau - 2 \Delta_{\phi}\rvert \geqslant \varepsilon$
	vanishes, for any $\varepsilon > 0$.
\end{itemize}
Putting these two things together, we conclude that the total contribution of
twists $\tau \in (2 \Delta_{\phi} - \varepsilon, 2 \Delta_{\phi} +
\varepsilon)$ goes to 1. Combining this with Lemma \ref{ind}, we conclude that
there is an infinite number of operators of twists $\tau \in (2
\Delta_{\phi} - \varepsilon, 2 \Delta_{\phi} + \varepsilon)$ for any
$\varepsilon > 0$. Hence there is a sequence of operators $\mathcal{O}_i$
whose twists $\tau_i \rightarrow 2 \Delta_{\phi}$. Since we are assuming
that the spectrum is discrete, it follows that their spins $\ell_i \rightarrow
\infty$. Theorem \ref{theorem:DLCtwist}(b) is proved.

\begin{remark}
	The nontrivial equation was {\eqref{nontriv}}. Compare this
	equation to Lemma \ref{lemma:CBineq}. In that lemma, we bound $g_{\tau,
		\ell} (z, \bar{z})$ by $g_{\tau, \ell} (z_0, \bar{z})$ at a
	\textit{larger} value of the argument: $z_0 \geqslant z$. That was
	trivial. In contrast, Eq.\,{\eqref{nontriv}} bounds $g_{\tau, \ell} (z,
	\bar{z})$ by $g_{\tau, \ell} (z, \bar{z}')$ at a
	\textit{smaller} value of the argument: $\bar{z}' \leqslant \bar{z}$. This
	is nontrivial and is only possible because of the two-sided bound
	{\eqref{2sidebnd}} implied by Lemma \ref{AFLz}.
	\end{remark}

\begin{remark} \label{collinear}
	It is instructive to see what happens when we expand the 4pt function not in
	full conformal blocks but in collinear blocks, which are given by
	$\bar{z}^{\tau / 2} F_{\tau, \ell} (z)$.\footnote{Collinear blocks are
		blocks associated with collinear primaries, which are annihilated by the
		generator $\bar{L}_1$ of the 2D global conformal group acting in the $z,
		\bar{z}$ plane, but not necessarily by the generators $L_1$ and $K_i$. Every
		conformal quasiprimary multiplet splits into infinitely many collinear primary
		multiplets.} Collinear blocks scale as an exact power in $\bar{z}$, so
	bound {\eqref{nontriv}} for them becomes trivial. We needed to work with
	full conformal blocks since we wanted to prove a statement about the
	existence of \textit{conformal} quasiprimaries with $\tau \rightarrow 2
	\Delta_{\phi}$. If we work with collinear blocks, the proof simplifies
	since we don't need Lemma \ref{AFLz}, but we only get a weaker result about
	the existence of \textit{collinear} primaries with $\tau \rightarrow 2
	\Delta_{\phi}$. One could start with such a weaker result and try to
	argue that it implies the stronger result about conformal primaries (the
	strategy tried in {\cite{Komargodski:2012ek}}). This is automatic if $2
	\Delta_{\phi}$ is close to the unitarity bound, but in general it does
	not appear easy to argue this rigorously, without effectively using Lemma
	\ref{AFLz}.
\end{remark}

\subsection{Generalizations}

\subsubsection{Continuous spectrum case}

In the statement of Theorem \ref{theorem:DLCtwist}, we assumed that the
spectrum in the $\phi \times \phi$ OPE is discrete. In this section we
would like to show that the theorem is actually true even if the spectrum is
continuous. This generalization could be interesting for unitary CFTs with a
continuous spectrum satisfying a twist gap assumptions, although we don't know
any examples of such theories. In the continuum spectrum case, Eq.
{\eqref{CBexp}} should be replaced by

\begin{equation}
	g (z, \bar{z}) = 1 + \sum_{\ell} \int d \mu_{\ell} (\tau) g_{\tau, \ell} (z,
	\bar{z}), \label{CBexp-cont}
\end{equation}
where $d \mu_{\ell}$ are non-negative measures having support at $\tau
\geqslant \tau_{\rm gap}$.

In the proof of Theorem \ref{theorem:DLCtwist}, the discrete spectrum
assumption was used in only two places:

D1. In Section \ref{crossed-low}, when we used Lemma \ref{AFLz} to estimate
the contribution of low-twist operators in the crossed channel, the discrete
spectrum gave us a lower bound $\Delta \geqslant \Delta_{\min}^0 > \nu$ in the
$\ell = 0$ case. This made the upper bound of Lemma \ref{AFLz} uniform in
$\tau$ and $\ell$.

D2. In Section \ref{crossed-around}, when concluding the proof of the theorem.
Here we used that any sequence of operators whose twist converges to a limit
has spin going to infinity.

In the continuum spectrum case, we have to argue differently. We have the following
key result which generalizes Lemma \ref{ind}:

\begin{proposition}
	\label{prop:contsubleading}For any finite $\tau_{\max}$ and $\ell_{\max}$,
	the total crossed-channel contribution of blocks with $\tau \leqslant
	\tau_{\max}, \ell \leqslant \ell_{\max}$ vanishes in the $\text{DLC}_{\ast}$
	limit. This is true for both continuum and discrete spectrum.
\end{proposition}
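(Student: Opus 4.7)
The strategy combines the individual-block estimate of Lemma \ref{ind} with the device, already exploited in the direct channel via Lemma \ref{lemma:gbound1}, of relating constants in infinite sums to the value of the 4pt function at a fixed interior point. Denote by
\begin{equation}
G(z,\bar{z}) = \sum_{\ell \leqslant \ell_{\max}}\int_{\tau \leqslant \tau_{\max}} d\mu_\ell(\tau)\, g_{\tau,\ell}(z,\bar{z})
\end{equation}
the contribution of interest, and fix once and for all a reference point $(z_0, \bar{z}_0)$ with $0 < \bar{z}_0 \leqslant a < b \leqslant z_0 < 1$.

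First I would apply Lemma \ref{AFLz} twice: reading off the lower bound at $(z_0,\bar{z}_0)$, and the upper bound at the crossed-channel point (with the roles of $z$ and $\bar{z}$ exchanged, so that $1-\bar{z}$ plays the role of the small argument and $1-z$ of the one close to $1$). Taking the ratio and using Remark \ref{AFLrem}(c) together with the logarithmic bound $F_{\tau,\ell}(1-z) \leqslant C \log(1/z)$ from Remark \ref{AFLrem}(e)---uniform in $\tau \in [\tau_{\rm gap}, \tau_{\max}]$ and $\ell \in [0, \ell_{\max}]$ because $\tau + 2\ell$ is then bounded---yields
\begin{equation}
g_{\tau,\ell}(1-z,1-\bar{z}) \leqslant C \log(1/z) \left(\frac{1-\bar{z}}{\bar{z}_0}\right)^{\tau/2} g_{\tau,\ell}(z_0,\bar{z}_0).
\end{equation}
Since $\tau \geqslant \tau_{\rm gap}$ and $(1-\bar{z})/\bar{z}_0 < 1$ in the DLC$_\ast$ regime, the $\tau$-dependent factor is maximized at $\tau = \tau_{\rm gap}$, so the estimate becomes uniform in $\tau$ and $\ell$.

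Integrating against $d\mu_\ell(\tau)$ and summing over $\ell \leqslant \ell_{\max}$, I would use that $\sum_\ell \int d\mu_\ell(\tau)\, g_{\tau,\ell}(z_0,\bar{z}_0) \leqslant g(z_0,\bar{z}_0) < \infty$ (positivity of the measure and convergence of the block expansion at the interior point) to obtain
\begin{equation}
G(1-z,1-\bar{z}) \leqslant C' \log(1/z)\, (1-\bar{z})^{\tau_{\rm gap}/2}.
\end{equation}
Multiplying by $\bigl(z\bar{z}/((1-z)(1-\bar{z}))\bigr)^{\Delta_\phi}$ and running the final computation from the proof of Lemma \ref{ind}---rewriting the result as $[z^{\tau_{\rm gap}/2}/(1-\bar{z})^{\Delta_\phi}]^{\kappa}\, z^{\Delta_\phi - \kappa \tau_{\rm gap}/2} \log(1/z)$ with $\kappa = 1 - \tau_{\rm gap}/(2\Delta_\phi) \geqslant 0$ by Theorem \ref{theorem:DLCtwist}(a)---shows that the crossed-channel expression vanishes in the DLC$_\ast$ limit.

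The main obstacle is the $\ell = 0$ case in $d \geqslant 3$, where the upper bound of Lemma \ref{AFLz} carries an extra $1/(\Delta - \nu)$ enhancement that the lower bound does not. In the discrete spectrum case this was sidestepped in Section \ref{crossed-low} via $\Delta_{\min}^0 > \nu$, and if $\tau_{\rm gap} > \nu$ the same remark applies to the continuum case. Otherwise I would split the scalar integral as $\tau \geqslant \nu + \delta$ plus $\tau < \nu + \delta$: the first piece admits the uniform estimate above with overall constant of order $1/\delta$, while for the second piece the integrability of $d\mu_0$ against the singular block---forced by finiteness of $g(z_0,\bar{z}_0)$---makes its contribution $o(1)$ as $\delta \to 0$; choosing $\delta$ small first and then taking the DLC$_\ast$ limit would close the argument.
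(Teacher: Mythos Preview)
Your overall strategy---bound each $g_{\tau,\ell}(1-z,1-\bar z)$ by a $\tau$-independent multiple of $g_{\tau,\ell}$ at a fixed interior reference point, sum against the convergent block expansion there, and finish with the computation from Lemma~\ref{ind}---is exactly the paper's route. For $\ell\geqslant 1$ your ratio estimate via Lemma~\ref{AFLz} and Remark~\ref{AFLrem}(c,e) is precisely the easy half of the paper's Lemma~\ref{lemma:improvedlog}.

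The $\ell=0$, $d\geqslant 3$ treatment has a gap. Your $\delta$-splitting needs, for the near-threshold piece $\nu<\Delta<\nu+\delta$, either (i) a ratio bound $g_{\Delta,0}(1-z,1-\bar z)/g_{\Delta,0}(z_0,\bar z_0)\leqslant C(z,\bar z)$ uniform in $\Delta$, or (ii) finiteness of $\int d\mu_0(\tau)\,(\Delta-\nu)^{-1}$ so that the factor $1+1/(\Delta-\nu)$ from the upper bound of Lemma~\ref{AFLz} can be integrated against the measure. Both routes require a \emph{lower} bound $g_{\Delta,0}(z_0,\bar z_0)\gtrsim(\Delta-\nu)^{-1}$ near threshold; the lower bound in Lemma~\ref{AFLz} is only $\geqslant 1$ for the normalized ratio and does not give this. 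You invoke ``the singular block'' via Remark~\ref{AFLrem}(b), but that remark records the pole in the \emph{upper} bound and the qualitative fact that the block blows up; it does not supply the quantitative two-sided estimate you need. The paper closes this gap with Lemma~\ref{lemma:improvedlog}: its scalar case (Appendix~\ref{app:improvedlog}) extracts the required $(\Delta-\nu)^{-1}$ lower bound on $g_{\Delta,0}(b,a)$ from the $n=1$ term of Hogervorst's dimensional reduction~\eqref{eq:dimreduction}, so that the ratio is uniformly bounded and no splitting is needed.

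Once that lower bound is in hand your argument does go through, but note that the $\delta\to 0$ manoeuvre is then superfluous: knowing $\int d\mu_0(\tau)\,(1+1/(\Delta-\nu))<\infty$ lets you bound the entire scalar piece at once by a fixed constant times $(1-\bar z)^{\tau_{\rm gap}/2}\log(1/z)$, which already vanishes in the DLC$_\ast$ limit by the Lemma~\ref{ind} computation.
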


	The proof of this proposition will use the following uniform logarithmic bound
	on the conformal blocks, which strengthens Remark
	\ref{AFLrem}(e).
	
	\begin{lemma}
		\label{lemma:improvedlog}For any $a, b \in (0, 1)$, $a < b$ and finite
		$\tau_{\max}$, $\ell_{\max}$, there exist a finite number $B$ (which only
		depends on $a$, $b$, $\tau_{\max}$ and $\ell_{\max}$) such that
		\begin{equation}
			\frac{g_{\tau, \ell} (z, \bar{z})}{g_{\tau, \ell} (b, a)} \leqslant B
			\bar{z}^{\tau / 2} \log \left( \frac{1}{1 - z} \right)
			\qquad \forall\ 0 \leqslant \bar{z} \leqslant a <
			b \leqslant z < 1,  \label{improvedbound}
		\end{equation}
		for any $\tau \leqslant
		\tau_{\max}$, $\ell \leqslant \ell_{\max}$ such that $(\tau, \ell)$ satisfies the unitarity bounds, i.e.\,$\tau \geqslant \nu$ for $\ell = 0$ and $\tau \geqslant 2
		\nu$ for $\ell \geqslant 1$.
	\end{lemma}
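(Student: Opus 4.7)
The plan is to derive the bound by invoking Lemma \ref{AFLz} twice: as an upper bound on the numerator $g_{\tau, \ell}(z, \bar{z})$ and as a lower bound on the denominator $g_{\tau, \ell}(b, a)$, then to reduce to a uniform logarithmic estimate on the one-variable function $F_{\tau, \ell}$. First I would apply the approximate factorization in the specified range of $(z,\bar z)$ to write
\begin{equation*}
g_{\tau, \ell}(z, \bar{z}) \leqslant K_d(a, b)\left[1 + \frac{\theta(d \geqslant 3)\delta_{\ell 0}}{\Delta - \nu}\right]\bar{\rho}^{\tau/2} F_{\tau, \ell}(z), \qquad g_{\tau, \ell}(b, a) \geqslant \bar{\rho}_a^{\tau/2} F_{\tau, \ell}(b),
\end{equation*}
where $\bar{\rho}_a := a/(1 + \sqrt{1 - a})^2$. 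Dividing and using $\bar{\rho} \leqslant \bar{z}$ together with the uniform bound $\bar{\rho}_a^{-\tau/2} \leqslant \max(1, \bar{\rho}_a^{-\tau_{\max}/2})$ for $\tau \leqslant \tau_{\max}$ reduces the problem to a uniform estimate of the form $F_{\tau, \ell}(z)/F_{\tau, \ell}(b) \leqslant C \log(1/(1-z))$, modulo the scalar $1/(\Delta - \nu)$ factor treated separately below.

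The second step is a compactness argument. By \eqref{def:Ftaul}, $F_{\tau, \ell}(z)/F_{\tau, \ell}(b) = k_\beta(z)/k_\beta(b)$ with $\beta = \tau + 2\ell$. The unitarity bounds on $(\tau, \ell)$ together with $\tau \leqslant \tau_{\max}$, $\ell \leqslant \ell_{\max}$ confine $\beta$ to a compact interval $[\beta_{\min}, \beta_{\max}]$ with $\beta_{\min} > 0$ (in $d = 2$ with $\ell = 0$ the remaining possibility $\tau = 0$ is the unit block, for which the claim is trivial). Remark \ref{AFLrem}(e) gives the explicit asymptotic $k_\beta(z) \sim C_\beta \log(1/(1-z))$ as $z \to 1^-$, with $C_\beta > 0$ depending continuously on $\beta$. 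Consequently the function $\varphi(\beta, z) := [k_\beta(z)/k_\beta(b)]/\log(1/(1-z))$ extends continuously to $[\beta_{\min}, \beta_{\max}] \times [b, 1]$ and is therefore uniformly bounded, yielding the desired estimate.

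The main obstacle is the $1/(\Delta - \nu)$ factor appearing in the upper bound of Lemma \ref{AFLz} for scalars in $d \geqslant 3$, which blows up as $\tau \downarrow \nu$ and is absent from the lower bound used above. The resolution is that this singular behavior is coordinate-independent: the scalar conformal block admits a factorization $g_{\tau, 0}(z, \bar{z}) = (\tau - \nu)^{-1}\tilde{g}_\tau(z, \bar{z}) + (\textrm{regular at } \tau = \nu)$, with residue $\tilde{g}_\tau$ smooth in $\tau$ near $\nu$ (as can be seen from the dimensional reduction formula used to prove Lemma \ref{AFLz}). Hence the same $(\tau - \nu)^{-1}$ factor multiplies both $g_{\tau, 0}(z, \bar{z})$ and $g_{\tau, 0}(b, a)$ and cancels in the ratio. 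Operationally, I would apply the two-sided bound of Lemma \ref{AFLz} to the regularized block $(\tau - \nu) g_{\tau, 0}$—now uniformly bounded above and below—and then divide through, after which the argument of the first two paragraphs applies verbatim and produces the finite constant $B$ depending only on $a$, $b$, $\tau_{\max}$, $\ell_{\max}$.
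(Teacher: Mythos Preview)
Your treatment of the $\ell\geqslant 1$ case is essentially the paper's: apply the two-sided bound of Lemma~\ref{AFLz} to numerator and denominator and reduce to an estimate on $k_\beta(z)/k_\beta(b)$. Your compactness argument for the latter is a clean alternative to the paper's explicit manipulations.

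The gap is in your operational step for scalars in $d\geqslant 3$. You write that you would ``apply the two-sided bound of Lemma~\ref{AFLz} to the regularized block $(\tau-\nu)g_{\tau,0}$---now uniformly bounded above and below.'' But Lemma~\ref{AFLz} gives only
\[
\bar\rho^{\tau/2}F_{\tau,0}(z)\ \leqslant\ g_{\tau,0}(z,\bar z)\ \leqslant\ K_d(a,b)\Bigl(1+\tfrac{1}{\Delta-\nu}\Bigr)\bar\rho^{\tau/2}F_{\tau,0}(z),
\]
so multiplying through by $(\tau-\nu)$ yields a lower bound $(\tau-\nu)\bar\rho^{\tau/2}F_{\tau,0}(z)$ that \emph{vanishes} as $\tau\to\nu$. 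In other words, the lower bound in Lemma~\ref{AFLz} carries no pole, and you cannot extract one from it: the lemma is simply too weak on that side to control $(\tau-\nu)g_{\tau,0}$ from below uniformly.

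What is actually needed---and what the paper does---is a \emph{new} lower bound on $g_{\Delta,0}(b,a)$ exhibiting the same $1/(\Delta-\nu)$ singularity as the upper bound. The paper obtains it by going back to Hogervorst's dimensional reduction and retaining the $n=1$ term, whose coefficient $\mathcal A^{(d)}_{1,0}(\Delta,0)$ contains the factor $1/(\Delta-\nu)$; this gives $g_{\Delta,0}^{(d)}(b,a)\geqslant D/(\Delta-\nu)$ with $D>0$ depending only on $a,b,\tau_{\max}$. Your intuition that the pole is coordinate-independent and cancels in the ratio is exactly right, and you even point to the dimensional reduction as the source; but the derivation has to proceed through this improved lower bound rather than through Lemma~\ref{AFLz} itself.
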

	
	The proof of this lemma is given in Appendix \ref{app:improvedlog}. Here we
	only remark on the relation of Lemma \ref{lemma:improvedlog} to Lemma
	\ref{AFLz}. For $\ell \geqslant 1$ the bound (\ref{improvedbound}) follows
	from Lemma \ref{AFLz}, using the upper bound for $g_{\tau, \ell} (z,
	\bar{z})$, the lower bound for $g_{\tau, \ell} (b, a)$ and the logarithmic
	bound of the hypergeometric function. The $\ell = 0$ case is more subtle: it
	does not follow from Lemma \ref{AFLz} because of the $(\Delta - \nu)^{- 1}$
	singularity appearing in the upper bound of (\ref{CB:sandwich}). In Appendix
	\ref{app:improvedlog}, we will argue that the lower bound on $g_{\Delta, 0}
	(b, a)$ also has the $(\Delta - \nu)^{- 1}$ singularity, and hence the
	upper bound on the ratio $g_{\Delta, 0} (z, \bar{z}) / g_{\Delta, 0} (b, a)$
	does not have this singularity.
	
\begin{proof}[Proof of Proposition
	\ref{prop:contsubleading}] We already know by Lemma \ref{ind} that the
	unit operator term in the crossed channel vanishes in the DLC$_{\ast}$ limit, so we
	will not include it in the argument below. Pick for definiteness $a =
	\frac{1}{3}, b = \frac{2}{3}$. In the DLC$_{\ast}$ limit, eventually we will
	have
	\begin{equation}
		0 \leqslant 1 - \bar{z} \leqslant {\textstyle \frac{1}{3}} < {\textstyle \frac{2}{3} }\leqslant 1 -
		z < 1,
	\end{equation}
	so Lemma \ref{lemma:improvedlog} applies with the substitution $z
	\rightarrow 1 - z, \bar{z} \rightarrow 1 - \bar{z}$. Then in the crossed
	channel, the total contribution of operators with $\tau \leqslant \tau_{\max}, \ell
	\leqslant \ell_{\max}$ is bounded as follows:
		\begin{align}
			\sum_{\ell \leqslant \ell_{\max}} \int_{\tau_{\rm gap}}^{\tau_{\max}} d
			\mu_{\ell} (\tau) g_{\tau, \ell} (1 - z, 1 - \bar{z}) \leqslant & B \log
			\frac{1}{z} \sum_{\ell \leqslant \ell_{\max}}
			\int_{\tau_{\rm gap}}^{\tau_{\max}} d \mu_{\ell} (\tau) (1 - \bar{z})^{\tau /
				2} g_{\tau, \ell} \left({ \textstyle \frac{2}{3}, \frac{1}{3} }\right)  \nonumber\\
			\leqslant & B \log \frac{1}{z} (1 - \bar{z})^{\tau_{\rm gap} / 2}
			\sum_{\ell \leqslant \ell_{\max}} \int_{\tau_{\rm gap}}^{\tau_{\max}} d
			\mu_{\ell} (\tau) g_{\tau, \ell} \left( { \textstyle \frac{2}{3}, \frac{1}{3} } \right) \nonumber\\
			\leqslant & B \log \frac{1}{z}  (1 - \bar{z})^{\tau_{\rm gap} / 2}  
			g\bigl( { \textstyle \frac{2}{3}, \frac{1}{3} }\bigr) \nonumber\\
			&\left( 0 \leqslant z
			\leqslant { \textstyle \frac{1}{3},\ \frac{2}{3} }\leqslant \bar{z} < 1 \right). 
		\end{align}
	Here in the first line we used Lemma \ref{lemma:improvedlog}, in the second
	line we used $\tau \geqslant \tau_{\rm gap}$, and in the last line we bounded
	the partial sum of the conformal blocks by the full conformal block
	expansion of $g \left( \frac{2}{3}, \frac{1}{3} \right)$, assumed finite. Now together with the prefactor $\left( \frac{z \bar{z}}{(1 - z)
		(1 - \bar{z})} \right)^{\Delta_{\phi}}$ of the crossed channel, we have
	the bound for $0 \leqslant z \leqslant \frac{1}{3}, \frac{2}{3} \leqslant
	\bar{z} < 1$:
	\begin{equation}
		\left( \frac{z \bar{z}}{(1 - z) (1 - \bar{z})} \right)^{\Delta_{\phi}}
		\sum_{\ell \leqslant \ell_{\max}} \int_{\tau_{\rm gap}}^{\tau_{\max}} d
		\mu_{\ell} (\tau) g_{\tau, \ell} (1 - z, 1 - \bar{z}) \leqslant B'
		\frac{z^{\Delta_{\phi}}}{(1 - \bar{z})^{\Delta_{\phi} - \tau_{\rm gap}
				/ 2}} \log (1 / z), \label{contbound}
	\end{equation}
	where we used $(1 - z)^{- \Delta_{\phi}} \leqslant b^{-
		\Delta_{\phi}}$, $\bar{z}^{\Delta_{\phi}} \leqslant 1$ and we define
	$B' \assign B b^{- \Delta_{\phi}} g \left( \frac{2}{3}, \frac{1}{3}
	\right) < + \infty$. The r.h.s.~of {\eqref{contbound}} is the same quantity
	as Eq.\,{\eqref{discrbound}} in the proof of Lemma \ref{ind}, and as shown
	there it vanishes in the DLC$_{\ast}$ limit. This finishes the proof.
\end{proof}

Now we can get around the two uses D1,D2 of the discrete spectrum assumption.

For D1, the total low twist ($\tau \leqslant 2 \Delta_{\phi} -
\varepsilon$) spin-0 contribution vanishes by Proposition
\ref{prop:contsubleading}. So we only have to consider low twist and $\ell
\geqslant 1$, in which case the upper bound of Lemma \ref{AFLz} is uniform in
$\tau$ and $\ell$, without the need for the discrete spectrum assumption. \
The rest of the argument in Section \ref{crossed-low} is unchanged.

For D2, we argue as follows. We know that the total crossed-channel
contribution of twists $\tau \in (2 \Delta_{\phi} - \varepsilon, 2
\Delta_{\phi} + \varepsilon)$ goes to 1 for any $\varepsilon > 0$. On the
other hand the contribution of twists $\tau \in (2 \Delta_{\phi} -
\varepsilon, 2 \Delta_{\phi} + \varepsilon)$, limiting to spins $\ell
\leqslant \ell_{\max}$, goes to zero (Proposition \ref{prop:contsubleading}).
Thus, for any $\varepsilon > 0$ and any $\ell_{\max} < \infty$, the set of
operators with $\tau \in (2 \Delta_{\phi} - \varepsilon, 2 \Delta_{\phi}
+ \varepsilon)$ and $\ell > \ell_{\max}$ is not empty. This implies the
existence of a sequence of operators with twist $\tau_i \rightarrow 2
\Delta_{\phi}$ and spin $\ell_i \rightarrow \infty$.

\subsubsection{Non-identical scalars and spinning
	operators}\label{non-identical}

Theorem \ref{theorem:DLCtwist} should allow the following generalization to
the 4pt function $\langle \phi_1 \phi_2 \phi_1 \phi_2 \rangle$
where $\phi_1, \phi_2$ are non-identical scalar case:

\textit{Let $\phi_1$ and $\phi_2$ be two Hermitian scalar
	quasiprimaries in a unitary CFT in $d \geqslant 2$. Suppose that all operators
	in the OPEs $\phi_1 \times \phi_1$ and $\phi_2 \times \phi_2$,
	except for the unit operators, satisfy the positive twist gap assumption. Then
	the OPE $\phi_1 \times \phi_2$ contains an infinite sequence of quasiprimary
	operators $(\mathcal{O}_i)_{i = 1}^{\infty}$ whose twist $\tau_i \rightarrow
	\Delta_1 + \Delta_2$ while spin $\ell_i \rightarrow \infty$. $(\Delta_i \equiv
	\Delta_{\phi_i})$}

Such a generalization was considered already in {\cite{Komargodski:2012ek}}.
It can be argued at an intuitive level very similarly to Section
\ref{intuitive}. Here we give a sketch of how a rigorous proof could proceed.
It will be clear that the result may require a slight modification in the special case $\Delta_1 + \Delta_2 =
d$.

We consider the crossing equations of the 4pt function $\langle
\phi_1 \phi_1 \phi_2 \phi_2 \rangle$:
\begin{equation}
	g_{1122} (z, \bar{z}) = \frac{(z \bar{z})^{\Delta_1}}{[(1 - z) (1 -
		\bar{z})]^{\frac{\Delta_1 + \Delta_2}{2}}} g_{2112} (1 - z, 1 - \bar{z})
	\qquad (0 < z, \bar{z} < 1), \label{crossingnonid}
\end{equation}
where $g_{i j k l} (z, \bar{z})$ is defined by
\begin{equation}
	\langle \phi_i (x_1) \phi_j (x_2) \phi_k (x_3) \phi_l (x_4) \rangle =
	\frac{1}{(x_{12}^2)^{\frac{\Delta_i + \Delta_j}{2}}
		(x_{34}^2)^{\frac{\Delta_k + \Delta_l}{2}}} \left( \frac{x_{14}^2}{x_{24}^2}
	\right)^{\frac{\Delta_j - \Delta_i}{2}} \left( \frac{x_{14}^2}{x_{13}^2}
	\right)^{\frac{\Delta_k - \Delta_l}{2}} g_{i j k l} (z, \bar{z}) .
	\label{CFT4ptnonid}
\end{equation}
The direct channel decomposition is given by
\begin{equation}
	g_{1122} (z, \bar{z}) = 1 + \underset{\mathcal{O} \neq 1}{\sum}
	C_{11\mathcal{O}} C_{\bar{2} \bar{2} \mathcal{O}}^{\ast} g_{1122,
		\mathcal{O}} (z, \bar{z}),
\end{equation}
where the sum is over all the quasiprimaries on the OPE except for the identity
operator (which contributes the first term ``1''). It is not a positive sum,
but using the Cauchy-Schwarz inequality, and crossing equations for $\langle
\phi_1 \phi_1 \phi_1 \phi_1 \rangle$ and $\langle \phi_2
\phi_2 \phi_2 \phi_2 \rangle$, we can show that for an arbitrary
fixed $z_0 \in (0, 1)$,
\begin{equation}
	g_{1122} (z, \bar{z}) = 1 + O \left( \frac{z^{\tau_{\rm gap} / 2}}{(1 -
		\bar{z})^{\frac{\Delta_1 + \Delta_2}{2}}} \right) \qquad (0 \leqslant z \leqslant z_0 \leqslant \bar{z} < 1) .
\end{equation}
This estimate, which generalizes Eq.\,{\eqref{CFT:LHS}}, suggests that the
$\text{DLC}_{\ast}$ limit for non-identical scalars should be defined as
\begin{equation}
	\text{DLC}_{\ast} :\qquad z \rightarrow 0,
	\quad \bar{z} \rightarrow 1, \quad \frac{z^{\tau_{\rm gap}}}{(1 - \bar{z})^{\Delta_1 + \Delta_2}}
	\rightarrow 0.
\end{equation}
Then we consider the crossed channel, which is a positive sum
\begin{equation*}
	\begin{split}
		g_{2112} (z, \bar{z}) = \underset{\mathcal{O}}{\sum} \lvert C_{21\mathcal{O}}
		\rvert^2 g_{2112, \mathcal{O}} (z, \bar{z}) .
	\end{split}
\end{equation*}
Generalizing the arguments of Section \ref{crossed-high}, we can still easily
show that in the crossed channel, the total contribution from operators with
twist $\tau \geqslant \Delta_1 + \Delta_2 + \varepsilon$ vanishes in the
$\text{DLC}_{\ast}$ limit.

It remains then to deal with the low-twist part. We expect that an analogue
of Lemma \ref{AFLz} holds for the case of non-identical external scalar
quasiprimaries. However, when the external scaling dimensions are not equal, the
conformal blocks are known to have extra poles at $\tau = 2 \nu (\equiv d -
2)$ for $\ell \geqslant 1$. These poles are due to the level-1 null states
$P_{\mu_1} \vert \mathcal{O}^{\mu_1 \mu_2 \ldots \mu_{\ell}} \rangle$,
where
$\mathcal{O}$ is the exchanged spin-$\ell$ quasiprimary operator in the OPE. Based
on this reason, we believe that the $\ell \geqslant 1$ conformal block
$g^{2112}_{\tau, \ell}$ of non-identical external scalars should satisfy the
following approximate factorization property, analogous to the bound
{\eqref{2sidebnd}} in the current situation:
\begin{equation}
	C_1 \leqslant \frac{g_{\tau, \ell}^{2112} (z, \bar{z})}{\bar{z}^{\tau / 2}
		F_{\tau, \ell}^{2112} (z) + \frac{1}{\tau - 2 \nu} \bar{z}^{\tau / 2 + 1}
		F_{\tau + 2, \ell}^{2112} (z)} \leqslant C_2 \qquad
	(\forall\ 0 \leqslant \bar{z} \leqslant a < b \leqslant z < 1) .
	\label{gen-appr}
\end{equation}
Here $F_{\tau, \ell}^{2112} (z) = z^{\tau / 2 + \ell} {}_2 F_1 \left(
\frac{\tau + 2 \ell + \Delta_1 - \Delta_2}{2}, \frac{\tau + 2 \ell + \Delta_1
	- \Delta_2}{2} ; \tau + 2 \ell ; z \right)$. The constants $C_1, C_2$ are
uniform in $\tau, \ell$ as long as $\tau \leqslant \tau_{\max} < \infty$ (we
assume that $a, b, \Delta_1, \Delta_2$ are fixed). The second term in the
denominator is new with respect to the identical external dimensions.
\begin{remark}
	When $\Delta_1\neq\Delta_2$, $\tau$ is always strictly larger than $2\nu$. This is because the spin-$\ell$ operaotors with $\tau=2\nu$, i.e.\,the conserved current operators, only appears in the $\phi_1\times\phi_2$ OPE with $\Delta_1=\Delta_2$, as a consequence of Ward identity.
\end{remark}

We did not try to work out a proof of {\eqref{gen-appr}}, but we suspect that
this could be done along the lines of Appendix \ref{appendix:proofAFL}. This
would require generalizing Hogervorst's dimensional reduction formula
{\cite{Hogervorst:2016hal}} to unequal external operators.

Using {\eqref{gen-appr}}, the argument could be completed as follows. Let
$\delta > 0$ be an arbitrary positive number. Consider conformal blocks of
operators of twist $\tau$ between $2 \nu + \delta$ and $\Delta_1+\Delta_2 -
\varepsilon$. The factor $\frac{1}{\tau - 2 \nu}$ in the second term in the
denominator of {\eqref{gen-appr}} is then uniformly bounded, as is the ratio
$F_{\tau + 2, \ell}^{2112} (z) / F_{\tau, \ell}^{2112} (z)$. The second term
in the denominator is thus always subleading in this range of $\tau$ and can
be dropped. We can argue exactly as in Section \ref{crossed-low} that their
total crossed-channel contribution of these twists vanishes in the
DLC$_{\ast}$ limit.

Next consider conformal blocks of operators of $\tau \in (2 \nu, 2 \nu +
\delta)$. According to {\eqref{gen-appr}}, each of these conformal blocks is
essentially equivalent, within a constant, to a sum of a collinear block of
twist $\tau$ and a collinear block of twist $\tau + 2$. Let us treat them separately (so that their
relative coefficient does not matter). We will call them ``pieces of type A and
B''. Type A pieces have twist below $\Delta_1+\Delta_2 -
	\varepsilon$ (for
sufficiently small $\delta, \varepsilon$, and $\phi_1$, $\phi_2$ strictly above the unitarity bound) so we can use the argument of
Section \ref{crossed-low} to show that their total crossed-channel
contribution vanishes in the DLC$_{\ast}$ limit. Type B pieces have twist in
the interval $(d, d + \delta)$. Suppose that $\Delta_1 + \Delta_2 \neq d .$
Then for sufficiently small $\varepsilon$ and $\delta$, type B pieces will all
have twist outside the interval $\Delta_1 + \Delta_2 \pm \varepsilon$. In
other words, they will be either high twist or low twist. Depending on which
case is realized, we use arguments of Sections \ref{crossed-high} or
\ref{crossed-low} to show that their total crossed-channel contribution also
vanishes in the DLC$_{\ast}$ limit.

We can then finish the argument as in Section \ref{crossed-around}.

It appears that the restriction $\Delta_1 + \Delta_2 \neq d$ is significant.
For $\Delta_1 + \Delta_2 = d$, level-1 descendants of primaries approaching
the unitarity bound could also reproduce the needed behavior in the
DLC$_{\ast}$ limit. We do not have an example of a theory where this happens,
but our analysis leaves room for this possibility. This would be a
manifestation of the ``fake primary effect'' discussed in
{\cite{Karateev:2019pvw}}.

\textit{Spinning operators.} Generalizations of Theorem
\ref{theorem:DLCtwist} for spinning external operators were also considered,
at an intuitive level, in {\cite{Fitzpatrick:2015qma}}, see also
{\cite{Li:2015itl}}. Their claim is that for arbitrary spin operators
$\mathcal{O}_1, \mathcal{O}_2$ with twists $\tau_1, \tau_2$, there is a
sequence of operators with twists asymptoting to $\tau_1 + \tau_2$ at large
spin. It may be possible to give a rigorous proof of this general result.
Generalizing Lemma \ref{AFLz} to spinning conformal blocks is likely to be the
most subtle step.

\section{Lightcone modular bootstrap}\label{LMB}

As mentioned in the introduction, unitary 2D CFTs exhibit, under
some conditions, twist accumulation phenomena for Virasoro primaries. A
rigorous treatment of these phenomena will be given here. The logic will be analogous to Section \ref{LC}, with crossing symmetry replaced by the modular invariance of the CFT
partition function. In this section we prefer not to draw too much on this analogy, not to clutter the
presentation. The analogy will be discussed in detail in Section \ref{GF} below.

We consider a unitary, modular invariant 2D CFT with central
charge ${c > 1}$ and a positive twist gap
$\tau_{\text{gap}} > 0$ in the spectrum of primaries.\footnote{In this
	section ``primary'' stands for Virasoro primary.} We consider the torus
partition function $Z (\beta, \bar{\beta})$ defined by
\begin{equation}
	Z (\beta, \bar{\beta}) \equiv \text{Tr}_{H_{\text{CFT}}} \left( e^{- \beta
		\left( L_0 - \frac{c}{24} \right)} e^{- \bar{\beta} \left( \bar{L}_0 -
		\frac{c}{24} \right)} \right) . \label{Z}
\end{equation}
Thus $\beta$ and $\bar{\beta}$ are the inverse temperatures for the left
movers and the right movers respectively.

The trace in {\eqref{Z}} is over the CFT Hilbert space whose basis is formed
by primary representations $V_h \otimes V_{\bar{h}}$ with $h, \bar{h}$
belonging to the spectrum of the theory. We assume that the unit operator $V_0
\otimes V_0$ is in the spectrum. The twist gap assumption means that
\begin{equation}
	h, \bar{h} \geqslant \tau_{\rm gap} / 2 \label{twistgap2d}
\end{equation}
for all representations rather than the unit operator.

The arguments below work equally well for the discrete and the continuum
spectrum case. The sum over primaries will be denoted as $\sum_{h, \bar{h}}$.
For a continuum spectrum this has to be replaced by $\int \nospace d \mu
(h, \bar{h})$ where $\mu$ is a non-negative spectral measure.

For $c > 1$, the characters of Virasoro (unitary) representations are given
by
\begin{equation}
	\chi_h (\beta) \equiv \text{Tr}_{V_h} \left( e^{- \beta \left( L_0 -
		\frac{c}{24} \right)}  \right) = \frac{e^{\frac{c - 1}{24} \beta}}{\eta
		(\beta)} \times \begin{cases}
		1 - e^{- \beta} &\text{if } h = 0,\\
		e^{- \beta h}&\text{if } h > 0.
	\end{cases}
\end{equation}

The Dedekind function $\eta$ accounts for the contribution of descendants.
Then the partition function $Z (\beta, \bar{\beta})$ can be written as
\begin{equation}
	Z (\beta, \bar{\beta}) = \frac{e^{(\beta + \bar{\beta}) \frac{c -
				1}{24}}}{\eta (\beta) \eta (\bar{\beta})} \tilde{Z} (\beta,
	\bar{\beta}),
\end{equation}
where
\begin{equation}
	\tilde{Z} (\beta, \bar{\beta}) = (1 - e^{- \beta}) (1 - e^{-
		\bar{\beta}}) + \sum_{h, \bar{h}} e^{- \beta h - \bar{\beta} \bar{h}}
	\label{def:Zvir} .
\end{equation}
We call $\tilde{Z}$ the reduced partition function. We will refer to the terms in this expansion as ``blocks.''

We assume that the partition function is finite when $\beta, \bar{\beta} \in
(0, + \infty)$. We also assume that the partition function is modular
invariant. We will only use invariance under the modular S transformation,
which reads:
\begin{equation}
	Z (\beta, \bar{\beta}) = Z_{\text{}} \left( \frac{4 \pi^2}{\beta}, \frac{4
		\pi^2}{\bar{\beta}} \right) . \label{Smod}
\end{equation}
Using the behavior of the $\eta$ function under the S transformation
\begin{equation}
	\eta (\beta) = \sqrt{\frac{2 \pi}{\beta}} \eta \left( \frac{4 \pi^2}{\beta}
	\right),
\end{equation}
we find that $\tilde{Z}$ transforms as
\begin{equation}
	\tilde{Z} (\beta, \bar{\beta}) = K (\beta, \bar{\beta}) \tilde{Z}
	\left( \frac{4 \pi^2}{\beta}, \frac{4 \pi^2}{\bar{\beta}} \right)
	\label{modulartransformation},
\end{equation}
where
\begin{equation}
	K (\beta, \bar{\beta}) = \sqrt{\frac{4 \pi^2}{\beta \bar{\beta}}} e^{
		4 \pi^2A / \beta} e^{4 \pi^2A / \bar{\beta}} e^{- A\beta} e^{- A\bar{\beta}},
	\label{def:modularK}
\end{equation}
and we have denoted $A \equiv \frac{c - 1}{24}$.

The convergent expansion (\ref{def:Zvir}) and the modular invariance condition
(\ref{modulartransformation}) will be the starting points of our analysis.

In this setup, we prove the following theorem

\begin{theorem}
	\label{boundtwist}Take any unitary, modular invariant 2D CFT,
	having a normalizable vacuum, central charge $c > 1$ and a positive twist
	gap $\tau_{\rm gap}$ in the spectrum of nontrivial primaries, see Eq.
	{\eqref{twistgap2d}}. Then:
	
	(a) $\tau_{\rm gap} \leqslant \frac{c - 1}{12}$;
	
	(b) There is at least one family of primaries $\mathcal{O}_i$ such that $h_i
	\rightarrow \frac{c - 1}{24}$ while $\bar{h}_i \rightarrow \infty$. The same
	is true with $h$ and $\bar{h}$ interchanged.
\end{theorem}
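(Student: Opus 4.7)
The plan is to follow the structure of the proof of Theorem~\ref{theorem:DLCtwist}, with crossing symmetry replaced by modular $S$-invariance~\eqref{modulartransformation} and the $\mathrm{DLC}_\ast$ limit replaced by the modular analog
\begin{equation*}
	\mathrm{LMB}_\ast:\quad \beta\to 0,\quad \bar{\beta}\to\infty,\quad \exp\bigl(4\pi^2 A/\beta-\bar{\beta}\,\tau_{\rm gap}/2\bigr)\to 0.
\end{equation*}
I would first establish a direct-channel bound analogous to Proposition~\ref{lemmahigh}: using $\bar{h}\geqslant\tau_{\rm gap}/2$, the elementary block inequality $e^{-\beta h-\bar{\beta}\bar{h}}\leqslant e^{-(\bar{\beta}-\bar{\beta}_0)\tau_{\rm gap}/2}e^{-\beta h-\bar{\beta}_0\bar{h}}$ (for $\bar{\beta}\geqslant\bar{\beta}_0$) reduces the non-vacuum part of $\tilde Z(\beta,\bar{\beta})$ to $e^{-(\bar{\beta}-\bar{\beta}_0)\tau_{\rm gap}/2}\tilde Z(\beta,\bar{\beta}_0)$, and a single modular $S$-transform bounds $\tilde Z(\beta,\bar{\beta}_0)\lesssim\beta^{-1/2}e^{4\pi^2A/\beta}$ as $\beta\to 0$ (the transformed partition function being uniformly bounded when $4\pi^2/\beta\to\infty$ and $4\pi^2/\bar{\beta}_0$ is fixed). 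Combining, the non-vacuum part of $\tilde Z$ is $\lesssim\beta^{-1/2}\exp(4\pi^2A/\beta-\bar{\beta}\tau_{\rm gap}/2)$, which vanishes in $\mathrm{LMB}_\ast$, so $\tilde Z(\beta,\bar{\beta})\to(1-e^{-\beta})(1-e^{-\bar{\beta}})\sim\beta$. Part~(a) then follows: the crossed vacuum lower bound $\tilde Z\geqslant K(\beta,\bar{\beta})(1-e^{-4\pi^2/\beta})(1-e^{-4\pi^2/\bar{\beta}})\sim\beta^{-1/2}\bar{\beta}^{-3/2}\exp(4\pi^2A/\beta-A\bar{\beta})$ blows up if $\tau_{\rm gap}>2A$ along the ray $4\pi^2A/\beta=\bar{\beta}\tau_{\rm gap}/2-M$ with $M$ growing slower than $\bar{\beta}$, contradicting $\tilde Z\sim\beta\to 0$.

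For part~(b), I would decompose the crossed sum by ranges of $h$. Writing $\alpha=4\pi^2/\beta$, $\bar{\alpha}=4\pi^2/\bar{\beta}$, the high-$h$ piece $h\geqslant A+\varepsilon$ uses the block bound $e^{-\alpha h}\leqslant e^{-(\alpha-\alpha_0)(A+\varepsilon)}e^{-\alpha_0 h}$ with a fixed $\alpha_0$, summed and bounded via a second modular $S$-transform of $\tilde Z(\alpha_0,\bar{\alpha})$; after multiplying by $K(\beta,\bar{\beta})$ one obtains $\lesssim\beta^{-1/2}e^{-(4\pi^2/\beta)\varepsilon}$, which is super-polynomially smaller than $\beta$. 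The low-$h$ piece $h\leqslant A-\varepsilon$ uses the dual inequality $e^{-\alpha h}\leqslant e^{(\alpha_0-\alpha)(A-\varepsilon)}e^{-\alpha_0 h}$ valid for $\alpha\leqslant\alpha_0$ and $h\leqslant A-\varepsilon$, with the choice $\alpha_0=\alpha+c$ for a positive $c$. The shifted reference point $(4\pi^2/\alpha_0,\bar{\beta})$ still lies in $\mathrm{LMB}_\ast$ provided $c$ grows slower than $\bar{\beta}\tau_{\rm gap}/2-4\pi^2A/\beta$, so the direct-channel estimate applies and gives $\tilde Z(\alpha_0,\bar{\alpha})\sim\alpha_0^{-3/2}\bar{\beta}^{1/2}e^{A(\bar{\beta}-\alpha_0)}$; multiplying by $K(\beta,\bar{\beta})$ yields the bound $\lesssim e^{-c\varepsilon}\beta$, whose ratio to $\tilde Z\sim\beta$ can be made arbitrarily small by taking $c$ large.

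Combining these bounds with an analog of Lemma~\ref{ind} (individual crossed blocks vanish in $\mathrm{LMB}_\ast$ by direct computation) and an analog of Proposition~\ref{prop:contsubleading} (primaries with bounded $\bar{h}$ give no crossed contribution), the crossed contribution from primaries with $h\in(A-\varepsilon,A+\varepsilon)$ must agree with $\tilde Z$ up to a vanishing error for every $\varepsilon>0$, forcing an infinite family of primaries in this window; since $h$ is bounded in the window, their $\bar{h}_i\to\infty$. The symmetric statement with $h\leftrightarrow\bar{h}$ follows by swapping $\beta\leftrightarrow\bar{\beta}$. The main technical obstacle is the low-$h$ step: no approximate-factorization lemma is needed because modular blocks factorize exactly, but the reference parameter $\alpha_0$ must be chosen so that it remains in the $\mathrm{LMB}_\ast$ regime while the suppression factor $e^{-c\varepsilon}$ still beats the polynomial growth in the modular kernel.
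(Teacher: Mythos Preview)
Your proposal is correct and follows essentially the same route as the paper: the direct-channel estimate via the key lemma, the twist-gap bound from the crossed vacuum, and the high/low-$h$ crossed analysis match Sections~\ref{section:modulardirect}--\ref{section:modularend} almost step by step, including your choice of a shifted reference point $\alpha_0=\alpha+c$ in the low-$h$ step (the paper instead fixes $\beta'$ by $\mathfrak{b}(\beta',\bar\beta)=0$, but this is the same idea). One small imprecision: your $\mathrm{LMB}_\ast$ condition $\exp(4\pi^2A/\beta-\bar\beta\,\tau_{\rm gap}/2)\to 0$ is slightly too weak to conclude $\tilde Z\sim\beta$, since the non-vacuum bound carries an extra $\beta^{-1/2}$; the paper's $\mathrm{M}_\ast$ limit~\eqref{M*} includes the $\frac{3}{2T}\log\beta$ term in $\mathfrak{b}$ precisely to absorb this, and you will need the same strengthening both here and when checking that your reference point $(4\pi^2/\alpha_0,\bar\beta)$ stays in the regime.
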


Note the close analogy with Theorem \ref{theorem:DLCtwist}. As there, part (b)
is stronger than part (a), but in the course of the proof it will be important
to establish part (a) first.

This result was previously found in
{\cite{Collier:2016cls,Afkhami-Jeddi:2017idc,Benjamin:2019stq}} based on
intuitive non-rigorous arguments, as we review in Section \ref{int-mod} below.

\begin{remark}
	Condition $c > 1$ is important: unitary minimal models violate this
	assumption and the conclusions of the theorem do not hold. Condition
	$\tau_{\rm gap} > 0$ is also important. To see this consider several copies of
	the unitary minimal models so that the total central charge is above 1. The
	conclusions of the theorem do not hold. It's easy to see that in this case
	there are nontrivial primaries having twist zero.
	
	Unitarity and normalizable vacuum will play an important role in our proof,
	yet we don't know of explicit examples showing that if one drops one of
	these conditions, the theorem does not hold. On the contrary, we know some
	examples which do violate these conditions, but the conclusions still hold.
	This is the Liouville partition function {\cite{Seiberg:1990eb}} which does
	not contain normalizable vacuum, and the Maloney-Witten partition function
	{\cite{Maloney:2007ud,Benjamin:2019stq}} which violates unitarity since it
	has a partly negative spectral density.
\end{remark}

\subsection{Sketch of the proof }

To prove this theorem we will probe the partition function in the limit
$\beta \rightarrow 0$, $\bar{\beta} \rightarrow \infty$.
Since this is a double limit, a natural question is whether the two limits are
taken independently or not. We will find it necessary to take the two limits
not independently. Let us denote $T = \tau_{\rm gap} / 2$. Define the quantity
\begin{equation}
	\mathfrak{b}=\mathfrak{b} (\beta, \bar{\beta}) = \bar{\beta} - \frac{4 \pi^2
		A}{T \beta} + \frac{3}{2 T} \log \beta . \label{bquant}
\end{equation}
Note that $\mathfrak{b} (\bar{\beta}, \beta) < \bar{\beta}$ as $\beta
\rightarrow 0$. We will have to require that not only $\beta \rightarrow 0$,
$\bar{\beta} \rightarrow \infty$, but also that $\bar{\beta} \rightarrow
\infty$ sufficiently fast so that $\mathfrak{b} \rightarrow \infty$. We call
this the ``modular double lightcone limit'', and denote it M$_{\ast}$:
\begin{equation}
	\text{M$_{\ast}$ limit:}\quad \beta \rightarrow 0,
	\quad \bar{\beta} \rightarrow \infty,
	\quad \mathfrak{b} \rightarrow \infty . \label{M*}
\end{equation}
The origin of the quantity $\mathfrak{b} (\bar{\beta}, \beta)$ and of the
$\log \beta$ term in it will become clear in Section
\ref{section:modulardirect}. The M$_{\ast}$ limit is closely analogous to the
DLC$_{\ast}$ limit from Section \ref{LC}, as will be discussed in detail in
Section \ref{GF} below.

The proof consists in several steps:
\begin{enumerate}
	\item \textit{Direct channel} (Section \ref{section:modulardirect}). We
	first study expansion (\ref{def:Zvir}) of the reduced partition function and
	we show that the vacuum term (the first term in (\ref{def:Zvir})) dominates
	in the M$_{\ast}$ limit (\ref{M*}). Namely, we show
	\begin{equation}
		\underset{\text{M}_{\ast}}{\lim} \frac{\tilde{Z} (\beta,
			\bar{\beta})}{\beta} = 1. \label{DLCpart}
	\end{equation}
	Here $\beta$ is the asymptotic behavior of the vacuum term. We will thus we show that
	the total contribution from non-vacuum states
	in this channel is suppressed compared to the vacuum in the M$_{\ast}$
	limit.
	
	\item \textit{Twist upper bound} (Section \ref{gapupper}) Using a slightly
	weaker limit $\beta \rightarrow 0$, $\bar{\beta} \rightarrow \infty$ with
	$\mathfrak{b}= 0$ (see (\ref{bquant})), we arrive at the part (a) of Theorem
	\ref{boundtwist}. We are going to use this bound in next steps.
	
	\item \textit{Crossed channel: individual blocks} (Section
	\ref{section:modularindividual}) Here we show that in the $\text{M}_{\ast}$
	limit, the non-vacuum contribution is indeed subleading with respect to the
	vacuum, which goes as $\beta$.
	
	\item \textit{Crossed channel: high h} (Section
	\ref{section:modularcrossed-high}). Here we turn our attention to the
	crossed channel, and show that, for any $\varepsilon > 0$, the total
	contribution of operators with \ $h \geqslant A + \varepsilon$ vanishes in
	the M$_{\ast}$ limit. This is analogous to high twist sector analysis in
	Section \ref{crossed-high}.
	
	\item \textit{Crossed channel: low h }(Section
	\ref{modularcrossed-low}). We then consider low $h$ and show that the same
	is true for them: for any $\varepsilon > 0$, the total contribution of
	operators with $h \leqslant A - \varepsilon$ also vanishes in the M$_{\ast}$
	limit. This is analogous to low twist sector analysis in Section
	\ref{crossed-low}.
\end{enumerate}
The proof is then completed in Section \ref{section:modularend}.

\subsection{Detailed proof}

\subsubsection{Key lemma}

Much of the arguments below will depend on the following key lemma, analogous
to Prop. \ref{lemmahigh} from the lightcone bootstrap. For any positive
$h_{\ast}$ we define the partial sums $\tilde{Z}_{h \geqslant h_{\ast}}$ and
$\tilde{Z}_{\bar{h} \geqslant h_{\ast}}$ by
\begin{eqnarray}
	\tilde{Z}_{h \geqslant h_{\ast}} (\beta, \bar{\beta}) & = & \sum_{h
		\geqslant h_{\ast}} \sum_{\bar{h}} e^{- h \beta - \bar{h}  \bar{\beta}}, 
	\label{hgtr}\\
	\tilde{Z}_{\bar{h} \geqslant h_{\ast}} (\beta, \bar{\beta}) & = & \sum_h
	\sum_{\bar{h} \geqslant h_{\ast}} e^{- h \beta - \bar{h}  \bar{\beta}} . 
\end{eqnarray}
\begin{lemma}
	\label{keylemma}For any $\beta_0 \in (0, \infty)$, there exists a finite
	$C $ (which depends on the CFT and on $\beta_0, h_*$) such that the following bounds hold in their shown
	respective ranges of $\beta, \bar{\beta}$:
	\begin{eqnarray}
		\tilde{Z}_{h \geqslant h_{\ast}} (\beta, \bar{\beta}) & \leqslant & C
		 \bar{\beta}^{- 1 / 2} e^{4 \pi^2 A / \bar{\beta} - h_{\ast}
			\beta} \qquad (\beta \geqslant \beta_0 \geqslant \bar{\beta}), 
		\label{keylemma1}\\
		\tilde{Z}_{\bar{h} \geqslant h_{\ast}} (\beta, \bar{\beta}) & \leqslant &
		C  \beta^{- 1 / 2} e^{4 \pi^2 A / \beta - h_{\ast}
			\bar{\beta}} \qquad (\bar{\beta} \geqslant \beta_0 \geqslant \beta) . 
		\label{keylemma2}
	\end{eqnarray}
\end{lemma}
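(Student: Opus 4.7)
The plan is to mirror the two-step strategy of Proposition \ref{lemmahigh}: first trade the restriction $h \geq h_{\ast}$ for an overall exponential suppression factor, and then use modular invariance to control the resulting boundary value of $\tilde Z$. For the first step, the analogue of Lemma \ref{lemma:CBineq} is the elementary bound
\[
e^{-h\beta} \leq e^{-h_{\ast}(\beta-\beta_0)}\, e^{-h\beta_0} \qquad (h \geq h_{\ast},\ \beta \geq \beta_0),
\]
obtained by writing $e^{-h\beta} = e^{-h_{\ast}\beta}\, e^{-(h-h_{\ast})\beta}$ and using $\beta \geq \beta_0$ on the second factor. Summing over $(h,\bar h)$ with $h \geq h_{\ast}$ and using positivity of all terms (so that adding back the vacuum and the unconstrained $h<h_\ast$ terms only enlarges the right-hand side) gives
\[
\tilde Z_{h \geq h_{\ast}}(\beta, \bar\beta) \leq e^{-h_{\ast}(\beta-\beta_0)}\, \tilde Z(\beta_0, \bar\beta) \qquad (\beta \geq \beta_0).
\]

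For the second step I bound $\tilde Z(\beta_0, \bar\beta)$ uniformly in $\bar\beta \in (0,\beta_0]$ by applying the modular relation (\ref{modulartransformation}): $\tilde Z(\beta_0, \bar\beta) = K(\beta_0, \bar\beta)\, \tilde Z(4\pi^2/\beta_0, 4\pi^2/\bar\beta)$. Since $\bar\beta \leq \beta_0$ implies $4\pi^2/\bar\beta \geq 4\pi^2/\beta_0$, each term of the non-vacuum sum satisfies $e^{-\bar h\cdot 4\pi^2/\bar\beta} \leq e^{-\bar h\cdot 4\pi^2/\beta_0}$, so the non-vacuum part is bounded by $\tilde Z_{\rm sum}(4\pi^2/\beta_0, 4\pi^2/\beta_0)$, a finite number by the assumed convergence of the partition function at a fixed positive argument. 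The vacuum piece $(1-e^{-4\pi^2/\beta_0})(1-e^{-4\pi^2/\bar\beta})$ is bounded by $1$. Reading off $K$ from (\ref{def:modularK}), its dependence on $\bar\beta$ at fixed $\beta_0$ is $\bar\beta^{-1/2} e^{4\pi^2 A/\bar\beta}$ times a factor bounded on $(0,\beta_0]$. Hence $\tilde Z(\beta_0, \bar\beta) \leq C' \bar\beta^{-1/2} e^{4\pi^2 A/\bar\beta}$ for a constant $C'=C'(\beta_0)$.

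Combining the two steps and absorbing $C' e^{h_{\ast}\beta_0}$ into $C$ yields (\ref{keylemma1}). The bound (\ref{keylemma2}) follows by the same argument with $\beta \leftrightarrow \bar\beta$ swapped throughout: the trivial exponential bound, the modular relation, and the kernel $K$ are all symmetric under this swap. I do not foresee a serious obstacle, since the logic is essentially a direct translation of Proposition \ref{lemmahigh}---arguably cleaner because the elementary exponential bound in the first step is more transparent than the conformal-block inequality of Lemma \ref{lemma:CBineq}, and the role of Lemma \ref{lemma:gbound1} is now played by the single explicit factor $K(\beta_0,\bar\beta)$. The only mild subtlety is separating the non-exponential vacuum piece of $\tilde Z$ from the convergent sum when invoking monotonicity, which is resolved by the trivial $\leq 1$ estimate on the vacuum term.
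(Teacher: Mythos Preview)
Your proposal is correct and follows essentially the same approach as the paper's proof: the same exponential trade-off to replace the restriction $h\geq h_\ast$ by a prefactor, the same bounding of the partial sum by the full $\tilde Z$ at a shifted argument, the same application of modular invariance followed by a monotonicity-plus-vacuum-$\leq 1$ estimate on $\tilde Z(4\pi^2/\beta_0,4\pi^2/\bar\beta)$, and the same extraction of the $\bar\beta^{-1/2}e^{4\pi^2 A/\bar\beta}$ factor from $K$. The only cosmetic difference is that you prove \eqref{keylemma1} directly whereas the paper writes out \eqref{keylemma2}; both invoke the $\beta\leftrightarrow\bar\beta$ symmetry for the other bound.
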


\begin{proof}
	We will show the second bound. The first bound is fully analogous with the
	interchange of $h, \beta$ with $\bar{h}, \bar{\beta}$. We have (see
	explanations below)
	\begin{eqnarray}
			\tilde{Z}_{\bar{h} \geqslant h_{\ast}} (\beta, \bar{\beta}) & \leqslant
			& e^{- h_{\ast} (\bar{\beta} - \beta_0)} \sum_h \sum_{\bar{h} \geqslant
				h_{\ast}} e^{- h \beta - \bar{h} \beta_0}\nonumber\\
			& \leqslant & e^{- h_{\ast} (\bar{\beta} - \beta_0)} \tilde{Z} (\beta,
			\beta_0)\nonumber\\
			& = & e^{- h_{\ast} (\bar{\beta} - \beta_0)} K (\beta, \beta_0) 
			\tilde{Z} \left( \frac{4 \pi^2}{\beta}, \frac{4 \pi^2}{\beta_0} \right)
			. \label{modular:directnonvac}
		\end{eqnarray} 
	Here in the first line we used that $\bar{h} \geqslant h_{\ast}$ and
	$\bar{\beta} \geqslant \beta_0$. In the second line we bounded
	$\sum_h \sum_{\bar{h} \geqslant h_{\ast}} e^{- h \beta - \bar{h} \beta_0}$ by
	the full $\tilde{Z} (\beta, \beta_0)$. In the third line we used the modular
	invariance.
	
	We now use Eq.\,(\ref{def:Zvir}) for $\tilde{Z} \left( \frac{4
		\pi^2}{\beta}, \frac{4 \pi^2}{\beta_0} \right)$:
	\begin{equation}
		\tilde{Z} \left( \frac{4 \pi^2}{\beta}, \frac{4 \pi^2}{\beta_0} \right) =
		(1 - e^{- 4 \pi^2 / \beta}) (1 - e^{- 4 \pi^2 / \beta_0}) + \sum_{h,
			\bar{h}} e^{- 4 \pi^2 h / \beta - 4 \pi^2 \bar{h} / \beta_0} .
	\end{equation}
	We estimate the first term by 1 and all the other terms by their value at
	$\beta = \beta_0$, using monotonicity and $\beta \leqslant \beta_0$:

	\begin{equation}
		\tilde{Z} \left( \frac{4 \pi^2}{\beta}, \frac{4 \pi^2}{\beta_0} \right)
		\leqslant 1 + \sum_{h, \bar{h}} e^{- 4 \pi^2 h / \beta_0 - 4 \pi^2 \bar{h}
			/ \beta_0} \backassign C_1. \label{sqbracket}
	\end{equation}
	Since we are assuming that the partition function is finite for any $0 <
	\beta, \bar{\beta} < \infty$, the sum in r.h.s.~of {\eqref{sqbracket}} is
	finite, and $C_1 < \infty$ (this constant thus depends on the CFT and on $\beta_0$).
	
	Plugging {\eqref{sqbracket}} back into {\eqref{modular:directnonvac}}, we
	get
	\begin{eqnarray}
		\tilde{Z}_{\bar{h} \geqslant h_{\ast}} (\beta, \bar{\beta}) & \leqslant &
		C_1  e^{- h_{\ast} (\bar{\beta} - \beta_0)} K (\beta, \beta_0)\nonumber\\
		& \equiv & C_1 e^{- h_{\ast} (\bar{\beta} - \beta_0)}
		\sqrt{\frac{4 \pi^2}{\beta \beta_0}} e^{A (4 \pi^2 / \beta + 4 \pi^2 /
			\beta_0 - \beta - \beta_0)}\nonumber\\
		& \leqslant & C \beta^{- 1 / 2} e^{4 \pi^2 A / \beta -
			h_{\ast} \bar{\beta}},
	\end{eqnarray}
	where $C = C_1  e^{- (A - h_{\ast}) \beta_0} e^{4
		\pi^2 A / \beta_0} \sqrt{\frac{4 \pi^2}{\beta_0}}$ and we used $e^{- A
		\beta} \leqslant 1$ when passing to the third line. The lemma is proved.
\end{proof}

\subsubsection{Direct channel}\label{section:modulardirect}

Here we will prove Eq.\,{\eqref{DLCpart}}. The vacuum contribution in
$\tilde{Z}$ goes as $\beta$ in the M$_{\ast}$ limit. We need to show that the
total non-vacuum contribution is subleading compared to the vacuum. Recall that
we denoted $T = \tau_{\rm gap} / 2$, and that we have $h \geqslant T, \bar{h}
\geqslant T$ for all non-vacuum terms.

We choose and fix a $\beta_0 \in (0, \infty)$. Its precise value does not
matter; e.g. $\beta_0 = 1$ would work. Since in the M$_{\ast}$ limit we have
$\beta \rightarrow 0$, $\bar{\beta} \rightarrow \infty$, eventually we will
have that
\begin{equation}
	\bar{\beta} \geqslant \beta_0 \geqslant \beta . \label{regime}
\end{equation}
Note that the non-vacuum contribution can be written as
\begin{equation}
	[\tilde{Z} (\beta, \bar{\beta})]_{\text{nonvac}} = \tilde{Z}_{\bar{h}
		\geqslant T} (\beta, \bar{\beta}) .
\end{equation}
We can therefore use the estimate {\eqref{keylemma2}} from Lemma
\ref{keylemma} with $h_{\ast} = T$. We have:
\begin{eqnarray}
	{}[\tilde{Z} (\beta, \bar{\beta})]_{\text{nonvac}} & \leqslant & C
	 \beta^{- 1 / 2} e^{4 \pi^2 A / \beta - T \bar{\beta}} \equiv C \beta\, e^{- T\mathfrak{b}} \label{esti:modulardirect}, 
\end{eqnarray}
where we used the definition \eqref{bquant} of $\mathfrak{b}$. Since $\mathfrak{b}
\rightarrow \infty$ in the $\text{M}_{\ast}$ limit, we conclude that the
non-vacuum contribution is indeed subleading with respect to the vacuum, which
goes as $\beta$.

\subsubsection{Twist gap upper bound}\label{gapupper}

In this section we will establish an upper bound on the twist gap, namely that
$\tau_{\rm gap} \leqslant \frac{c - 1}{12}$, or $T \leqslant A$ in our notation.
This is Theorem \ref{boundtwist}(a), the modular-bootstrap analogue of Theorem
\ref{theorem:DLCtwist}(a). An argument for this twist gap upper bound, due to
Tom Hartman, was presented in the modular bootstrap literature
{\cite{Collier:2016cls,Afkhami-Jeddi:2017idc,Benjamin:2019stq}}. Those
considerations are not rigorous, although the argument of Ref.\,{\cite{Benjamin:2019stq}} can be made rigorous, see Section \ref{int-mod}
below. Our proof here is different from these papers, but similar to the
proof of Theorem \ref{theorem:DLCtwist}(a), see the end of Section
\ref{direct}.

\begin{proof}[Proof of Theorem \ref{boundtwist}(a)] To prove that $T\leqslant A$, it turns out useful to work not in the M$_{\ast}$ limit but in the limit
	\begin{equation}
		\beta \rightarrow 0, \quad \nospace \bar{\beta} = \frac{4 \pi^2 A}{T
			\beta} - \frac{3}{2 T} \log \beta \rightarrow \infty, \label{newlimit}
	\end{equation}
	We see that this limit corresponds to $\mathfrak{b}= 0$, while the M$_{\ast}$ limit has $\mathfrak{b}\rightarrow\infty$.\footnote{The limit $\beta \rightarrow 0$, $\beta \rightarrow \infty$, $\mathfrak{b}= C$ with
		a nonzero $C$ would also work for the purpose of showing $T\leqslant A$.} 
	
	Consider how both sides of
	(\ref{modulartransformation}) behave in the limit \eqref{newlimit}. We have shown in the previous section that $\tilde{Z} (\beta, \bar{\beta})
	= \beta + o (\beta)$ in the M$_{\ast}$ limit. Let us examine what changes in
	the new limit {\eqref{newlimit}}. Clearly, the vacuum contribution still
	goes as $\beta + o (\beta)$. For the non-vacuum contribution, we use the
	estimate {\eqref{esti:modulardirect}}. In the previous section we used that
	estimate for $\mathfrak{b} \rightarrow \infty$, but here we use it for
	$\mathfrak{b}= 0$. That estimate implies that for $\mathfrak{b}= 0$ the
	non-vacuum contribution is $O (\beta)$. Thus we conclude that:
	\begin{equation}
		\tilde{Z} (\beta, \bar{\beta}) = O (\beta) \text{\qquad in the limit
			{\eqref{newlimit}}.} \label{boundnew}
	\end{equation}
	Consider now the crossed channel, i.e.\,r.h.s. of Eq.\,(\ref{modulartransformation}). Keeping only the vacuum contribution, we have
	the lower bound:
	\begin{eqnarray}
		\tilde{Z} (\beta, \bar{\beta}) & \geqslant & K (\beta, \bar{\beta}) (1 -
		e^{- 4 \pi^2 / \beta}) (1 - e^{- 4 \pi^2 / \bar{\beta}}) \nonumber\\
		& \equiv & \sqrt{\frac{4 \pi^2}{\beta \bar{\beta}}} e^{4 \pi^2 A /
			\beta} e^{4 \pi^2 A / \bar{\beta}} e^{- A \beta} e^{- A \bar{\beta}} (1 -
		e^{- 4 \pi^2 / \beta}) (1 - e^{- 4 \pi^2 / \bar{\beta}}) . 
		\label{lowercrossed}
	\end{eqnarray}
	Let us examine how this expression goes in the limit {\eqref{newlimit}}.
	Using $e^{4 \pi^2 A / \bar{\beta}} \rightarrow 1$, $e^{- A \beta}
	\rightarrow 1$, $1 - e^{- 4 \pi^2 / \beta} \rightarrow 1$, $1 - e^{- 4 \pi^2
		/ \bar{\beta}} = 4 \pi^2 / \bar{\beta} + O (\bar{\beta}^{- 2})$, the product
	of all those factors can be bounded from below by (any number below 1 would
	work in place of 1/2)
	\begin{equation}
		(1/2) \times 4 \pi^2 / \bar{\beta} .
	\end{equation}
	Furthermore, we eliminate the factor $e^{4 \pi^2 A / \beta}$ in
	{\eqref{lowercrossed}} by the relation
	\begin{equation}
		\beta^{- 3 / 2} e^{4 \pi^2 A / \beta - T \bar{\beta}} = e^{-
			T\mathfrak{b}}, \label{brel}
	\end{equation}
	and using $\mathfrak{b}= 0$. Collecting the factors, we obtain the lower
	bound:
	\begin{equation}
		\tilde{Z} (\beta, \bar{\beta}) \geqslant \beta \sqrt{\frac{4 \pi^2}{
				\bar{\beta}}} \frac{2 \pi^2}{\bar{\beta}} e^{(T - A) \bar{\beta}} .
	\end{equation}
	From this bound we see that $\tilde{Z} (\beta, \bar{\beta}) / \beta$ blows
	up in the considered limit for $T > A$. Since we know from
	{\eqref{boundnew}} that $\tilde{Z} (\beta, \bar{\beta}) / \beta$ should be
	bounded, we conclude that $T \leqslant A$.
\end{proof}

\subsubsection{Crossed channel: individual
	blocks}\label{section:modularindividual}

From this section on, we will study the crossed channel contributions in the
$M_{\ast}$ limit. We would like to see how their asymptotics compare to what
we already know from Section \ref{section:modulardirect}, namely that
$\tilde{Z} (\beta, \bar{\beta}) \sim \beta$ in this limit.

By modular invariance, the sum of all crossed channel contributions should combine to give
$\tilde{Z} (\beta, \bar{\beta}) \sim \beta$. In this section we will show that any
individual block contribute $o (\beta)$. This will imply in particular that there must be infinitely many
blocks present in the reduced partition function.

We use the simple fact that each block is bounded by 1:
\begin{eqnarray}
	\text{vacuum} : &  & (1 - e^{- \beta}) (1 - e^{- \bar{\beta}}) \leqslant 1,
	\nonumber\\
	\text{non-vacuum} : &  & e^{- h \beta - \bar{h} \bar{\beta}} \leqslant 1. 
\end{eqnarray}
So to see that each single block contributes $o (\beta)$ to the r.h.s.~of
(\ref{modulartransformation}), it suffices to check that
\begin{equation}
	K (\beta, \bar{\beta}) = o (\beta)\qquad \text{in the M$_{\ast}$ limit.}
	\label{needK}
\end{equation}
Using $e^{4 \pi^2 A / \bar{\beta}} \rightarrow 1$, $e^{- A \beta} \rightarrow
1$, we estimate the product of these two factors in $K(\beta,\bar\beta)$, Eq.~\eqref{def:modularK}, from above by 2. (Any number
above 1 would work.) We use the relation {\eqref{brel}} to eliminate the
factor $e^{4 \pi^2 A/ \beta}$ in $K(\beta,\bar\beta)$. We obtain:
\begin{equation}
	K (\beta, \bar{\beta}) \leqslant \beta \sqrt{\frac{4 \pi^2}{
			\bar{\beta}}} 2\, e^{(T - A) \bar{\beta}} \leqslant 2 \beta \sqrt{\frac{4
			\pi^2}{ \bar{\beta}}} .
\end{equation}
where we used $T \leqslant A$ (part (a) of the theorem, proved in Section
\ref{gapupper}). Eq.~{\eqref{needK}} follows.

\subsubsection{Crossed channel: high $h$}\label{section:modularcrossed-high}

In this section we will show that, for any $\varepsilon > 0$, twists $h
\geqslant A + \varepsilon$ give a subleading contribution in the crossed
channel. This means that:
\begin{equation}
	K (\beta, \bar{\beta}) \tilde{Z}_{h \geqslant h_{\ast}} \left( \frac{4
		\pi^2}{\beta}, \frac{4 \pi^2}{\bar{\beta}} \right) \label{modular:hightwist}
	= o (\beta) \quad \text{in the M$_{\ast}$ limit,}
\end{equation}
for $h_{\ast} = A + \varepsilon$, where $\tilde{Z}_{h \geqslant h_{\ast}}$ is
defined as in {\eqref{hgtr}}.

As in Section \ref{section:modulardirect} we will work in the regime
$\bar{\beta} \geqslant \beta_0 \geqslant \beta$. Note that
\begin{equation}
	\frac{4 \pi^2}{\beta} \geqslant \frac{4 \pi^2}{\beta_0} \geqslant \frac{4
		\pi^2}{\bar{\beta}} .
\end{equation}
We are therefore in a position to bound $\tilde{Z}_{h \geqslant h_{\ast}}
\left( \frac{4 \pi^2}{\beta}, \frac{4 \pi^2}{\bar{\beta}} \right)$ using Lemma
\ref{keylemma}, Eq.\,{\eqref{keylemma1}}. We obtain:
\begin{equation}
	\tilde{Z}_{h \geqslant h_{\ast}} \left( \frac{4 \pi^2}{\beta}, \frac{4
		\pi^2}{\bar{\beta}} \right) \leqslant C' \bar{\beta}^{1 / 2} e^{A
		\bar{\beta} - 4 \pi^2 h_{\ast} / \beta},
\end{equation}
where $C' = C (4 \pi^2 / \beta_0) / 2 \pi$.

Using this bound in the l.h.s.~of {\eqref{modular:hightwist}}, and using the
definition of $K (\beta, \bar{\beta})$, we obtain
\begin{eqnarray}
	\text{l.h.s.~of {\eqref{modular:hightwist}}} & \leqslant & C' \sqrt{\frac{4
			\pi^2}{\beta}} e^{(A - h_{\ast}) 4 \pi^2 / \beta} e^{A (4 \pi^2 /
		\bar{\beta} - \beta)} . 
\end{eqnarray}
The important factor here is $e^{(A - h_{\ast}) 4 \pi^2 / \beta} \equiv e^{-
	\varepsilon \, 4 \pi^2 / \beta}$. Because of this factor the r.h.s.~of
{\eqref{modular:hightwist}} vanishes exponentially fast as $\beta \rightarrow
0$. In particular Eq.~{\eqref{modular:hightwist}} holds. Note that we did not
use the full strength of the M$_{\ast}$ limit in this section. We only needed
$\beta \rightarrow 0$, $\bar{\beta} \geqslant \beta_0$.

\subsubsection{Crossed channel: low $h$}\label{modularcrossed-low}

In this section we will show that the total contribution of twists $h
\leqslant t_{\ast} := A - \varepsilon$ is subleading in the crossed channel for
any $\varepsilon > 0$. We consider the partial sum

\begin{equation}
	\tilde{Z}_{h \leqslant t_{\ast}} (\beta, \bar{\beta}) \assign \sum_{h
		\leqslant t_{\ast}} \sum_{\bar{h}} e^{- \beta h - \bar{\beta} \bar{h}}
	. \label{lowblocks2d}
\end{equation}
For convenience this does not include the vacuum contribution. We know anyway
from Section \ref{section:modularindividual} that any individual contribution,
in particular the vacuum, is subleading. The point here is to show this for
the sum {\eqref{lowblocks2d}}, which includes potentially infinitely many
blocks.

The first idea will be to estimate $\tilde{Z}_{h \leqslant t_{\ast}} (\beta,
\bar{\beta})$ in terms of $\tilde{Z} (\beta', \bar{\beta})$. The precise
value of $\beta'$ will be fixed below. We have, for $\beta' \geqslant \beta$,
\begin{equation}
	e^{- \beta h} \leqslant e^{- \beta' h + h (\beta' - \beta)} \leqslant
	e^{t_{\ast} (\beta' - \beta)} e^{- \beta' h}\quad
	(h \leqslant t_{\ast}) .
\end{equation}
Multiplying this by $e^{- \bar{\beta} \bar{h}}$ and summing over all nonvacuum
contributions with $h \leqslant t_{\ast}$, we get
\begin{equation}
	\tilde{Z}_{h \leqslant t_{\ast}} (\beta, \bar{\beta}) \leqslant e^{t_{\ast}
		(\beta' - \beta)} \tilde{Z}_{h \leqslant t_{\ast}} (\beta', \bar{\beta})
	\leqslant e^{t_{\ast} (\beta' - \beta)} \tilde{Z} (\beta', \bar{\beta})
	\qquad (0 < \beta \leqslant \beta'),
\end{equation}
where in the last step we bounded $\tilde{Z}_{h \leqslant t_{\ast}}$ by
$\tilde{Z}$. Moving to the crossed channel variables $\beta \rightarrow 4
\pi^2 / \beta$, $\bar{\beta} \rightarrow 4 \pi^2 / \bar{\beta}$, we get
\begin{equation}
	\tilde{Z}_{h \leqslant t_{\ast}} \left( \frac{4 \pi^2}{\beta}, \frac{4
		\pi^2}{\bar{\beta}} \right) \leqslant e^{t_{\ast} (4 \pi^2 / \beta' - 4
		\pi^2 / \beta)} \tilde{Z} \left( \frac{4 \pi^2}{\beta'}, \frac{4
		\pi^2}{\bar{\beta}} \right)\qquad (0 < \beta'
	\leqslant \beta) . \label{modular:lowtwistcross}
\end{equation}
We plug this into the r.h.s.~of the modular invariance equation
(\ref{modulartransformation}), and get for all $0 < \beta' \leqslant \beta$,
\begin{eqnarray}
	K (\beta, \bar{\beta}) \tilde{Z}_{h \leqslant t_{\ast}} \left( \frac{4
		\pi^2}{\beta}, \frac{4 \pi^2}{\bar{\beta}} \right) & \leqslant & K (\beta,
	\bar{\beta}) e^{t_{\ast} (4 \pi^2 / \beta' - 4 \pi^2 / \beta)} \tilde{Z}
	\left( \frac{4 \pi^2}{\beta'}, \frac{4 \pi^2}{\bar{\beta}} \right)
	\nonumber\\
	& = & \frac{K (\beta, \bar{\beta})}{K (\beta', \bar{\beta})} e^{t_{\ast} (4
		\pi^2 / \beta' - 4 \pi^2 / \beta)} \tilde{Z} (\beta', \bar{\beta})
	\nonumber\\
	& = & \sqrt{\frac{\beta'}{\beta}} e^{- \varepsilon (4 \pi^2 / \beta' - 4
		\pi^2 / \beta)} e^{- A (\beta - \beta')} \tilde{Z} (\beta', \bar{\beta})
	\nonumber\\
	& \leqslant & e^{- \varepsilon (4 \pi^2 / \beta' - 4 \pi^2 / \beta)}
	\tilde{Z} (\beta', \bar{\beta}),  \label{lt}
\end{eqnarray}
where we used (\ref{modular:lowtwistcross}) in the first line, modular
invariance in the second line, the explicit expression of $K$ (see
Eq.\,(\ref{def:modularK})) and $\varepsilon = A - t_{\ast}$ in the third line,
and $\beta' \leqslant \beta$ in the last line.

Now we come to the second idea. We will pick $\beta'$ somewhat smaller than
$\beta$. Then, the first factor in the r.h.s.~of {\eqref{lt}} is exponentially
suppressed. On the other hand $\tilde{Z} (\beta', \bar{\beta}) > \tilde{Z}
(\beta, \bar{\beta})$. We would like that the first effect overcomes the
second. For this we choose $\beta'$ from the condition
\begin{equation}
	\mathfrak{b} (\beta', \bar{\beta}) \equiv \bar{\beta} - f (\beta') = 0,
	\qquad f (\beta') \assign \frac{4 \pi^2 A}{T \beta'} - \frac{3}{2 T} \log
	\beta' . \label{choice}
\end{equation}
This condition guarantees that
\begin{equation}
	\tilde{Z} (\beta', \bar{\beta}) = O (\beta') . \label{Ztildebound}
\end{equation}
This is the bound {\eqref{boundnew}} from the proof of Theorem
\ref{boundtwist}(a), with $\beta'$ instead of $\beta$.

Compare next {\eqref{choice}} to the M$_{\ast}$ limit condition, which can be
written as
\begin{equation}
	\mathfrak{b} (\beta, \bar{\beta}) \equiv \bar{\beta} - f (\beta) \rightarrow
	\infty,
\end{equation}
Taking the difference of the two conditions, we have:
\begin{equation}
	\mathfrak{b} (\beta, \bar{\beta}) = f (\beta') - f (\beta) \rightarrow
	\infty \label{bdiff} .
\end{equation}
The function $f (\beta)$ is monotonically decreasing. This implies in
particular that eventually $\beta' \leqslant \beta$ as needed for
{\eqref{lt}}. Furthermore, we claim that
\begin{equation}
	\frac{1}{\beta'} - \frac{1}{\beta} \rightarrow \infty \label{betabeta} .
\end{equation}
Indeed assume the contrary, i.e.~that$\frac{1}{\beta'} \leqslant
\frac{1}{\beta} + C$ for some constant $C$. We have
\begin{eqnarray}
	f (\beta') - f (\beta) & = & \frac{4 \pi^2 A}{T} \left( \frac{1}{\beta'} -
	\frac{1}{\beta} \right) + \frac{3}{2 T} \left( \log \frac{1}{\beta'} - \log
	\frac{1}{\beta} \right) .  \label{minor-compl}
\end{eqnarray}
The first term is then bounded by $\frac{4 \pi^2 A}{T} C$, while the second
term by $\frac{3}{2 T} C \beta$,\footnote{We use here $\log (x + C) - \log
	(x) = \int_x^{x + C} \frac{d y}{y} \leqslant C / x$.} which goes to zero since
$\beta \rightarrow 0$. We reach a contradiction with {\eqref{bdiff}}, and thus
we conclude that {\eqref{betabeta}} must be true.

Now let us look back at the estimate (\ref{lt}). By {\eqref{Ztildebound}} and
$\beta' \leqslant \beta$ we have $\tilde{Z} (\beta', \bar{\beta}) = O (\beta)$
in that bound. The prefactor $e^{- \varepsilon (4 \pi^2 / \beta' - 4 \pi^2 /
	\beta)} \rightarrow 0$ by {\eqref{betabeta}}. Therefore we proved
\begin{equation}
	K (\beta, \bar{\beta}) \tilde{Z}_{h \leqslant t_{\ast}} \left( \frac{4
		\pi^2}{\beta}, \frac{4 \pi^2}{\bar{\beta}} \right) = o (\beta),
\end{equation}
i.e.\,that the contribution of low twists is subleading in the M$_{\ast}$
limit.

\subsubsection{End of proof of Theorem
	\ref{boundtwist}}\label{section:modularend}

So far we have shown that, in the M$_{\ast}$ limit,
\begin{itemize}
	\item the l.h.s.~of the modular invariance equation (direct channel) behaves as $\sim\beta$;
	
	\item in the r.h.s.~of the modular invariance equation, the total
	contribution of $h$ with $\vert h - A \vert \geqslant \varepsilon$ is $o (\beta)$,
	for any $\varepsilon > 0$;
\end{itemize}
We conclude that, for any $\varepsilon > 0$, \ the total crossed-channel
contribution of $h$ with $\vert h - A \vert < \varepsilon$ should go asymptotically as
$\beta$, to be consistent with the direct channel.

It remains to argue that there is an infinite sequence of
Virasoro primaries $\mathcal{O}_i$ with $h_i \rightarrow A$ and $\bar{h}_i
\rightarrow \infty$. This is quite obvious for the discrete spectrum, and a
bit more subtle for the continuum spectrum. We will give a complete, if a bit
pedantic, argument.

Pick any positive sequence monotonically going to zero $\varepsilon_i \searrow
0$. Consider the set $U_i$ of all primaries with $\vert h - A \vert < \varepsilon_i$.
As $i$ increases, the set $U_i$ is getting smaller. Yet, for any $i$, the
total crossed-channel contribution of this set goes as $\beta$ in the
M$_{\ast}$ limit. In particular, for any $i$, the set $U_i$ is nonempty.
Moreover it is infinite (since the crossed-channel contribution of any finite
set of primaries is subdominant, Section \ref{section:modularindividual}).

We will show below that $\bar{h}$ is unbounded in the set of primaries $U_i$. Assuming
that, the proof is finished as follows. Since $\bar{h}$ is unbounded, we can
pick a primary $\mathcal{O}_i =\mathcal{O}_{h_i, \bar{h}_i} \in U_i$ such that
$\bar{h}_i > i$, while $h_i$ satisfies $\vert h_i - A \vert < \varepsilon_i$, as for
all primaries in $U_i$. Repeating this construction for $i = 1, 2, 3, \ldots$,
we construct the needed sequence.

It remains to show that $\bar{h}$ is unbounded in $U_i$. If the spectrum is
discrete, this follows from the fact that $U_i$ is infinite. Indeed any
infinite set of primaries of bounded $h$ has then unbounded $\bar{h}$.

If the spectrum is continuous, unboundedness of $\bar{h}$ follows from the
following lemma. For any $\Lambda > 0$ define:
\begin{equation}
	\tilde{Z}_{\Lambda} (\beta, \bar{\beta}) \assign \int_{T \leqslant h,
		\bar{h} \leqslant \Lambda} d \mu (h, \bar{h}) e^{- \beta h - \bar{\beta}
		\bar{h}} .
\end{equation}
This is the contribution of all nontrivial primaries up to cutoff $\Lambda$ on
both $h$ and $\bar{h}$. We write this in the general notation allowing for both discrete and continuum
spectrum. The spectral measure $d \mu$ is assumed non-negative.

The following result shows that $\tilde{Z}_{\Lambda}$ gives a subleading
contribution in the crossed channel.

\begin{lemma}
	\label{contMLB}For any $\Lambda > 0$, the crossed-channel contribution of
	all $h, \bar{h} \leqslant \Lambda$ gives a subleading contribution in the
	M$_{\ast}$ limit, i.e.
	\begin{equation}
		K (\beta, \bar{\beta})  \tilde{Z}_{\Lambda} \left( \frac{4 \pi^2}{\beta},
		\frac{4 \pi^2}{\bar{\beta}} \right) = o (\beta) \quad \text{in the M$_{\ast}$ limit.}
	\end{equation}
	This is true for both discrete and continuous spectrum.
\end{lemma}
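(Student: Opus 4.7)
The plan is to exploit the twist gap $h\geqslant T$ directly in the crossed-channel variables $(4\pi^2/\beta,\, 4\pi^2/\bar{\beta})$, where $4\pi^2/\beta\to\infty$ supplies an exponential suppression $e^{-4\pi^2 T/\beta}$ for every block in $\tilde{Z}_\Lambda$, while $4\pi^2/\bar{\beta}\to 0$ renders the complementary factor harmless. Combined with the explicit form of $K(\beta,\bar{\beta})$ and the growth of $\bar{\beta}$ built into the M$_\ast$ limit through $\mathfrak{b}$, this will yield a bound that is exponentially small in $1/\beta$, hence \emph{a fortiori} $o(\beta)$.

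The first step is to observe that $M \assign \int_{T\leqslant h,\bar{h}\leqslant\Lambda} d\mu(h,\bar{h})$ is finite. This follows from the assumed finiteness of $\tilde{Z}(\beta_0,\beta_0)$ at any fixed $\beta_0>0$: on the integration domain $e^{-\beta_0(h+\bar{h})}\geqslant e^{-2\beta_0\Lambda}$, whence $M\leqslant e^{2\beta_0\Lambda}\tilde{Z}(\beta_0,\beta_0)<\infty$. Next, using $e^{-(4\pi^2/\beta)h}\leqslant e^{-4\pi^2 T/\beta}$ and $e^{-(4\pi^2/\bar{\beta})\bar{h}}\leqslant 1$ on the same domain,
\begin{equation}
\tilde{Z}_{\Lambda}(4\pi^2/\beta,\, 4\pi^2/\bar{\beta}) \leqslant M\, e^{-4\pi^2 T/\beta}.
\end{equation}
Multiplying by $K(\beta,\bar{\beta})$ from Eq.~\eqref{def:modularK} and substituting $\bar{\beta}=\mathfrak{b}+4\pi^2 A/(T\beta)-(3/(2T))\log\beta$ from Eq.~\eqref{bquant}, the combined exponent becomes
\begin{equation}
-\frac{4\pi^2(A^2-AT+T^2)}{T\beta}\;-\;A\mathfrak{b}\;+\;\frac{3A}{2T}\log\beta\;+\;\frac{4\pi^2 A}{\bar{\beta}}\;-\;A\beta.
\end{equation}
Since $A^2-AT+T^2 = (A-T/2)^2 + (3/4)T^2 > 0$, the first term drives the exponent to $-\infty$ at rate $c/\beta$ with $c>0$; this exponential decay dominates the algebraic prefactor $\sqrt{4\pi^2/(\beta\bar{\beta})}$ as well as the power $\beta^{3A/(2T)}$ coming from the $\log\beta$ term. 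Therefore $K(\beta,\bar{\beta})\,\tilde{Z}_{\Lambda}(4\pi^2/\beta,4\pi^2/\bar{\beta})$ not only is $o(\beta)$ but in fact decays faster than any power of $\beta$.

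The main obstacle is the apparent tension between the $e^{4\pi^2 A/\beta}$ factor inside $K$, which blows up as $\beta\to 0$, and the twist-gap suppression $e^{-4\pi^2 T/\beta}$, which is strictly weaker whenever $T<A$. The resolution is precisely that the M$_\ast$ limit forces $\bar{\beta}\gtrsim 4\pi^2 A/(T\beta)$ --- this is exactly why $\mathfrak{b}$ in Eq.~\eqref{bquant} was defined with the coefficient $4\pi^2 A/T$ in front of $1/\beta$ --- and the resulting $e^{-A\bar{\beta}}$ factor in $K$ supplies the decisive extra suppression $e^{-4\pi^2 A^2/(T\beta)}$. The elementary identity $A^2-AT+T^2>0$ converts what would otherwise be a marginal cancellation into genuine exponential decay, uniformly in $T\in(0,A]$. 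As a by-product, the argument only uses that $\mathfrak{b}$ is bounded from below, not that $\mathfrak{b}\to\infty$, so the conclusion holds under a slightly weaker hypothesis than the full M$_\ast$ limit; the $\log\beta$ correction in $\mathfrak{b}$ is what tightens the estimate enough to survive multiplication by the prefactor $\sqrt{4\pi^2/(\beta\bar{\beta})}$.
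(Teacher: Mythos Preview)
Your argument is correct but takes a different route from the paper. The paper's proof is more modular: having already established $K(\beta,\bar{\beta})=o(\beta)$ in Section~\ref{section:modularindividual} (which relied on the bound $T\leqslant A$ from part~(a)), it simply shows that $\tilde{Z}_\Lambda(4\pi^2/\beta,4\pi^2/\bar{\beta})$ is uniformly bounded by a constant, via $e^{-\beta h-\bar{\beta}\bar{h}}\leqslant e^{2\Lambda\beta_0}e^{-\beta_0(h+\bar{h})}$ for $h,\bar{h}\leqslant\Lambda$, so the product is automatically $o(\beta)$. This uses only the upper cutoff $\Lambda$, not the twist gap.

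You instead exploit the lower bound $h\geqslant T$ in the crossed-channel variables to extract an explicit $e^{-4\pi^2 T/\beta}$, then combine it with the $e^{-A\bar{\beta}}$ inside $K$ after substituting the relation~\eqref{bquant}. Your key identity $A^2-AT+T^2>0$ holds for all real $A,T$ (not both zero), so your proof does not invoke $T\leqslant A$ and yields the sharper conclusion of exponential decay in $1/\beta$. Your observation that only $\mathfrak{b}$ bounded below is required is also correct. One small expository point: the $\log\beta$ term in $\mathfrak{b}$ actually contributes a factor $\beta^{3A/(2T)}\to 0$, which \emph{helps} rather than needing to be absorbed; the prefactor $\sqrt{4\pi^2/(\beta\bar{\beta})}$ is already bounded once $\bar{\beta}\gtrsim C/\beta$, so your last sentence slightly overstates the role of that correction.
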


\begin{proof}
	The argument is a modification of that of Section
	\ref{section:modularindividual}. There, we already proved that that $K
	(\beta, \bar{\beta}) = o (\beta)$ (Eq.\,(\ref{needK})). It remains to show
	that $\tilde{Z}_{\Lambda} (\beta, \bar{\beta})$ is bounded by a constant uniformly
	in $\beta$ and $\bar{\beta}$. Pick an arbitrary $\beta_0 \in (0, \infty)$.
	We have, for arbitrary $\beta, \bar{\beta}\geqslant 0$, a generous estimate:
	\begin{equation}
		e^{- \beta h - \bar{\beta} \bar{h}} \leqslant e^{2 \Lambda \beta_0}
		e^{- \beta_0 (h + \bar{h})} \qquad (h \leqslant \Lambda, \bar{h}
		\leqslant \Lambda) .
	\end{equation}
	Plugging this into the definition of $\tilde{Z}_{\Lambda}$, we get the needed uniform bound
	\begin{equation}
		\tilde{Z}_{\Lambda} (\beta, \bar{\beta}) \leqslant e^{2 \Lambda
			\beta_0} \tilde{Z}_{\Lambda} (\beta_0, \beta_0) \leqslant e^{2 \Lambda
			\beta_0} \tilde{Z} (\beta_0, \beta_0) < \infty .
	\end{equation}
	where in the last step, we bound $\tilde{Z}_{\Lambda} (\beta_0, \beta_0)$ by
	the full partition function $\tilde{Z} (\beta_0, \beta_0)$. 
\end{proof}

Theorem \ref{boundtwist}(b) is now fully proved.

\subsection{Intuitive arguments}\label{int-mod}

In this section we will mention the intuitive arguments which were previously
used in {\cite{Collier:2016cls,Afkhami-Jeddi:2017idc,Benjamin:2019stq}} to argue for accumulation in $h$.

Following Ref.\,{\cite{Collier:2016cls}}, consider
Eq.\,(\ref{modulartransformation}) in the range $0 < \beta, \bar{\beta} <
\infty$, and take the limit $\beta \rightarrow 0$ and $\bar{\beta}
\rightarrow \infty$ (We remark that our $\beta$, $\bar{\beta} $ is what
{\cite{Collier:2016cls}} refers to as $4 \pi^2 / \beta$ and $4 \pi^2 /
\bar{\beta}$. They consider Eq.\,(\ref{Smod}) and do a character expansion,
which is entirely equivalent to studying Eq.\,(\ref{modulartransformation}).)
Keeping $\beta$ fixed, one observes that the vacuum dominates in direct
channel. This is true since each block is $O (e^{- \bar{\beta} T})$ for fixed
$\beta$. At this point, one hopes vacuum dominance survives
\begin{eqnarray}
	&  & \text{- when summed over all blocks;}  \label{hopeAmod}\\
	&  & \text{- not only for fixed $\beta$ but as $\beta \rightarrow 0$, i.e.
		in the modular lightcone limit}.  \label{hopeBmod}
\end{eqnarray}
We point out the similarity of the above hopes with (\ref{hopeA}) and
(\ref{hopeB}). Assuming the vacuum dominance in the direct channel in the modular
lightcone limit,\footnote{In {\cite{Collier:2016cls}}, the first equality in
	Eq.\,(2.11) is the statement of vacuum dominance, true for fixed
	$\beta$ for $\bar{\beta} \rightarrow 0$ (in their convention of $\beta$ and
	$\bar{\beta}$). But then it is assumed to be true in $\beta \rightarrow \infty$
	in the final step of the argument. } one asks how that can be reproduced from
the crossed channel perspective. Looking at the behavior of individual
blocks in crossed channel, one argues a finite number of primaries can never
reproduce it. One further argues that the vacuum dominance can be produced by
infinitely many states with $h$ accumulating to $A$.\footnote{In
	{\cite{Collier:2016cls}}, the second equality in Eq.\,(2.11), assuming it
	remains true in the double lightcone limit, tells us how the vacuum dominance can
	be achieved from the crossed channel perspective.} However, there is no
apriori reasoning behind ruling out the possibility of infinite number of
states with $h$ not accumulating to $A$ in the crossed channel, conspiring to
produce the vacuum dominance in the direct channel. A similar argument is
presented in the context of higher spin CFTs in {\cite{Afkhami-Jeddi:2017idc}}
which bears the weight of similar subtleties.

In Appendix B.1 in {\cite{Kusuki:2018wpa}}, in order to arrive at the twist
accumulation result, a spectral density at large spin is defined (Eq.\,(B.5)),
followed by a swappability condition as implicitly assumed while writing
Eq.\,(B.6). In particular, the limit in $\beta$ variable (in their notation:
again our $\beta, \bar{\beta} $ corresponds to $4 \pi^2 / \beta$, $4 \pi^2 /
\bar{\beta}$ of this paper) is swapped with integral over $\bar{h}$. It
is not at all clear whether such a large spin limit can be defined and a such
swappability condition holds.

Yet another proof appeared in {\cite{Benjamin:2019stq}}, Section 2, which
gives part (a) of Theorem \ref{boundtwist} (but not part (b)). This paper
follows a careful route and we believe that their argument can be turned into a rigorous proof
with a little extra work. For the readers familiar with their argument, the
subtlety lies in assuming that the error term estimate in their Eq.\,(2.13) survives after one multiplies both sides of Eq.\,(2.13) with the factor that they
mention, and takes derivative with respect to $\bar{\beta}$ in the regime
$\bar{\beta} \ll 1 / \beta$.\footnote{To see why this treatment is subtle we
	consider the following toy function $f (\beta) = \sin (e^{k \beta})$ with $k$
	fixed but arbitrarily large. $f (\beta)$ is an $O (1)$ function because it is
	bounded by 1. However, its derivative $f' (\beta) = k e^{k \beta} \cos (e^{k
		\beta})$ is not bounded, indeed it grows
	exponentially fast in $\beta$. This argument tells us that we may not differentiate
	after estimate. We believe the argument of {\cite{Benjamin:2019stq}} can be made rigorous
	after a more careful treatment basically because there are no factors like $f(\beta)$ appearing in the error terms of their Eq.\,(2.13).} To be completely
rigorous, one needs to take the derivative first and then estimate the error.
We remark that our proof does not follow this route and we do not need to take
any derivative.

\section{Similarities and a general formulation}\label{GF}

\subsection{Dictionary}

Above we pointed out several analogies between the arguments of Sections
\ref{LC} and \ref{LMB}. Here we would like to do this more systematically. In
Table \ref{dict} we give a dictionary of the necessary ingredients for the
lightcone problems on the CFT 4pt function and on the 2D CFT torus
partition function. We hope it will give readers a clearer mind why these two
problems are similar.

\renewcommand{\arraystretch}{1.2}
\begin{table}[h]
\centering
	\begin{tabular}{@{}lll@{}}\toprule
		& Lightcone bootstrap$\quad$ & Modular lightcone bootstrap\\
		\midrule
		Variables & $z, \bar{z}$ & $q, \bar{q}$\\
		$'$ map & $z' = 1 - z$ & $q' = e^{4 \pi^2 / \log (q)}$\\
		Crossing factor $K\quad$ & $K (z) = \left( \frac{z}{z' }
		\right)^{\Delta_{\phi}}$ & $K (q) = \sqrt{\frac{2 \pi}{- \log (q)}}
		\frac{1 - q'}{1 - q} \left( \frac{q}{q'} \right)^A$\\
		Index of $K$ & $2 \Delta_{\phi}$ & $A$($\equiv \frac{c - 1}{24}$)\\
		Building block & $g_{\tau, \ell} (z, \bar{z})$ & $\frac{q^h
			\bar{q}^{\bar{h}}}{(1 - q) (1 - \bar{q})}$\\
		Twist & $\tau$ (for both $z$ and $\bar{z}$) & $h$ (for $q$) and $\bar{h}$
		(for $\bar{q}$)\\
		
		Twist gap & $\tau_{\rm gap}$($\leqslant \tau$) & $T$($\leqslant h,
		\bar{h}$)\\
		\bottomrule
	\end{tabular}
	\caption{\label{dict}The dictionary.}
\end{table}

The first column of Table \ref{dict} refers to the 4pt function problem. We
give a half of the crossing factor, the full one being given by
\begin{equation}
	K (z, \bar{z}) = K (z) K (\bar{z}) .
\end{equation}
The index of $K$ is defined as the power of the $z / z'$ factor.

Consider the second column. Here the relevant variables are $q = e^{- \beta}$
and $\bar{q} = e^{- \bar{\beta}}$ which belong to $(0, 1)$, like $z$,
$\bar{z}$. The map $\beta' = 4 \pi^2 / \beta$ transforms in terms of $q$ to
$q' = e^{4 \pi^2 / \log (q)}$.

In Section \ref{LMB} we studied the limit $\beta \rightarrow 0$, $\bar{\beta}
\rightarrow \infty$, which corresponds to $q \rightarrow 1$, $\bar{q}
\rightarrow 0$, while in Section \ref{LC} we considered $z \rightarrow 0$,
$\bar{z} \rightarrow 1$. This interchange of $q, \bar{q}$ w.r.t.~$z, \bar{z}$
is immaterial.

To make the analogy maximal, we factor out $(1 - q) (1 - \bar{q})$ from the
reduced partition function $\eqref{def:Zvir}$. Now the vacuum
contribution becomes 1, like the unit operator contribution in the 4pt case. The nonvacuum building blocks become $\frac{q^h \bar{q}^{\bar{h}}}{(1 -
	q) (1 - \bar{q})}$, as given in Table \ref{dict}.

The building blocks in the modular case have a very simple explicit form. The
``approximate factorization'' which we needed in Section \ref{LC} in the form
of Lemma \ref{AFLz}, holds here exactly. This makes the low twist analysis
easier. For the same reason, Lemma \ref{contMLB}, needed for the continuum
spectrum case, is much easier than Proposition \ref{prop:contsubleading}.

On the other hand, the (half of) the crossing factor is not a pure power in
the modular bootstrap case. This leads to minor complications in the algebra.
Still, the most important factor in the full crossing kernel $K (q, \bar{q})$
is the power $(q / \bar{q}')^A$, since other factors behave logarithmically
more slowly in the $q \rightarrow 0$, $\bar{q} \rightarrow 1$ limit. This
factor determines the index $A$ which sets the twist accumulation point.

This analogy suggests that one should be able to give a completely general
formulation of the arguments of Sections \ref{LC} and \ref{LMB}, which would
imply both theorems proven in those sections. Roughly, this formulation should
proceed as follows. One considers a general crossing equation of the form
\begin{equation}
	f (x, \bar{x}) = K (x) K (\bar{x}) f (x', \bar{x}') \label{abs}
\end{equation}
where $x, \bar{x} \in (0, 1)$, $K (x)$ is a function satisfying
\begin{equation}
	K (x) K (x') = 1, \label{Kk}
\end{equation}
and $' : [0, 1] \rightarrow [0, 1]$ is a continuous map satisfying $0' = 1$,
$1' = 0$, $x'' = x$. The last condition means that the map is an involution.
About $K (x)$ one assumes that it behaves as $x^{\alpha}$ ($\alpha > 0$) as $x
\rightarrow 0$, up to corrections which vary slower than any power. Formally
this means that on any interval $0 < x < x_0 < 1$ and for any $\varepsilon >
0$, we should have a bound
\begin{equation}
	B^{- 1} x^{\alpha + \varepsilon} \leqslant K (x) \leqslant B x^{\alpha
		- \varepsilon} \label{Kbound},
\end{equation}
where $B = B (\varepsilon, x_0) > 0$. The function $f (x, \bar{x})$ is assumed
expandable in building blocks, which satisfy various conditions like the
conformal blocks or the Virasoro characters, including an approximate
factorization property, involving a ``twist'' parameter. Under these very
general conditions, one should be able to prove that there is a family of
building blocks whose ``twist'' accumulates to the crossing kernel index $\alpha$.

In the main text we did not follow this unification strategy for the following
reason. Both lightcone bootstrap and the modular lightcone bootstrap have some
simplifying features with respect to the most general case. These features
are: simple crossing kernel for the lightcone bootstrap case; simple building
blocks in the modular bootstrap case. The most general case will lose these
simplifying features, and will have a proof which is more complicated than
both subcases which are most important in applications.

\subsection{Mapping $q, \bar{q}$ to $z, \bar{z}$}

It is instructive to push the analogy of the previous section a bit further
and find a mapping from the $q$ variable in the previous section to a
$z$ variable so that the crossing transformation $q \rightarrow q'$ becomes
exactly $z \rightarrow 1 - z$. This is possible, and the mapping takes
the following form:
\begin{equation}
	q = \varphi (z),\quad \bar{q} = \varphi (\bar{z}),
	\quad \varphi (w) = e^{- 2 \pi \frac{K (1 - w)}{K
			(w)}}, \label{qzmap}
\end{equation}
where $K (w) \equiv \frac{\pi}{2}  {}_2 F_1 \left( \frac{1}{2}, \frac{1}{2} ;
1 ; w^2 \right)$ is the complete elliptic integral of the first kind. Then the
modular S transformation $\beta \rightarrow 4 \pi^2 / \beta, \bar{\beta}
\rightarrow 4 \pi^2 / \bar{\beta}$ coincides with $z \rightarrow 1 - z,
\bar{z} \rightarrow 1 - \bar{z}$.

This mapping has the following meaning (see e.g.
{\cite{Hartman:2019pcd}}). For a CFT $\mathcal{A}$ with central charge $c$, one
considers the $\mathbb{Z}_2$ symmetric product orbifold of the CFT i.e
$\mathcal{B}=(\mathcal{A} \times \mathcal{A})/\mathbb{Z}_2 $. In this CFT there
are primaries with $h = \bar{h} = c / 16 $ which permute the
two copies of $\mathcal{A}$, called twist fields \cite{Dixon:1986qv}.\footnote{``Twist'' here has
	nothing to do with $\tau = \Delta - \ell$ in the rest of the paper.} The
sphere 4pt function of twist fields in the $\mathcal{B}$ CFT computes the
partition function $Z (\beta, \bar{\beta})$ of the $\mathcal{A}$ CFT, where
$\beta, \bar{\beta}$ depend on the location of the operators on the sphere as in \eqref{qzmap}, up
to a prefactor:
\begin{equation}
	g (z, \bar{z}) = \left( \frac{z \bar{z}}{2^8 \sqrt{(1 - z) (1 - \bar{z})}}
	\right)^{c / 12} Z (\beta, \bar{\beta}) . \label{Ztog}
\end{equation}
Using this relation, the modular invariance condition is equivalent to
crossing symmetry:
\begin{equation}
	Z (\beta, \bar{\beta}) = Z (4 \pi^2 / \beta, 4 \pi^2 / \bar{\beta})
	\quad \Leftrightarrow \quad g (z, \bar{z}) = \left( \frac{z \bar{z}}{(1 - z) (1 - \bar{z})}
	\right)^{c / 8} g (1 - z, 1 - \bar{z}) .
\end{equation}
There is also one-to-one correspondence between the characters of
$\mathcal{A}$ and the $(\text{Virasoro} \times \text{Virasoro})
		/ \mathbb{Z}_2$ conformal blocks of external twist fields of $\mathcal{B}$.

We will discuss the twist accumulation problem for general Virasoro block
expansions in Section \ref{Virasoro}. For the present discussion, let us
forget about the above meaning of the $q$ to $z$ transformation and just view
it as a way to trivialize the crossing involution. Let us view the problem in
the dictionary of the previous section. In the $z$ coordinate, the nonvacuum
building blocks map to
\begin{equation}
	\tilde{g}_{h, \bar{h}} (z, \bar{z}) = \frac{\varphi (z)^h \varphi
		(\bar{z})^{\bar{h}}}{(1 - \varphi (z)) (1 - \varphi (\bar{z}))}
	\quad (h, \bar{h} \neq 0) \label{nonvac-z} .
\end{equation}
It is possible to show that $\varphi (z)$ has a positive series expansion in
$z$, starting from $z^2$:
\begin{equation}
	\varphi (z) = \frac{z^2}{256} + \frac{z^3}{256} + \frac{29 z^4}{8192} +
	\frac{13 z^5}{4096} + \cdots . \label{phi-exp}
\end{equation}
Thus functions {\eqref{nonvac-z}} also have positive series expansion in $z$.
This was one of the properties much used in the lightcone bootstrap analysis
in Section \ref{LC}. The crossing equation becomes:
\begin{equation}
	\tilde{g} (z, \bar{z}) = K (z) K (\bar{z}) \tilde{g} (1 - z, 1 - \bar{z}),
\end{equation}
where $K$ is the crossing factor from the dictionary mapped to the $z$ coordinate, and
\begin{equation}
	\tilde{g} (z, \bar{z}) = 1 + \sum_{h, \bar{h}} \tilde{g}_{h, \bar{h}} (z,
	\bar{z}) .
\end{equation}
It is possible to show that the factor $K (z)$ has asymptotic behavior
\begin{equation}
	K (z) K (\bar{z}) \sim \left( \frac{z}{1 - \bar{z}} \right)^{2 A}
\end{equation}
for $z \rightarrow 0$, $\bar{z} \rightarrow 1$, times factors which vary at
most logarithmically. We could now run an argument similar to Section \ref{LC}
to show that there is an infinite sequence of building blocks with $2 h
\rightarrow 2 A$ and $\bar{h} \rightarrow \infty$ (factor 2 being due to the
expansion {\eqref{phi-exp}} starting from $z^2$). We did not follow this
strategy in the main text, and instead presented the argument in the $\beta, \bar{\beta}$ variables familiar to the modular bootstrap community.

\section{Twist accumulation conjecture in 2D Virasoro
	CFTs}\label{Virasoro}

It is natural to inquire if Theorem \ref{theorem:DLCtwist} has an analogue for
local 2D CFTs of central charge $c > 1$, satisfying a twist gap
assumption in the spectrum of \textit{Virasoro} primaries. This problem was
considered in {\cite{Kusuki:2018wpa}} and {\cite{Collier:2018exn}}, and here
we will review their conclusions.

Let us define the variables
\begin{equation}
	c = 1 + 6 Q^2, \quad Q = b + \frac{1}{b}, \quad h (\alpha) = \alpha (Q -
	\alpha), \quad  \bar{h} (\bar{\alpha}) = \bar{\alpha} (Q -
	\bar{\alpha}) .
\end{equation}
The parameter $b$ is chosen such that if $c \geqslant 25$ we have $0 < b \leqslant
1$ and if we have $25 > c > 1$, $b$ lies on unit circle. We have
\begin{equation}
	0 < h < A \Leftrightarrow 0 < \alpha < \frac{Q}{2}, \qquad h \geqslant A
	\Leftrightarrow \alpha \in \frac{Q}{2} + i \mathbb{R}\,.
\end{equation}
Consider the 4pt correlation function of an operator $\mathcal{O}$. It
can be expressed as a sum of Virasoro blocks:
\begin{equation}
	\langle \mathcal{O} (0) \mathcal{O} (z, \bar{z}) \mathcal{O} (1) \mathcal{O}(\infty) \rangle = \sum_{\alpha_s,
		\bar{\alpha}_s} C_s^2  V_S (\alpha_s) V_S (\bar{\alpha}_s) = \sum_{\alpha_t,
		\bar{\alpha}_t} C_t^2 V_T (\alpha_t) V_T (\bar{\alpha}_t)\,,
\end{equation}
where $V_S (\alpha_s) V_S (\bar{\alpha}_s)$ is the $s$-channel Virasoro
conformal block with exchanged primary having weight $(\alpha_s,
\bar{\alpha}_s)$, $V_T (\alpha_s) V_T (\bar{\alpha}_s)$ is the $t$-channel
Virasoro conformal block with exchanged primary having weight $(\alpha_t,
\bar{\alpha}_t)$.

The fusion kernel $S_{\alpha_s \alpha_t} $ (or $S_{\bar{\alpha}_s
	\bar{\alpha}_t}$) is defined in {\cite{Ponsot:1999uf,Ponsot:2000mt}}
and reviewed in {\cite{Kusuki:2018wpa}} and {\cite{Collier:2018exn}} via
\begin{equation}
	V_T (\alpha_t) = \int_C  \frac{d \alpha_s}{2 \pi i} S_{\alpha_s \alpha_t}
	V_S (\alpha_s), \qquad  V_T (\bar{\alpha}_t) = \int_C  \frac{d \alpha_s}{2
		\pi i} S_{\bar{\alpha}_s \bar{\alpha}_t} V_S (\bar{\alpha}_s)\,, \label{fusion}
\end{equation}
where the $t$-channel Virasoro block (left and right moving part) is expressed
in terms of $s$-channel blocks (left and right moving part). The integral is
defined over some prescribed contour $C$, see {\cite{Collier:2018exn}}. For the
explicit expression for the fusion kernel we refer to {\cite{Kusuki:2018wpa}}
and {\cite{Collier:2018exn}}.

Virasoro Mean Field theory (VMFT) {\cite{Collier:2018exn}} is defined as a
``theory'' with spectrum that we obtain via expressing the $t$-channel vacuum
Virasoro block contribution in terms of $s$-channel data using the fusion kernel,
i.e.~$S_{\alpha_s I} S_{\bar{\alpha}_s I}$. The VMFT has the following properties:
\begin{enumerate}
	\item The operator spectrum is continuous at $h \geqslant A$.
	
	\item For sufficiently light external operators with $\alpha_{\mathcal{O}} < Q / 4$,
	VMFT contains a discrete set of operators with $\alpha_m = 2 \alpha_{\mathcal{O}}  + m b$ such that $\alpha_m < \frac{Q}{2}$. These are absent if $\text{Re} (\alpha_{\mathcal{O}} ) \geqslant \frac{Q}{4}$.
	
	\item The above statements are true with $\alpha \leftrightarrow
	\bar{\alpha} .$
\end{enumerate}
We see that the VMFT spectrum is consistent with Theorem \ref{boundtwist}. It
saturates the upper bound on twist gap (for sufficiently heavy external
operators). Furthermore, $A$ is always a twist accumulation point because of
continuous spectrum above $A$. These properties hold in spite of the fact that
VMFT is not a fully crossing symmetric correlator.

The above conclusions were reached independently and in a similar manner by
Refs.\,{\cite{Kusuki:2018wpa,Collier:2018exn}}, with the term VMFT coined
by {\cite{Collier:2018exn}}.

Refs.\,{\cite{Kusuki:2018wpa,Collier:2018exn}} then pointed out that in any
unitary 2D CFT with $c > 1$, discrete spectrum, and a positive lower bound on
twists of non-vacuum primaries, the OPE spectral density should approach that
of VMFT at large spin. This would be a vast generalization of Theorem~\ref{boundtwist}. Ref.\,{\cite{Collier:2018exn}}, which gives a more detailed
argument, reached this conclusion by showing that in the large spin, fixed
twist limit, the relative importance of non-identity operators in the
$t$-channel is individually suppressed compared to the vacuum contribution if
there is a twist gap, see Eq.\,(3.12) in {\cite{Collier:2018exn}}. We note the
following caveats in deriving the claimed universality:
\begin{enumerate}
	\item First of all, it's not apriori clear that if we take the full
	correlator and write it as a sum of $t$-channel blocks, then we express each
	$t$-channel Virasoro blocks in terms of $s$-channel Virasoro blocks using fusion
	kernel and then sum over these $s$-channel blocks, we get back the correlator
	as required by crossing symmetry. This is specially subtle since the
	spectrum is assumed discrete while each individual $t$-channel block gives
	continuous contribution above $h \geqslant A$ (similarly for $\bar{h}$). In
	particular it is not obvious that the sum over $s$-channel blocks and the
	integral appearing in (\ref{fusion}) is swappable.
	
	\item Even though the individual non-identity $t$-channel blocks yield
	suppressed $s$-channel contributions, it is not guaranteed that the contribution
	is still suppressed after we sum all the non-identity contributions up since
	we have an infinite number of Virasoro blocks being exchanged in the $t$-channel.
	This issue is noted in {\cite{Collier:2018exn}} at the beginning of Section
	3.2.3. 
\end{enumerate}

It would be very interesting to prove a theorem which puts the claim of
universality of VMFT from {\cite{Kusuki:2018wpa,Collier:2018exn}} on rigorous
footing. This would require a better handle on Virasoro blocks.

As noted in the previous section, the partition function can be rewritten as a
4pt function of twist fields in the orbifolded CFT, who have dimension $c /
8$. The above Virasoro fusion kernel approach does not apply verbatim to
this correlator, since the orbifolded CFT admits an extended symmetry
algebra $(\text{Virasoro}{\times}\text{Virasoro})/ \mathbb{Z}_{2}$. Still, it is conceivable that appropriate
fusion kernel for $(\text{Virasoro}{\times}\text{Virasoro})/ \mathbb{Z}_{2}$ would lead to the same
twist accumulation result.\footnote{To complete the list of
		analogies, we note that Theorem
		\ref{boundtwist} may also be argued by expressing the vacuum character in the direct channel of
		the partition function as an integral of crossed-channel characters, the
		spectral density in this integral representation being a nonzero function at
		$h, \bar{h} > (c - 1) / 24$, see {\cite{Kusuki:2018wpa}}, App. B.1.}
		
\begin{remark}
	The universality of the VMFT as well as Theorem~\ref{boundtwist} proven in this paper are contingent upon the existence of a finite twist gap in the spectrum of Virasoro primaries. A perhaps natural expectation is that this assumption is rather generic and there should be many 2D CFTs with $c>1$ satisfying it. Yet, embarrassingly, no explicit example is currently known for which this can be rigorously shown. The most well-studied class of 2D CFTs are \emph{rational} theories (RCFTs), which contain finitely many primaries of the holomorphic and anti-holomorphic vertex operator algebra. All $c>1$ RCFTs admit higher spin symmetries beyond Virasoro, and thus do not satisfy the twist gap assumption. The knowledge of irrational CFTs is limited, and known examples also have higher spin currents, hence zero twist gap. To cite \cite{Yin:2017yyn}, ``either we are missing important constraints $[\ldots]$, or we have no clue what the generic 2D CFT looks like.'' 
	
	A reasonable, unproven, conjecture is that the critical coupled Potts model \cite{Dotsenko:1998gyp} furnishes an example of a $c>1$ 2D CFT satisfying the twist gap assumption.
	
	Recently some progress has been made on this problem by considering $N>4$ copies of unitary minimal models with $c=1-\frac{6}{m(m-1)}$ \cite{Antunes:2022vtb}.  In the undeformed theory,  many higher spin currents exist.  Now this parent CFT is deformed by an appropriate set of operators, which initiate a renormalization group flow. In the large $m$ limit, this flow is perturbative and \cite{Antunes:2022vtb} showed that the deformed theory flows to an interacting CFT with the property that higher spin currents of the original undeformed theory,  up to spin 10,  get lifted. This suggests that the CFT may be irrational with a finite twist gap in the spectrum of Virasoro primaries \cite{Antunes:2022vtb}.  It is still an open problem to how to access arbitrarily large spins and verify the full validity of this conjecture, while remaining within the large $m$ perturbative regime. 
	
	\end{remark}

\section{Conclusions}

Many results in quantum field theory are extremely deep and
	will likely require highly nontrivial mathematics to be put on rigorous footing, perhaps in some distant future. In
	this regard, conformal field theory is in a better shape. Its axioms are
	simple: they amount to saying that observables of the theory, such as
	correlation functions and the torus partition function, have series expansions
	in well-defined building blocks (conformal blocks or characters), with a
	well-understood region of convergence. Thus for CFTs one may hope, and demand,
	that results about the structure of the theory obtained at an intuitive level
	be raised to the status of mathematical theorems, since this likely requires
	only a small amount of extra work. The main purpose of this paper was to
	realize this hope for the lightcone bootstrap problems for the CFT 4pt
	functions and for the torus partition functions of 2D CFTs. These two
problems share a highly similar mathematical structure and lead
to similar ``twist-accumulation'' results.

On the CFT 4pt function side, we proved the prediction made in
{\cite{Fitzpatrick:2012yx,Komargodski:2012ek}} that in any unitary CFT, if the
OPE spectrum of two identical scalar operators ($\phi \times \phi$) has
a twist gap, the OPE contains a family of quasiprimary operators with spins tending
to $+ \infty$ and twists accumulating to $2 \Delta_{\phi}$, where
$\Delta_{\phi}$ is the scaling dimension of the scalar operator $\phi$.
Our proof relies on a crucial approximate factorization property of conformal
blocks, which follows from Hogervorst's dimensional reduction formula
{\cite{Hogervorst:2016hal}}.

On the torus partition function side, we proved that in any Virasoro
invariant, modular invariant, unitary 2D CFT with a normalizable vacuum and
central charge $c > 1$, a twist gap in the spectrum of Virasoro primary
operators implies a family of Virasoro primary operators $\mathcal{O}_{h,
	\bar{h}}$ with $h \rightarrow + \infty$ and $\bar{h} \rightarrow \frac{c -
	1}{24}$ (the same is true with $h$ and $\bar{h}$ interchanged). Our work
justified this prediction originally made in {\cite{Collier:2016cls}} and also in
{\cite{Afkhami-Jeddi:2017idc}}, {\cite{Benjamin:2019stq}}.

An obvious step forward would be to generalize our results to the CFT
4pt functions of non-identical scalar operators and spinning operators,
as predicted in {\cite{Komargodski:2012ek}} and discussed in Section
\ref{non-identical}.

We note that our analysis did not take advantage of the analyticity of CFT
4pt functions and their conformal blocks, which is essential for a
stronger result of the lightcone bootstrap: the OPE of $\phi \times \phi$
also contains a tower of families of quasiprimaries $\mathcal{O}_{\tau,
	\ell}$ with $\ell \rightarrow + \infty$ and $\tau \rightarrow 2
\Delta_{\phi} + 2 n$. In other words, it is expected that there are
infinitely many accumulation twists of quasiprimary operators, given by $2
\Delta_{\phi} + 2 n$ {\cite{Fitzpatrick:2012yx}}. Such families of quasiprimary
operators are exactly the spectrum of mean field CFTs (MFT). There is little
doubt in these lightcone bootstrap results since they are in agreement with
the numerical bootstrap studies of CFT models
{\cite{Simmons-Duffin:2016wlq,Liu:2020tpf,Caron-Huot:2020ouj}}, and with
(nonrigorous) arguments based on the Lorentzian inversion formula
{\cite{Caron-Huot:2017vep,Simmons-Duffin:2017nub}}. It would be nice to
provide a mathematical proof of the MFT universality at large spin.

The lightcone bootstrap also leads to predictions about the OPE spectral
density of the operators whose twist asymptotes to $2 \Delta_{\phi} + 2 n$
{\cite{Fitzpatrick:2012yx,Komargodski:2012ek}}. It would be interesting to
prove those results rigorously, using e.g.~such techniques such as the
Tauberian theory
{\cite{Pappadopulo:2012jk,Qiao:2017xif,Mukhametzhanov:2018zja}}.

Similar to the case of CFT 4pt function, an interesting avenue is to
expound more on the spectral density of the torus partition function along the
lines of
{\cite{Mukhametzhanov:2019pzy,Ganguly:2019ksp,Pal:2019zzr,Pal:2019yhz,Mukhametzhanov:2020swe}}
to derive a Cardy like formula. In particular, it would be interesting to
revisit Ref.\,{\cite{Benjamin:2019stq}} where they argued for a stronger upper
bound $\frac{c - 1}{16}$ on twist gap by using Cardy-like formula in the
lightcone limit. Furthermore, in 2D, the CFT 4pt function can be
decomposed in terms of Virasoro blocks, the VMFT {\cite{Collier:2018exn}}
playing the role of MFT. As discussed in Section \ref{Virasoro}, Refs.\,{\cite{Kusuki:2018wpa}} and
{\cite{Collier:2018exn}} argued that there exists an infinite number of Regge
trajectories with $h \rightarrow \frac{c - 1}{24}$ at large spin and there are
infinite number of operators with spectral density approaching the VMFT in any
given interval of twist above $\frac{c - 1}{12}$. It would be nice to put VMFT universality on firm
footing just like the MFT universality for higher-dimensional CFTs. Finally,
one can consider further refinements of universality such as consider 2D
CFTs with a discrete global symmetry $G$ and refine our theorem along the lines of
{\cite{Pal:2020wwd}}. The Hilbert space of states for such a CFT decomposes
into various sectors, labeled by $\rho$, the irreducible representations of
$G$. One can hope to prove a twist accumulation result in each $\rho$
sector by assuming suitable twist gap conditions. The new ingredient in the
story would be the partition function twisted by a group element $g \in G$,
and its modular transformed avatar, the partition function of the defect
Hilbert space.

We leave these problems for future studies.

\

\section*{Acknowledgments}

Ideas of Quentin Lamouret {\cite{Lamouret}} played a role at the genesis of this
project. We thank Ying-Hsuan Lin, Dalimil Maz{\'a}{\v c}, Balt van Rees and
Sylvain Ribault for useful discussions. When the project got initiated, SP was at the Institute for
Advanced Study and he acknowledges a debt of gratitude for the funding
provided by Tomislav and Vesna Kundic as well as the support from the grant
DE-SC0009988 from the U.S. Department of Energy. SP also acknowledges the
support by the U.S. Department of Energy, Office of Science, Office of High
Energy Physics, under Award Number DE-SC0011632  and by the Walter Burke
Institute for Theoretical Physics.  JQ is grateful for the funding provided by
{\'E}cole Normale Sup{\'e}rieure - PSL and the Simons Bootstrap
Collaboration when this project was initiated
in the workshop ``Bootstrap 2022'' at Porto University. JQ is supported by the Swiss National Science Foundation through the National Centre of Competence in Research SwissMAP and by the Simons Collaboration on Confinement and QCD Strings. SR is supported by the Simons
Foundation grant 733758 (Simons Bootstrap Collaboration). 

\appendix
	
	\section{Approximate factorization of conformal
		blocks}\label{appendix:proofAFL}
	
	In this appendix we will prove Lemma \ref{AFLz}. We will find it convenient to
	prove a slightly different version of the lemma:
	
	\begin{lemma}[Approximate factorization, alternative version]
		\label{lemma:AFL}Let $a, b \in (0, 1)$, $a < b$. Let $d\geqslant 2$ and assume that $\Delta, \ell$ satisfy
		the unitarity bounds. If $\ell = 0$, assume in addition that $\Delta > \nu =
		\frac{d - 2}{2} .$ Then the conformal block $g_{\tau, \ell} (z, \bar{z})$
		satisfies the following two-sided bound
		\begin{equation}
			1 \leqslant \frac{g_{\tau, \ell} (z, \bar{z})}{\frac{(\nu)_{\ell}}{(2
					\nu)_{\ell}} k_{\tau} (\bar{z}) k_{\tau + 2 \ell} (z)} \leqslant K_d (a,
			b) \left( 1 + \frac{\theta (d \geqslant 3)}{\Delta - \nu} \right) 
			\qquad (\forall 0 \leqslant \bar{z} \leqslant a <
			b \leqslant z < 1), \label{CB:sandwich1}
		\end{equation}
		where $K_d (a, b)$ is a finite constant independent of $\tau, \ell$ and of
		$z, \bar{z}$ in the shown range.
	\end{lemma}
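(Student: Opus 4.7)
The plan is induction on the spacetime dimension $d$. The base case $d=2$ uses the explicit factorized form of 2D blocks. The inductive step invokes Hogervorst's dimensional reduction formula \cite{Hogervorst:2016hal}, which expresses a $d$-dimensional conformal block as a non-negative sum of $(d-1)$-dimensional blocks with shifted weights, in the unitary range.

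For the base case $d=2$, the block has the well-known form
\begin{equation}
g^{(2)}_{\tau,\ell}(z,\bar{z}) = \frac{1}{1+\delta_{\ell,0}}\bigl[k_\tau(\bar{z})\, k_{\tau+2\ell}(z) + k_\tau(z)\, k_{\tau+2\ell}(\bar{z})\bigr],
\end{equation}
and using the $d=2$ prescription $(\nu)_\ell/(2\nu)_\ell = (1+\delta_{\ell,0})/2$ from footnote \ref{nuratio}, the ratio in \eqref{CB:sandwich1} equals $2$ for $\ell=0$ and $1 + R(z,\bar{z};\tau,\ell)$ for $\ell \geqslant 1$, with $R := k_\tau(z)\, k_{\tau+2\ell}(\bar{z})/[k_\tau(\bar{z})\, k_{\tau+2\ell}(z)]$. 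The lower bound $\geqslant 1$ is immediate. For the upper bound I would pass to the radial coordinate $\rho$ via $z = 4\rho/(1+\rho)^2$, in which $k_\beta(z)$ factors as $(4\rho)^{\beta/2}$ times a slowly varying factor bounded above and below, uniformly in $\beta$, on any compact subinterval of $(0,1)$; this can be made rigorous via a quadratic hypergeometric transformation. One then reads off $R \lesssim [\rho(\bar{z})/\rho(z)]^\ell$, which is uniformly $\leqslant [\rho(a)/\rho(b)]^\ell \leqslant 1$, yielding the $\tau,\ell$-uniform constant $K_2(a,b)$.

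For the induction, assume the bound in dimension $d-1$. Hogervorst's formula has the schematic form
\begin{equation}
g^{(d)}_{\tau,\ell}(z,\bar{z}) = \sum_{i,j} c^{(d)}_{i,j}(\tau,\ell)\, g^{(d-1)}_{\tau+i+j,\,\ell-j}(z,\bar{z}), \qquad c^{(d)}_{i,j}\geqslant 0 .
\end{equation}
Applying the inductive hypothesis termwise, one obtains an upper bound of the form $\sum_{i,j} c^{(d)}_{i,j}\cdot K_{d-1}\cdot \frac{(\nu-1/2)_{\ell-j}}{(2\nu-1)_{\ell-j}}\, k_{\tau+i+j}(\bar{z})\, k_{\tau+i-j+2\ell}(z)$. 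The ratios $k_{\tau+i+j}(\bar{z})/k_\tau(\bar{z})$ and $k_{\tau+i-j+2\ell}(z)/k_{\tau+2\ell}(z)$ are controlled on the regime $\bar{z}\leqslant a<b\leqslant z$ by constants depending only on $a,b$ raised to the shift powers, again using the positive Taylor or $\rho$-expansions. The resulting weighted sum of Hogervorst coefficients is then recognized, via the combinatorial identities that define them, as producing the $d$-dimensional Gegenbauer factor $(\nu)_\ell/(2\nu)_\ell$ up to a bounded multiplicative remainder, yielding the $K_d(a,b)$ bound on the ratio in \eqref{CB:sandwich1}.

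The main obstacle is the $\ell=0$ scalar case in $d\geqslant 3$, where the $(\Delta-\nu)^{-1}$ singularity on the right of \eqref{CB:sandwich1} appears and is not there in lower $d$. I would handle this by isolating, in Hogervorst's expansion of $g^{(d)}_{\Delta,0}$, the single $(d-1)$-scalar term whose weight is pushed to the $(d-1)$-unitarity threshold as $\Delta\to \nu$, and by bounding all other contributions by a constant independent of $(\Delta-\nu)$. The pole matches the physical fact that $\Delta=\nu$ in $d\geqslant 3$ corresponds to a free scalar exchange, which appears only in OPEs of non-identical operators, so the singular piece can be extracted explicitly and shown to saturate the stated $1/(\Delta-\nu)$ behavior. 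A secondary technical issue is proving the aforementioned uniform $\beta$-independent bounds on the $\rho$-factorized form of $k_\beta$; this is the step that requires genuine work and which, together with the careful accounting of Hogervorst's coefficients, constitutes the bulk of the technical content of the proof.
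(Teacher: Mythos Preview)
Your overall architecture matches the paper's: induction on $d$ with the $d=2$ base case handled via the explicit factorized block and the $\rho$-coordinate representation $k_\beta(z)=(4\rho)^{\beta/2}G_\beta(\rho)$, $G_\beta(\rho)={}_2F_1(\tfrac12,\tfrac\beta2;\tfrac{\beta+1}2;\rho^2)$, and the inductive step via Hogervorst's reduction \cite{Hogervorst:2016hal}. Two small points: in $d=2$, $\ell=0$ the ratio in \eqref{CB:sandwich1} is $1$, not $2$ (since $(\nu)_0/(2\nu)_0=1$); and the uniform two-sided bound on $G_\beta$ you allude to is precisely $1\leqslant G_\beta(\rho)\leqslant (1-\rho^2)^{-1/2}$, which the paper records as a lemma.

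The genuine gap is your account of the $\ell=0$, $d\geqslant 3$ singularity. You propose to isolate ``the single $(d{-}1)$-scalar term whose weight is pushed to the $(d{-}1)$-unitarity threshold as $\Delta\to\nu$.'' No such term exists: in Hogervorst's expansion of $g^{(d)}_{\Delta,0}$ the $(d{-}1)$-blocks have dimension $\Delta+2n$ with $\Delta>\nu_d=\nu$, while the $(d{-}1)$ scalar unitarity bound is $\nu_{d-1}=\nu-\tfrac12$, so $\Delta+2n-\nu_{d-1}\geqslant \tfrac12$ for all $n$ and the lower-dimensional blocks stay uniformly away from their threshold. The $1/(\Delta-\nu)$ pole in \eqref{CB:sandwich1} does not come from a $(d{-}1)$-block; it comes from the Hogervorst \emph{coefficients} themselves, via the factor $(\Delta-\nu)_n$ in the denominator of $\mathcal{A}^{(d)}_{n,j}(\Delta,\ell)$. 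Concretely, the paper shows that the ratio of subleading to leading contributions carries a factor
\[
F_1=\frac{(\Delta+j)(\Delta-1)\Delta}{(\Delta+j-1)(\Delta-\nu)(\Delta-\nu+\tfrac12)}
\]
from the $n=1$ piece of the coefficient ratio, and that $F_n=F_1\times(\text{regular in }\Delta)$ for $n\geqslant 2$; this is the source of the $(\Delta-\nu)^{-1}$. Your proposed mechanism would not locate the pole, so the argument as sketched would fail at this step. Once you redirect attention to the coefficient pole, the rest of the induction goes through by bounding Pochhammer ratios uniformly and summing in $n,p$ against the geometric factors $(\rho\bar\rho)^n$ and $(\bar\rho/\rho)^p$, exactly as the paper does.
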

	
	Let us quickly see why this version implies Lemma \ref{AFLz}. We have an
	identity:
	\begin{gather}
		k_{\beta} (z) = (4 \rho)^{\beta / 2} G_{\beta} (\rho), \label{kG}\\
		G_{\beta} (\rho) \assign {}_2 F_1 \left( \tfrac{1}{2}, \tfrac{\beta}{2} ;
		\tfrac{\beta + 1}{2} ; \rho^2 \right), \quad \rho = \frac{z}{\left( 1 +
			\sqrt{1 - z} \right)^2} .
	\end{gather}
	We also have $1 \leqslant G_{\beta} (\rho) \leqslant
	\frac{1}{\sqrt{1 - \rho^2}}$ (Lemma \ref{Gprop} below). Hence the denominator
	of {\eqref{CB:sandwich1}} and the denominator of {\eqref{CB:sandwich}} are
	related to each other by
	\begin{equation}
				1 \leqslant \frac{\frac{(\nu)_{\ell}}{(2 \nu)_{\ell}} k_{\tau} (\bar{z})
					k_{\tau + 2 \ell} (z)}{\bar{\rho}^{\tau / 2} F_{\tau, \ell} (z)} \leqslant C
				(a)\,,
			\end{equation}
			where $C (a)$ does not depend on $\tau$ or $\ell$. 
	
	We will first prove the 2D case (App. \ref{2dcase}), then give an inductive
	proof in $d \geqslant 3$ dimensions (App. \ref{dgeq3}), using Hogervorst's
	dimensional reduction formula {\cite{Hogervorst:2016hal}}, which expands
	$d$-dimensional blocks in terms of $(d - 1)$-dimensional ones (App.
	\ref{Hogervorst}).
	
	\begin{remark}
		Following the arguments below carefully, it's easy to see that $K_d (a, b) =
		1 + O (a)$ for $a \rightarrow 0$ and $b$ fixed. 
	\end{remark}
	
	\begin{remark}
		There is another possible way of proving Lemma \ref{lemma:AFL} but we have
		not managed to realize it. In any dimension, one can expand the conformal
		block into collinear blocks:
		\begin{equation}
			g_{\Delta, \ell} (z, \bar{z}) = \underset{n = 0}{\overset{\infty}{\sum}}
			\underset{k = - n}{\overset{n}{\sum}} c_{n, k}  \bar{z}^{\tau / 2 + n}
			k_{\Delta + \ell + 2 k} (z) .
		\end{equation}
		Here the coefficients $c_{n, k}$ are positive for a group-theoretical
		reason: we decompose the $\text{SO} (2, d)$ representation into irreducible
		representations of the subgroup $\text{SL} (2 ; \mathbb{R})$ (which
		preserves the $z$-axis), and $c_{n, k}$ is the sum of squared inner
		products:
		\begin{equation}
			c_{n, k} = \sum\limits_{\psi}  \lvert \langle \psi \lvert \phi (1) \phi (0)
			\rangle \rvert^2, 
		\end{equation}
		where the sum is over an orthonormal basis of collinear primary states
		$\psi$ with collinear twist $h = \frac{\Delta + \ell}{2} + k$ and scaling
		dimension $\Delta + n + k$.
		
		Lemma \ref{lemma:AFL} would follow if we had an upper bound on $c_{n, k}$
		implying the inequality
		\begin{equation}
			\begin{split}
				\underset{n = 0}{\overset{\infty}{\sum}} \underset{k = -
					n}{\overset{n}{\sum}}& \frac{c_{n, k}}{c_{0, 0}}  \bar{z}^n
				\frac{k_{\Delta + \ell + 2 k} (z)}{k_{\Delta + \ell} (z)} \leqslant K_d
				(a, b) \left( 1 + \frac{\theta (d \geqslant 3)}{\Delta - \nu} \right) \\
				 &\left( \forall 0 \leqslant \bar{z}
				\leqslant a < b \leqslant z < 1,\quad \tau
				\leqslant \tau_{\max} \right) .  \\
			\end{split}
		\end{equation}
		Such a bound can be easily derived in 2D, but we do not know how to derive
		it in $d \geqslant 3$. There have been papers deriving explicit
		expressions for the coefficients $c_{n, k}$ (or coefficients of similar
		expansions where $z$ is replaced by $u$ and $\bar{z}$ by $1 - v$)
		{\cite{Simmons-Duffin:2016wlq,Caron-Huot:2017vep,Li:2019cwm,Caron-Huot:2020ouj,Li:2020ijq}},
		but we did not manage to apply them. We leave realization of this strategy as
		an open problem.
	\end{remark}
	
	\subsection{$d = 2$}\label{2dcase}
	
	Recall that the ratio $(\nu)_{\ell} / (2 \nu)_{\ell}$ is defined for $\nu = 0$
	as in footnote \ref{nuratio}. In our normalization (\ref{CB:normalization}),
	the 2D conformal blocks are given by
	\begin{equation}
		g^{(2)}_{\tau, \ell} (z, \bar{z})  = \begin{cases}
			k_{\tau} (z) k_{\tau} (\bar{z})
			& (\ell = 0) \\
			\frac{1}{2} [k_{\tau} (\bar{z}) k_{\tau + 2 \ell} (z) + k_{\tau} (z)
			k_{\tau + 2 \ell} (\bar{z})] & (\ell \geqslant 1).
		\end{cases}
	\end{equation}
	
	For $\ell = 0$, the upper and lower bounds are obvious with constants 1 and in
	the full range of $z, \bar{z}$ (in $d = 2$ there is no pole $1 / (\Delta -
	\nu)$).
	
	Consider then the case $\ell \geqslant 1$. Since $k_{\beta} (z) > 0$ for $z
	\in (0, 1)$, $\beta \geqslant 0$, the lower bound $g^{(2)}_{\tau, \ell} (z,
	\bar{z}) \geqslant {\frac{1}{2}} k_{\tau} (\bar{z}) k_{\tau + 2 \ell} (z)$
	follows by dropping the second term in the block.
	
	For the upper bound, let us write
	\begin{equation}
		g^{(2)}_{\tau, \ell} (z, \bar{z}) = \frac{1}{2} k_{\tau + 2 \ell} (z)
		k_{\tau} (\bar{z}) [1 + X], \label{cb2d:rewrite} \quad X = \frac{k_{\tau + 2
				\ell} (\bar{z}) k_{\tau} (z)}{k_{\tau + 2 \ell} (z) k_{\tau} (\bar{z})} .
	\end{equation}
	We will use identity {\eqref{kG}}. We have the following lemma about
	$G_{\beta} (\rho)$:
	
	\begin{lemma}
		\label{Gprop}For any $0 \leqslant \beta_1 \leqslant \beta_2 < \infty$ we
		have
		\begin{equation}
			1 \leqslant G_{\beta_1} (\rho) \leqslant G_{\beta_2} (\rho) \leqslant
			\frac{1}{\sqrt{1 - \rho^2}} \, .
		\end{equation}
	\end{lemma}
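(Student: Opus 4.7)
The plan is to work directly with the power series representation of $G_\beta$ and exploit the monotonicity of each coefficient in $\beta$. Writing
\[
G_\beta(\rho) = \sum_{n=0}^{\infty} c_n(\beta)\, \rho^{2n}, \qquad c_n(\beta) = \frac{(1/2)_n\, (\beta/2)_n}{((\beta+1)/2)_n\, n!},
\]
I observe that $c_0(\beta)=1$ and $c_n(\beta) \geqslant 0$ for every $\beta \geqslant 0$ and $n \geqslant 0$. Since $0 \leqslant \rho^2 < 1$, the series converges absolutely and the lower bound $G_\beta(\rho) \geqslant 1$ is immediate by keeping only the $n=0$ term.

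For the monotonicity in $\beta$, I would use the telescoping identity
\[
\frac{(\beta/2)_n}{((\beta+1)/2)_n} = \prod_{k=0}^{n-1} \frac{\beta+2k}{\beta+2k+1} = \prod_{k=0}^{n-1}\left(1 - \frac{1}{\beta+2k+1}\right).
\]
Each factor lies in $[0,1)$ and is strictly increasing in $\beta$, so the whole product is non-decreasing in $\beta$. Consequently $c_n(\beta_1) \leqslant c_n(\beta_2)$ termwise for $\beta_1 \leqslant \beta_2$, which gives $G_{\beta_1}(\rho) \leqslant G_{\beta_2}(\rho)$.

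For the upper bound, I would send $\beta \to \infty$. The product above then converges monotonically upward to $1$, so $c_n(\beta) \nearrow (1/2)_n/n!$ as $\beta \to \infty$. By monotone convergence applied termwise (all terms non-negative and non-decreasing in $\beta$),
\[
\lim_{\beta\to\infty} G_\beta(\rho) = \sum_{n=0}^{\infty} \frac{(1/2)_n}{n!}\, \rho^{2n} = (1-\rho^2)^{-1/2}.
\]
Combined with monotonicity in $\beta$ this yields $G_\beta(\rho) \leqslant (1-\rho^2)^{-1/2}$ for every finite $\beta \geqslant 0$, completing the chain of inequalities.

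There is no real obstacle here: the only step requiring a line of justification is the interchange of limit and sum when taking $\beta \to \infty$, which is handled by monotone convergence since $c_n(\beta)\rho^{2n}$ is non-negative and non-decreasing in $\beta$ at each fixed $n$. The argument is entirely elementary once the coefficient ratio is rewritten as the displayed product.
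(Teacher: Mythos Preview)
Your proof is correct and follows the same approach as the paper, which simply notes that the coefficients in the hypergeometric series are monotone in $\beta$ and identifies the endpoints $G_0(\rho)=1$ and $G_\infty(\rho)=(1-\rho^2)^{-1/2}$. You have merely made explicit the coefficient-wise monotonicity via the product formula and justified the $\beta\to\infty$ limit by monotone convergence, which the paper leaves implicit.
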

	
	\begin{proof}
		$G_{\beta_1} (\rho) \leqslant G_{\beta_2} (\rho)$ follows from the
		monotonicity of coefficients in the series expansion of $_2 F_1$. We also
		have $G_0 (\rho) = 1$ and $G_{\infty} (\rho) = \frac{1}{\sqrt{1 - \rho^2}}$.
	\end{proof}
	
	Using {\eqref{kG}}, the ratio $X$ is rewritten as
	\begin{equation}
	 X = \left( \frac{\bar{\rho}}{\rho} \right)^{\ell} \frac{G_{\tau}
		(\rho)}{G_{\tau + 2 \ell} (\rho)} \frac{G_{\tau + 2 \ell}
		(\bar{\rho})}{G_{\tau} (\bar{\rho})} \leqslant 1 \cdot 1 \cdot
	\frac{1}{\sqrt{1 - \rho (a)^2}}\,, 
	\end{equation}
using $\bar{\rho} \leqslant \rho$, and Lemma
		\ref{Gprop} for the other two factors. Hence $X$ is uniformly
		bounded by $C (a)$. This implies the upper bound with $C_2 (a, b) = 1 + C
		(a)$. The $d = 2$ case is complete.
	
	\subsection{Dimensional reduction formula}\label{Hogervorst}
	
	We will need Hogervorst's dimensional reduction formula. We will state and use
	this result using blocks $g_{\Delta, \ell}$ labeled by the dimension and
	spin, not by the twist and spin as in the rest of the paper. Hopefully no
	confusion will arise.
	
	\begin{theorem}[Hogervorst {\cite{Hogervorst:2016hal}}, Eq.\,(2.24)]
		Conformal blocks in $d$ dimensions can be expanded in a series of conformal
		blocks in $d - 1$ dimensions, as
		\begin{equation}
			g_{\Delta, \ell}^{(d)} (z, \bar{z}) = \underset{n =
				0}{\overset{\infty}{\sum}} \underset{p = 0}{\sum^{[\ell / 2]}}
			\mathcal{A}_{n, \ell - 2 p}^{(d)} (\Delta, l) g^{(d - 1)}_{\Delta + 2 n,
				\ell - 2 p} (z, \bar{z}), \label{eq:dimreduction}
		\end{equation}
		where ($j = \ell - 2 p$)
		\begin{eqnarray}
			\mathcal{A}^{(d)}_{n, j} (\Delta, \ell) & = & Z_{\ell, j}^{(d)} 
			\frac{\left( \frac{1}{2} \right)_n}{2^{4 n} n!}  \frac{\left( \frac{\Delta
					+ j}{2} \right)_n \left( \frac{\Delta - 2 \nu - j + 1}{2}
				\right)_n}{\left( \frac{\Delta + j - 1}{2} \right)_n \left( \frac{\Delta -
					2 \nu - j}{2} \right)_n} \nonumber\\
			&  & \times \frac{(\Delta - 1)_{2 n} \left( \frac{\Delta + \ell}{2}
				\right)_n \left( \frac{\Delta - \ell - 2 \nu}{2} \right)_n}{(\Delta -
				\nu)_n \left( \Delta - \nu - \frac{1}{2} + n \right)_n \left( \frac{\Delta
					+ \ell + 1}{2} \right)_n \left( \frac{\Delta - \ell - 2 \nu + 1}{2}
				\right)_n},  \label{def:A}\\
			Z_{\ell, j}^{(d)} & = & \frac{\left( \frac{1}{2} \right)_p \ell !}{p! j!} 
			\frac{(\nu)_{j + p}  (2 \nu - 1)_j}{\left( \nu - \frac{1}{2} \right)_{j +
					p + 1}  (2 \nu)_{\ell}}  (j + \nu - 1 / 2) .  \label{def:Z}
		\end{eqnarray}
	\end{theorem}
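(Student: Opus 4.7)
The plan is to derive \eqref{eq:dimreduction} by placing the four external scalars on a $(d-1)$-dimensional affine subspace of $\mathbb{R}^d$ and comparing two expansions of the same restricted 4pt function: once in $d$-dimensional conformal blocks (using the original $d$-dimensional OPE) and once in $(d-1)$-dimensional conformal blocks (using the $SO(d-1,2)$ symmetry that survives the restriction). Both expansions are functions of the same cross-ratios $(z,\bar{z})$, so matching term by term will express every $d$-dimensional block as a specific linear combination of $(d-1)$-dimensional ones, which is the content of the theorem.

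The first, conceptual, ingredient is the representation theory. A spin-$\ell$ traceless symmetric tensor of $SO(d)$ branches under $SO(d-1)$ as $\bigoplus_{j=0}^{\ell} V_j^{(d-1)}$. When the exchanged operator is contracted against scalar OPE tensors with both external points lying in the $(d-1)$-plane, symmetry under reflection in the transverse direction forces only components with an \emph{even} number of transverse indices to couple, so $j = \ell - 2p$ with $p = 0,1,\ldots,\lfloor\ell/2\rfloor$. Applying the transverse translation generator an even number of times, $P_d^{2n}$, on an $SO(d-1)$-primary component produces an $SO(d-1)$-primary descendant of weight $\Delta+2n$ when projected back to the subspace, which accounts for the sum over $n$. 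This fixes the structure of \eqref{eq:dimreduction}; only the coefficients $\mathcal{A}^{(d)}_{n,j}(\Delta,\ell)$ remain to be determined.

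For the coefficients I would use the conformal quadratic Casimir. Writing $\mathcal{C}^{(d)} = \mathcal{C}^{(d-1)} + \mathcal{C}_\perp$, where $\mathcal{C}_\perp$ collects all generators involving the transverse direction, the $d$-dimensional block is an eigenfunction of $\mathcal{C}^{(d)}$ with eigenvalue $C^{(d)}_{\Delta,\ell}$. Acting with $\mathcal{C}^{(d)}$ on the right-hand side of \eqref{eq:dimreduction}: the $\mathcal{C}^{(d-1)}$ part reproduces the $(d-1)$-dim Casimir eigenvalues of each term, while $\mathcal{C}_\perp$ shifts $n \to n\pm 1$ and $p \to p, p\pm 1$. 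Matching coefficients of each independent $(d-1)$-dimensional block then yields a two-index recursion for $\mathcal{A}^{(d)}_{n,j}$ whose solution is precisely the product of Pochhammer ratios in \eqref{def:A}. The purely group-theoretic factor $Z^{(d)}_{\ell,j}$ in \eqref{def:Z} is independent of $\Delta$ and $n$; it is determined by the branching rule, which amounts to the Gegenbauer-integral identity expressing $C^{\nu}_\ell$ of $SO(d)$ as a sum of $C^{\nu-1/2}_j$ of $SO(d-1)$ weighted by Jacobi polynomials. The overall normalization is pinned down by the OPE limit $(z,\bar z)\to 0$, where the $n=0,\,p=0$ term must reproduce the standard leading behavior of $g^{(d)}_{\Delta,\ell}$.

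The main obstacle is algebraic: solving the Casimir recursion and reorganizing the result into the factorized $\Gamma$-ratio form \eqref{def:A} requires a long sequence of Pochhammer manipulations, and one has to verify that apparent poles in the denominator at loci such as $\Delta-\nu=0$ or $\Delta+\ell-1=-2n$ cancel against analogous zeros in the numerator (or, when $\Delta$ sits exactly on the unitarity bound for $\ell=0$, that the singularity correctly reproduces the free-scalar phenomenon mentioned in Remark \ref{AFLrem}(b)). A useful consistency check is to specialize to $d=3$, where the $(d-1)$-dim blocks on the right are the explicit 2D blocks from Appendix \ref{2dcase}, and to the large-$\Delta$ limit, where the recursion collapses and the coefficients can be compared directly to the leading semiclassical asymptotics.
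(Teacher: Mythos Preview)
Your strategy---restrict to a $(d-1)$-plane, use the $SO(d)\!\downarrow\! SO(d-1)$ branching to fix the structure of the expansion, and then determine the coefficients by imposing the quadratic Casimir---is exactly Hogervorst's original approach, which the paper cites rather than reproves. The paper does not give its own proof of this theorem: see the remarks immediately following the statement, where it is explained that Hogervorst derived the recursion from the Casimir, \emph{guessed} the closed form \eqref{def:A}, and verified it only to high finite order. The paper states explicitly that ``the complete proof that \eqref{def:A} is a solution of Hogervorst's recursion relations is so far missing.''

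So your proposal is not wrong in spirit, but the step you flag as ``the main obstacle''---actually showing that the product of Pochhammer ratios in \eqref{def:A} satisfies the two-index Casimir recursion for all $n,p$---is the entire content of the missing proof, and your sketch does not carry it out. Saying ``solving the Casimir recursion \ldots\ requires a long sequence of Pochhammer manipulations'' is a description of the problem, not a solution. Unless you can either (i) verify the recursion in closed form via a hypergeometric identity, or (ii) match to the Zamolodchikov-type pole structure as the paper suggests in its remark, you have reproduced Hogervorst's derivation but not closed the gap he left open.
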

	
	Some remarks are due here. First, the existence of an expansion like
	{\eqref{eq:dimreduction}} can be inferred very generally from decomposing a
	representation of the $d$-dimensional conformal block under the $(d -
	1)$-dimensional conformal group. It also follows that the coefficients
	$\mathcal{A}^{(d)}_{n, j} (\Delta, \ell)$ have to be nonnegative if $(\Delta,
	\ell)$ satisfies the $d$-dimensional unitarity bounds. These condition are
	verified by the above expressions. Moreover, these coefficients are regular
	(no poles) for $\ell = 0, \Delta > \nu$ and $\ell \geqslant 1$, $\Delta
	\geqslant \ell + 2 \nu$.
	
	Hogervorst {\cite{Hogervorst:2016hal}} derived recursion relations satisfied
	by the coefficients $\mathcal{A}^{(d)}_{n, j} (\Delta, \ell)$. He then guessed
	{\eqref{def:A}} as the (unique) solution of those recursion relations. He then
	could check up to very high recursion order that {\eqref{def:A}} indeed
	remains compatible with the recursion. The complete proof that {\eqref{def:A}}
	is a solution of Hogervorst's recursion relations is so far missing, but there
	is hardly any doubt that they are correct.
	
	It may perhaps be possible to prove {\eqref{def:A}} by comparing
	with the pole structure of Zamolodchikov-like recursion relations
	{\cite{Kos:2013tga}} which have been established rigorously
	{\cite{Penedones:2015aga}}.
	
	\subsection{$d \geqslant 3$ lower bound}\label{dgeq3}
	
	We will argue by induction using (\ref{eq:dimreduction}). The needed lower
	bound is obtained by keeping the term $n = 0, j = \ell$ and throwing out the
	rest (recall that (\ref{eq:dimreduction}) is a positive sum). We have
	\begin{equation}
		g_{\Delta, \ell}^{(d)} (z, \bar{z}) \geqslant \mathcal{A}_{0, \ell}^{(d)}
		(\Delta, \ell) g^{(d - 1)}_{\Delta, \ell} (z, \bar{z}) .
	\end{equation}
	From (\ref{def:A}), it can be seen that:
	\begin{equation}
		\mathcal{A}_{0, \ell}^{(d)} (\Delta, \ell) \frac{(\nu - 1 / 2)_{\ell}}{(2
			\nu - 1)_{\ell}} = \frac{(\nu)_{\ell}}{(2 \nu)_{\ell}}\,.
	\end{equation}
	Hence by induction we get the lower bound
	\begin{equation}
		g_{\tau, \ell}^{(d)} (z, \bar{z}) \geqslant \frac{(\nu)_{\ell}}{(2
			\nu)_{\ell}} k_{\tau} (\bar{z}) k_{\tau + 2 \ell} (z) \label{cb:lowerbound}
		.
	\end{equation}
	
	\subsection{$d \geqslant 3$ upper bound}
	
	Here we will prove the upper bound in $d \geqslant 3$
	\begin{equation}
		\frac{g_{\tau, \ell} (z, \bar{z})}{\frac{(\nu)_{\ell}}{(2 \nu)_{\ell}}
		k_{\tau} (\bar{z}) k_{\tau + 2 \ell} (z)} \leqslant K_d (a, b) \left( 1 +
	\frac{1}{\Delta - \nu} \right)\qquad (\forall 0
	\leqslant \bar{z} \leqslant a < b \leqslant z < 1) \,.
	\end{equation}
	Note that the pole $\Delta = \nu$ can be approached only by the scalar blocks.
	Thus for $\ell \geqslant 1$ the bound says that the blocks are uniformly
	bounded.
	
	As discussed in Section \ref{dgeq3}, the first term in the reduction formula,
	$\mathcal{A}_{0, \ell}^{(d)} (\Delta, \ell) g^{(d - 1)}_{\Delta, \ell} (z,
	\bar{z})$, gives upon induction exactly the main term in the bound we want to
	obtain. Let $Y$ be the sum of all the other terms, i.e.
	\begin{equation}
		Y = \underset{(n, p) \neq (0, 0)}{\overset{}{\sum}} \mathcal{A}_{n, \ell - 2
			p}^{(d)} (\Delta, \ell) g^{(d - 1)}_{\Delta + 2 n, \ell - 2 p} (z, \bar{z})
		.
	\end{equation}
	To show the induction step, we need to show a bound
	\begin{equation}
		\frac{Y}{\mathcal{A}_{0, \ell}^{(d)} (\Delta, \ell) g^{(d - 1)}_{\Delta,
				\ell} (z, \bar{z})} \leqslant C (a, b, d) \left( 1 + \frac{1}{\Delta -
			\nu} \right) \label{Yratio}
	\end{equation}
	with $C (a, b, d)$ independent of $\ell, \Delta$.
	
	Using {\eqref{def:A}} we have
	\begin{eqnarray}
		\frac{\mathcal{A}_{n, \ell - 2 p}^{(d)} (\Delta, \ell)}{\mathcal{A}_{0,
				\ell}^{(d)} (\Delta, \ell)} & = & \frac{(1 / 2)_p \ell !}{p! (\ell - 2 p) !}
		\frac{(\nu)_{\ell - p}  (\nu - 1 / 2)_{\ell}  (2 \nu - 1)_{\ell - 2 p}  (j
			+ \nu - 1 / 2)}{(\nu)_{\ell}  (\nu - 1 / 2)_{\ell - p + 1} (2 \nu -
			1)_{\ell}} \times \frac{(1 / 2)_n}{2^{4 n} n!} \nonumber\\
		&  & \times \frac{\left( \frac{\Delta + j}{2} \right)_n}{\left(
			\frac{\Delta + j - 1}{2} \right)_n } \frac{\left( \Delta - \nu - \frac{1}{2}
			\right)_n}{(\Delta - \nu)_n} \frac{(\Delta - 1)_{2 n}}{\left( \Delta - \nu -
			\frac{1}{2} \right)_{2 n}}  \nonumber\\
		&  & \times \frac{\left( \frac{\Delta + \ell}{2} \right)_n}{\left(
			\frac{\Delta + \ell + 1}{2} \right)_n} \times \frac{\left( \frac{\Delta - 2
				\nu - j + 1}{2} \right)_n}{\left( \frac{\Delta - j - 2 \nu}{2} \right)_n} 
		\frac{\left( \frac{\Delta - \ell - 2 \nu}{2} \right)_n}{\left( \frac{\Delta
				- 2 \nu - \ell + 1}{2} \right)_n}  \label{A-Arat}
	\end{eqnarray}
	In the third line we bound $\left( \frac{\Delta + \ell}{2} \right)_n \left/
	\left( \frac{\Delta + \ell + 1}{2} \right)_n \right.$ by 1 and we also use
	\begin{equation}
		\frac{\left( \frac{\Delta - 2 \nu - j + 1}{2} \right)_n}{\left(
			\frac{\Delta - j - 2 \nu}{2} \right)_n}  \frac{\left( \frac{\Delta - \ell -
				2 \nu}{2} \right)_n}{\left( \frac{\Delta - 2 \nu - \ell + 1}{2} \right)_n}
		\leqslant 1,
	\end{equation}
	(actually $= 1$ for $\ell = 0$, while to see this for $\ell \geqslant 1$ is
	easy using the unitarity bound $\Delta - \ell - 2 \nu \geqslant 0$).
	
	We next study the second line in {\eqref{A-Arat}}, denote it $F_n$. We need
	to pay attention to the pole at $\Delta = \nu$ and spurious poles at $\Delta =
	1$ and $\Delta = \nu + 1 / 2$. For this reason we consider $n = 0, 1$ and $n
	\geqslant 2$ separately. We have
	\begin{align}
		F_0 &= 1,\\
		F_1 &= \frac{(\Delta + j) (\Delta - 1) \Delta}{(\Delta + j - 1) (\Delta -
			\nu) \left( \Delta - \nu + \frac{1}{2} \right)} .
	\end{align}
	Considering separately the cases $\ell = 0$, $\ell = 1$ and $\ell \geqslant 2$
	and using the unitarity bounds in each case, it's easy to see that
	\begin{equation}
		F_1 \leqslant C \left( 1 + \frac{1}{\Delta - \nu} \right)
	\end{equation}
	with $C$ uniform in $\Delta, \ell$, $j$.
	
	Finally for $n \geqslant 2$ we can write
	\begin{equation}
		F_n = F_1 \frac{\left( \frac{\Delta + j}{2} + 1 \right)_{n - 1}}{\left(
		\frac{\Delta + j - 1}{2} + 1 \right)_{n - 1} } \frac{\left( \Delta - \nu +
		\frac{1}{2} \right)_{n - 1}}{(\Delta - \nu + 1)_{n - 1}} \frac{(\Delta)_{2
			n - 2}}{\left( \Delta - \nu + \frac{1}{2} \right)_{2 n - 2}} \,.
		\end{equation}
	We can estimate the first factor by\footnote{To estimate Pochhammer ratios in
		this equation and the next one, the following lemma is useful: \textit{Let
			$x, y > 0$ and let $k$ be a nonnegative integer such that $y + k \geqslant x$.
			Then $ (x)_n / (y)_n \leqslant (1 + n / y)^k$ for all $n \geqslant 0$.} For a
		proof, rewrite $(x)_n / (y)_n$ as $\frac{(x)_n}{(y + k)_n} \times \frac{(y +
			k)_n}{(y)_n}$. The first factor is $\leqslant 1$. For the second factor we use
		the identity $\frac{(y + k)_n}{(y)_n} = \frac{\Gamma (y + k + n) \Gamma
			(y)}{\Gamma (y + k) \Gamma (y + n)} = \frac{(y + n)_k}{(y)_k}$. The last
		expression is $\leqslant (1 + n / y)^k$. }
	\begin{equation}
		\frac{\left( \frac{\Delta + j}{2} + 1 \right)_{n - 1}}{\left( \frac{\Delta +
				j - 1}{2} + 1 \right)_{n - 1} } \leqslant 1 + 2 n \qquad(\Delta \geqslant 0),
	\end{equation}
	the second one by 1, and the last one by
	\begin{equation}
		\frac{(\Delta)_{2 n - 2}}{\left( \Delta - \nu + \frac{1}{2} \right)_{2 n -
				2}} \leqslant (1 + 4 n)^{[\nu]} \qquad (\Delta \geqslant \nu) .
	\end{equation}
	We conclude that for all $n$, $\Delta, \ell, j$ we have
	\begin{equation}
		F_n \leqslant C \left( 1 + \frac{1}{\Delta - \nu} \right) (1 + 4 n)^{[\nu] +
			1} .
	\end{equation}
	Using the discussed estimates for the second and third line in
	{\eqref{A-Arat}}, we get an upper bound for the ratios of the $\mathcal{A}$
	coefficients in {\eqref{Yratio}}:
	\begin{eqnarray}
		\frac{\mathcal{A}_{n, \ell - 2 p}^{(d)} (\Delta, \ell)}{\mathcal{A}_{0,
				\ell}^{(d)} (\Delta, \ell)} & \leqslant & \frac{(1 / 2)_p \ell ! (\nu - 1 /
			2)_{\ell}}{p! (\ell - 2 p) ! (2 \nu - 1)_{\ell}}  \frac{(\nu)_{\ell - p}  (2
			\nu - 1)_{\ell - 2 p}  (j + \nu - 1 / 2)}{(\nu)_{\ell}  (\nu - 1 / 2)_{\ell
				- p + 1}} \nonumber\\
		&  & \times \frac{(1 / 2)_n}{2^{4 n} n!} C \left( 1 + \frac{1}{\Delta -
			\nu} \right) (1 + 4 n)^{[\nu] + 1} .  \label{Aratiobound}
	\end{eqnarray}

	For the ratios of $(d - 1)$-dimensional conformal blocks in {\eqref{Yratio}}
	we use the bounds from the previous induction step (the upper bound in the
	numerator and the lower bound in the denominator). We thus get
	\begin{equation}
		\frac{g^{(d - 1)}_{\Delta + 2 n, \ell - 2 p} (z, \bar{z})}{g^{(d -
				1)}_{\Delta, \ell} (z, \bar{z})} \leqslant C' \frac{(2
			\nu - 1)_{\ell}  (\nu - 1 / 2)_{\ell - 2 p}}{(\nu - 1 / 2)_{\ell}  (2 \nu -
			1)_{\ell - 2 p}}  \frac{k_{\tau + 2 n + 2 p} (\bar{z}) k_{\Delta + j + 2 n}
			(z)}{k_{\tau} (\bar{z}) k_{\Delta + \ell} (z)} . \label{gratio}
	\end{equation}
	where $C' = K_{d - 1} (a, b) \left( 1 + \frac{\theta (d - 1 \geqslant 3)
	}{\Delta + 2 n - \nu_{d - 1}} \right)$ is unformly bounded in
	$\Delta, \ell$ since $\nu_{d - 1} = \nu - 1 / 2$.
	
	For further simplification of {\eqref{gratio}} we use the identity
	{\eqref{kG}}. In addition to Lemma \ref{Gprop}, we will need another lemma
	about $G_{\beta}$:
	
	\begin{lemma}
		\label{lemma:Gratiobound}For $\beta_1 \geqslant \beta_2 > 0$ we have
		\begin{equation}
			\frac{G_{\beta_1} (\rho)}{G_{\beta_2} (\rho)} \leqslant \frac{\Gamma
				\left( \frac{\beta_1 + 1}{2} \right) \Gamma \left( \frac{\beta_2}{2}
				\right)}{\Gamma \left( \frac{\beta_2 + 1}{2} \right) \Gamma \left(
				\frac{\beta_1}{2} \right)}\qquad (0 \leqslant
			\rho < 1) .
		\end{equation}
	\end{lemma}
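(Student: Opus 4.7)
The plan is to prove the bound by coefficient-wise comparison of the power series defining $G_{\beta_1}(\rho)$ and $G_{\beta_2}(\rho)$. Write
\[
G_\beta(\rho) \;=\; \sum_{n=0}^{\infty} c_n(\beta)\,\rho^{2n},\qquad c_n(\beta) \;=\; \frac{(1/2)_n\,(\beta/2)_n}{((\beta+1)/2)_n\, n!}\,.
\]
All coefficients are nonnegative, so it suffices to establish the pointwise estimate $c_n(\beta_1) \leqslant M\,c_n(\beta_2)$ for every $n\geqslant 0$, where $M$ denotes the right-hand side of the claimed bound. Indeed, then $G_{\beta_1}(\rho)\leqslant M\,G_{\beta_2}(\rho)$ term by term.

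Define
\[
r_n \;=\; \frac{c_n(\beta_1)}{c_n(\beta_2)} \;=\; \frac{(\beta_1/2)_n\,((\beta_2+1)/2)_n}{((\beta_1+1)/2)_n\,(\beta_2/2)_n}\,.
\]
First I would show that $(r_n)$ is non-decreasing. The ratio
\[
\frac{r_{n+1}}{r_n} \;=\; \frac{(\beta_1/2+n)\,((\beta_2+1)/2+n)}{((\beta_1+1)/2+n)\,(\beta_2/2+n)}
\]
simplifies, after setting $a=\beta_1/2+n$, $c=\beta_2/2+n$, to $\frac{a(c+1/2)}{(a+1/2)c}$, and a direct computation gives $a(c+1/2)-(a+1/2)c = (\beta_1-\beta_2)/4 \geqslant 0$. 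Hence $r_{n+1}\geqslant r_n$, and in particular $r_n \geqslant r_0 = 1$.

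Next I would identify the limit. Rewriting
\[
\frac{(\beta/2)_n}{((\beta+1)/2)_n} \;=\; \frac{\Gamma((\beta+1)/2)\,\Gamma(\beta/2+n)}{\Gamma(\beta/2)\,\Gamma((\beta+1)/2+n)},
\]
and using the standard asymptotic $\Gamma(z+a)/\Gamma(z+b)\sim z^{a-b}$ as $z\to\infty$, we obtain
\[
\frac{(\beta/2)_n}{((\beta+1)/2)_n} \;\sim\; n^{-1/2}\,\frac{\Gamma((\beta+1)/2)}{\Gamma(\beta/2)} \qquad (n\to\infty)\,.
\]
Taking the quotient with $\beta=\beta_1$ over $\beta=\beta_2$, the factor $n^{-1/2}$ cancels and
\[
\lim_{n\to\infty} r_n \;=\; \frac{\Gamma((\beta_1+1)/2)\,\Gamma(\beta_2/2)}{\Gamma((\beta_2+1)/2)\,\Gamma(\beta_1/2)} \;=\; M\,.
\]
Combining monotonicity with the limit, $r_n \leqslant M$ for all $n\geqslant 0$, which gives the desired pointwise bound on the coefficients and completes the proof.

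There is no real obstacle here: the only thing to be slightly careful about is the monotonicity computation (which is what fixes the correct form of the bound) and the identification of the constant $M$ as the precise limit of the coefficient ratio, as opposed to a value at $\rho=1$ (which is unavailable because ${}_2F_1(1/2,\beta/2;(\beta+1)/2;1)$ diverges, since $c-a-b=0$).
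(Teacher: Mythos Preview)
Your proof is correct and complete. It takes a genuinely different route from the paper's argument. The paper uses the Euler integral representation
\[
G_{\beta}(\rho)=\frac{\Gamma\!\left(\tfrac{\beta+1}{2}\right)}{\Gamma\!\left(\tfrac12\right)\Gamma\!\left(\tfrac{\beta}{2}\right)}\int_0^1 \frac{t^{\beta/2-1}(1-t)^{-1/2}}{(1-\rho^2 t)^{1/2}}\,dt,
\]
and obtains the bound in one stroke by replacing $t^{\beta_1/2-1}$ by the larger $t^{\beta_2/2-1}$ under the integral; the Gamma prefactors then rearrange into $M$. Your approach instead compares the hypergeometric series term by term, proving that the coefficient ratio $r_n$ is monotone increasing to the limit $M$. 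The integral argument is shorter and makes the constant appear automatically, while your series argument is more elementary (no integral representations needed) and, as a byproduct, also yields $r_n\geqslant r_0=1$, i.e.\ the monotonicity statement $G_{\beta_1}\geqslant G_{\beta_2}$ that the paper records separately as part of Lemma~\ref{Gprop}.
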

	
	\begin{proof}
		We use the integral representation of ${}_2 F_1$ valid for $\text{Re} (c) > \text{Re} (b) > 0$, $\lvert \arg
		(1 - x) \rvert < \pi$:
		\begin{gather}
				{}_2 F_1 (a, b ; c ; x) = \frac{\Gamma (c)}{\Gamma (b) \Gamma (c - b)}
				\int_0^1 \frac{t^{b - 1} (1 - t)^{c - b - 1}}{(1 - x t)^a} d t 
			\end{gather}
		Taking $a = \frac{1}{2}, b = \frac{\beta}{2}, c = \frac{\beta + 1}{2}$ and
		$x = \rho^2$ we get
		\begin{eqnarray}
			G_{\beta_1} (\rho) & = & \frac{\Gamma \left( \frac{\beta_1 + 1}{2}
				\right)}{\Gamma \left( \frac{1}{2} \right) \Gamma \left( \frac{\beta_1}{2}
				\right)} \int_0^1 \frac{t^{\frac{\beta_1}{2} - 1} (1 - t)^{-
					\frac{1}{2}}}{(1 - \rho^2 t)^{\frac{1}{2}}} d t \, . 
		\end{eqnarray}
		We replace $t^{\beta_1 / 2}$ under the integral by a larger quantity $t^{\beta_2 / 2}$, and infer the stated bound.
	\end{proof}
	
	By {\eqref{kG}}, we rewrite the ratio of $k$'s in {\eqref{gratio}} as
	\begin{equation}
		\frac{k_{\tau + 2 n + 2 p} (\bar{z}) k_{\Delta + j + 2 n} (z)}{k_{\tau}
			(\bar{z}) k_{\Delta + \ell} (z)} = (4 \bar{\rho})^{n + p} \frac{G_{\tau + 2
				n + 2 p} (\bar{\rho})}{G_{\tau} (\bar{\rho})} (4 \rho)^{n - p}
		\frac{G_{\Delta + 2 n + j} (\rho)}{G_{\Delta + \ell} (\rho)} .
	\end{equation}
	By Lemma \ref{Gprop}, we have
	\begin{equation}
		\frac{G_{\tau + 2 n + 2 p} (\bar{\rho})}{G_{\tau} (\bar{\rho})} \leqslant
		G_{\tau + 2 n + 2 p} (\bar{\rho}) \leqslant \frac{1}{\sqrt{1 -
				\bar{\rho}^2}} \leqslant \frac{1}{\sqrt{1 - \rho (a)^2}} .
	\end{equation}
	By Lemmas \ref{Gprop} and \ref{lemma:Gratiobound} we have
	\begin{equation}
		\frac{G_{\Delta + 2 n + j} (\rho)}{G_{\Delta + \ell} (\rho)} \leqslant
		\frac{G_{\Delta + 2 n + j} (\rho)}{G_{\Delta + j} (\rho)} \leqslant
		\frac{\left( \frac{\Delta + j + 1}{2} \right)_n}{\left( \frac{\Delta + j}{2}
			\right)_n} \leqslant 1 + 4 n \qquad (\Delta \geqslant \nu) .
	\end{equation}
	Plugging these estimates into {\eqref{gratio}}, we estimate it as
	\begin{equation}
		\frac{g^{(d - 1)}_{\Delta + 2 n, \ell - 2 p} (z, \bar{z})}{g^{(d -
				1)}_{\Delta, \ell} (z, \bar{z})} \leqslant C'' (a, b, d)
		\frac{(2 \nu - 1)_{\ell}  (\nu - 1 / 2)_{\ell - 2 p}}{(\nu - 1 / 2)_{\ell} 
			(2 \nu - 1)_{\ell - 2 p}}  (16 \rho \bar{\rho})^n \left(
		\frac{\bar{\rho}}{\rho} \right)^p (1 + 4 n) . \label{gratiobound}
	\end{equation}
	Combining (\ref{Aratiobound}) and (\ref{gratiobound}), and using the
	inequality
	\begin{equation}
		\frac{\ell !}{(\ell - 2 p) !}  \frac{(\nu)_{\ell - p}  (\nu - 1 / 2)_{\ell -
				2 p + 1}}{(\nu)_{\ell}  (\nu - 1 / 2)_{\ell - p + 1}} \leqslant 2,
	\end{equation}
	we get
	\begin{eqnarray}
		\frac{\mathcal{A}_{n, \ell - 2 p}^{(d)} (\Delta, \ell) g^{(d - 1)}_{\Delta +
				2 n, \ell - 2 p} (z, \bar{z})}{\mathcal{A}_{0, \ell}^{(d)} (\Delta, \ell)
			g^{(d - 1)}_{\Delta, \ell} (z, \bar{z})} & \leqslant & C''' (a, b, d) \left(
		1 + \frac{1}{\Delta - \nu} \right) \frac{(1 / 2)_p }{p!}  \left(
		\frac{\bar{\rho}}{\rho} \right)^p \nonumber\\
		&  & \times \frac{(1 / 2)_n}{n!}  (1 + 4 n)^{[\nu] + 2}  (\rho
		\bar{\rho})^n .  \label{termbnd}
	\end{eqnarray}
	The point of this bound is that it only depends on $n, p, \rho, \bar{\rho}$
	but is uniform in $\Delta, \ell$. Also the dependence on $n, p$ is factorized.
	
	We now sum {\eqref{termbnd}} over $p, n$ to get a bound on the ratio
	{\eqref{Yratio}}. Although $p$ in Hogervorst's formula is bounded by $[\ell /
	2]$ we will extend the sum over all $p$ from $0$ to $\infty$. The sum over $p$
	is convergent when $\bar{\rho} / \rho < 1$ while the sum over $n$ is
	convergent when $\rho \bar{\rho} < 1$. Since we are assuming that $0 \leqslant
	\bar{z} \leqslant a < b \leqslant z < 1$, we have
	\begin{equation}
		\quad \rho \bar{\rho} \leqslant \rho (a) < 1, \quad
		\bar{\rho} / \rho \leqslant \rho (a) / \rho (b) < 1 .
	\end{equation}
	So the sums over $n$ and over $p$ are both bounded by a constant which only
	depends on $a, b$ and $d$. We thus obtain the needed bound {\eqref{Yratio}}.
	This completes the induction step.

	\section{Proof of Lemma
		\ref{lemma:improvedlog}}\label{app:improvedlog}
	
	Here we will prove Lemma \ref{lemma:improvedlog}, i.e.\,the improved
	logarithmic bound on the conformal blocks:
	\begin{equation}
		\frac{g_{\tau, \ell} (z, \bar{z})}{g_{\tau, \ell} (b, a)} \leqslant B
		\bar{z}^{\tau / 2} \log \left( \frac{1}{1 - z} \right) \qquad (0 \leqslant \bar{z} \leqslant a < b \leqslant z < 1)
		\label{toprove}
	\end{equation}
	for all $(\tau, \ell)$ satisfying $\tau \leqslant \tau_{\max}, \ell \leqslant
	\ell_{\max}$ and the unitarity bounds, i.e.\,$\tau \geqslant \nu$ for $\ell =
	0$ and $\tau \geqslant 2 \nu$ for $\ell \geqslant 1$. We will show that the
	constant factor $B$ in the upper bound is finite and depends only on $a$, $b$,
	$\tau_{\max}$ and $\ell_{\max}$.
	
	We will divide the proof into two cases: (a) $1 \leqslant \ell \leqslant
	\ell_{\max}$, (b) $\ell = 0$. We will derive (\ref{toprove}) with constants
	$B_1$ for (a) and $B_2 $ for (b). In the end we take $B = \max \{ B_1, B_2
	\}$.
	
	\subsection{$1 \leqslant \ell \leqslant \ell_{\max}$}
	
In this case the range of $\tau$ we are concerned about is given by
	\begin{equation}
		2 \nu \leqslant \tau \leqslant \tau_{\max} .
	\end{equation}
	As explained in Remark \ref{AFLrem}(c), Lemma \ref{AFLz} implies a
	two-sided bound for $2 \nu \leqslant \tau \leqslant \tau_{\max}$
	\begin{equation}
		C_1 \leqslant \frac{g_{\tau, \ell} (z, \bar{z})}{ \bar{z}^{\tau / 2}
			F_{\tau, \ell} (z)} \leqslant C_2 \qquad (0 < \bar{z} \leqslant a < b
		\leqslant z < 1),
	\end{equation}
	where $F_{\tau, \ell}$ is given in {\eqref{def:Ftaul}}, the constants $C_1$,
	$C_2$ are finite and they only depend on $a$, $b$ and $\tau_{\max}$. Using the
	upper bound for $g_{\tau, \ell} (z, \bar{z})$ and the lower bound for
	$g_{\tau, \ell} (b, a)$, we get
	\begin{equation}
		\begin{split}
			&\frac{g_{\tau, \ell} (z, \bar{z})}{g_{\tau, \ell} (b, a)} \leqslant
			\frac{C_2  \bar{z}^{\tau / 2} F_{\tau, \ell} (z)}{C_1 a^{\tau / 2} F_{\tau,
					\ell} (b)} = \frac{C_2  \bar{z}^{\tau / 2} k_{\tau + 2 \ell} (z)}{C_1
				a^{\tau / 2} k_{\tau + 2 \ell} (b)} \\
			&\left( 0
			< \bar{z} \leqslant a < b \leqslant z < 1,\quad 2
			\nu \leqslant \tau \leqslant \tau_{\max} \right), \\
		\end{split}
	\end{equation}
	here we used the explicit form of $F_{\tau, \ell}$ in the last step. Then
	using the lower bounds $a^{\tau / 2} \geqslant a^{\tau_{\max} / 2}$, $k_{\tau
		+ 2 \ell} (b) \geqslant b^{\tau_{\max} / 2 + \ell_{\max}}$ for the
	denominator, and the upper bounds $z^{\beta / 2} \leqslant 1$, $k_{\beta} (z)
	\leqslant {}_2 F_1 (\beta / 2, \beta / 2 ; \beta ; z) \leqslant {}_2 F_1
	(\beta_{\max} / 2, \beta_{\max} / 2 ; \beta_{\max} ; z)$ for the numerator, we
	get
	\begin{equation}
		\frac{g_{\tau, \ell} (z, \bar{z})}{g_{\tau, \ell} (b, a)} \leqslant C'
		\bar{z}^{\tau / 2} {}_2 F_1 \left( \frac{\tau_{\max} + 2 \ell_{\max}}{2},
		\frac{\tau_{\max} + 2 \ell_{\max}}{2} ; \tau_{\max} + 2 \ell_{\max} ; z
		\right) \label{ratiobound1}
	\end{equation}
	in the regime where $0 < \bar{z} \leqslant a < b \leqslant z < 1$, $2 \nu
	\leqslant \tau \leqslant \tau_{\max}$ and $1 \leqslant \ell \leqslant
	\ell_{\max}$. Here the coefficient $C' \assign C_2 / (C_1 a^{\tau_{\max} / 2}
	b^{\tau_{\max} / 2 + \ell_{\max}})$ is finite and depends only on $a$, $b$,
	$\tau_{\max}$ and $\ell_{\max}$. For the hypergeometric function ${}_2 F_1$ we
	use the fact (mentioned in Remark \ref{AFLrem}(e)) that
	\begin{equation}
		{}_2 F_1 \left( \frac{\beta}{2}, \frac{\beta}{2} ; \beta ; z \right) \sim
		\frac{\Gamma \left( \frac{\beta + 1}{2} \right) 2^{\beta - 1}}{\Gamma \left(
			\frac{\beta}{2} \right) \sqrt{\pi}} \log \left( \frac{1}{1 - z} \right)
		\qquad (z \rightarrow 1) .
	\end{equation}
	This asymptotics together with the monotonicity of ${}_2 F_1 \left(
	\frac{\beta}{2}, \frac{\beta}{2} ; \beta ; z \right)$ imply that
	\begin{equation}
		{}_2 F_1 \left( \frac{\beta}{2}, \frac{\beta}{2} ; \beta ; z \right)
		\leqslant C'' \log \left( \frac{1}{1 - z} \right) \qquad \left( b \leqslant z < 1,\quad \forall
		\beta \geqslant 0 \right), \label{ratiobound2}
	\end{equation}
	where $C''<\infty$ only depends on $0<\beta<\infty$ and $b>0$. Combining (\ref{ratiobound1}) and
	(\ref{ratiobound2}) we get
	\begin{equation}
		\frac{g_{\tau, \ell} (z, \bar{z})}{g_{\tau, \ell} (b, a)} \leqslant B_1
		\bar{z}^{\tau / 2} \log \left( \frac{1}{1 - z} \right)
	\end{equation}
	in the regime $0 < \bar{z} \leqslant a < b \leqslant z < 1$, $2 \nu \leqslant
	\tau \leqslant \tau_{\max}$ and $1 \leqslant \ell \leqslant \ell_{\max}$. Here
	$B_1 : = C' C''$ is finite and depends only on $a$, $b$,
	$\tau_{\max}$ and $\ell_{\max}$. This finishes the proof of $\ell \geqslant 1$
	case.
	
	\subsection{$\ell = 0$}
	
	The $d = 2$, $\ell = 0$ case can be shown exactly as $\ell \geqslant 1$ in the
	previous section, since in this case the $\Delta \rightarrow \nu$ pole is
	absent. In the rest of this section we consider $d \geqslant 3$.
	
By upper bound in Lemma \ref{AFLz} and $\bar{\rho} \leqslant \bar{z}$
	we get the upper bound of $g_{\Delta, 0} (z, \bar{z})$ for $\Delta > \nu$:
	\begin{equation}
		g_{\Delta, 0} (z, \bar{z}) \leqslant K_d (a, b) \left( 1 + \frac{1}{\Delta -
			\nu} \right) \bar{z}^{\Delta / 2} F_{\Delta, 0} (z)  \qquad (0 \leqslant \bar{z} \leqslant a < b \leqslant z < 1),
	\end{equation}
	where $K_d (a, b)$ is a finite constant independent of $\Delta$. Then by the
	similar analysis as the $\ell \geqslant 1$ case, we get an upper bound on
	$g_{\Delta, 0} (z, \bar{z})$:
	\begin{equation}
		\begin{split}
			&g_{\Delta, 0} (z, \bar{z}) \leqslant D_1 \left( 1 + \frac{1}{\Delta - \nu}
			\right) \bar{z}^{\Delta / 2} \log \left( \frac{1}{1 - z} \right) \\
			&\left( 0 \leqslant \bar{z} \leqslant a < b
			\leqslant z < 1,\quad \nu < \Delta \leqslant
			\tau_{\max} \right), \\
		\end{split}
	\end{equation}
	where $D_1$ is a finite constant, depending only on $a$, $b$ and
	$\tau_{\max}$. The main subtlety here is the $(\Delta - \nu)^{- 1}$ term which
	blows up when $\Delta \rightarrow \nu$. To show that the ratio $g_{\Delta, 0}
	(z, \bar{z}) / g_{\Delta, 0} (b, a)$ has the $\bar{z}^{\tau / 2} \log \left(
	\frac{1}{1 - \bar{z}} \right)$ bound with the coefficient uniform in $\Delta$,
	we need a lower bound of $g_{\Delta, 0} (b, a)$ which also blows up as
	$(\Delta - \nu)^{- 1}$ when $\Delta \rightarrow \nu$. The lower bound in Lemma
	\ref{AFLz} does not show this behavior. So we need an improved lower bound of
	$g_{\Delta, 0} (b, a)$ which features the $(\Delta - \nu)^{- 1}$ singularity.
	
	For this we use the dimensional reduction formula
	(\ref{eq:dimreduction}), restricting to $\ell = 0$:
	\begin{equation}
		g_{\Delta, 0}^{(d)} (z, \bar{z}) = \underset{n = 0}{\overset{\infty}{\sum}}
		\mathcal{A}_{n, 0}^{(d)} (\Delta, 0) g^{(d - 1)}_{\Delta + 2 n, 0} (z,
		\bar{z}), \label{dimred:scalar}
	\end{equation}
	where the reduction coefficients are given by
	\begin{equation}
		\mathcal{A}^{(d)}_{n, 0} (\Delta, 0) = \frac{\left( \frac{1}{2} \right)_n
			\left( \left( \frac{\Delta}{2} \right)_n \right)^3}{4^n n! (\Delta - \nu)_n
			\left( \Delta - \nu - \frac{1}{2} + n \right)_n  \left( \frac{\Delta + 1}{2}
			\right)_n} . \label{dimredcoeff:scalar}
	\end{equation}
	When $\Delta > \nu$, (\ref{dimred:scalar}) is a positive sum, so $g_{\Delta,
		0}^{(d)}$ is bounded from below by the $n = 1$ term:
	\begin{equation}
		g_{\Delta, 0}^{(d)} (z, \bar{z}) \geqslant \frac{\left( \frac{1}{2} \right)
			\left( \frac{\Delta}{2} \right)^3}{4 (\Delta - \nu) \left( \Delta - \nu +
			\frac{1}{2} \right)  \left( \frac{\Delta + 1}{2} \right)} g^{(d -
			1)}_{\Delta + 2, 0} (z, \bar{z}) .
	\end{equation}
	This estimate gives us the needed lower bound of $g_{\Delta, 0}^{(d)} (b, a)$
	since it has the $(\Delta - \nu)^{- 1}$ factor. The whole prefactor in front
	of $g^{(d - 1)}_{\Delta + 2, 0} (z, \bar{z})$ is bounded from below by
	\begin{equation}
		\frac{\left( \frac{1}{2} \right) \left( \frac{\Delta}{2} \right)^3}{4
			(\Delta - \nu) \left( \Delta - \nu + \frac{1}{2} \right)  \left(
			\frac{\Delta + 1}{2} \right)} \geqslant \frac{D_2}{\Delta - \nu}
		\qquad (\nu < \Delta \leqslant \tau_{\max}),
	\end{equation}
	where $D_2 \assign \frac{\nu^3}{32 \left( \tau_{\max} - \nu + \frac{1}{2}
		\right) (\tau_{\max} + 1)} > 0$. Now take $z = b$ and $\bar{z} = a$, the
	estimate of $g^{(d - 1)}_{\Delta + 2, 0} (b, a)$ is similar to the previous
	section:
	\begin{eqnarray}
		g^{(d - 1)}_{\Delta + 2, 0} (b, a) & \geqslant & 2^{- \tau_{\max} - 2}
		a^{(\Delta + 2) / 2} F_{\Delta + 2, 0} (b)\\
		& = & 2^{- \tau_{\max} - 2} a^{(\Delta + 2) / 2} 4^{(\Delta + 2) / 2}
		b^{\Delta + 2} {}_2 F_1 (\Delta + 2, \Delta + 2 ; 2 \Delta + 4 ; b)
		\nonumber\\
		& \geqslant & 2^{- \tau_{\max} - 2} a^{(\tau_{\max} + 2) / 2} 4^{(\nu + 2)
			/ 2} b^{\tau_{\max} + 2} {}_2 F_1 (\nu + 2, \nu + 2 ; 2 \nu + 4 ; b)=: D_3
		\nonumber 
	\end{eqnarray}
	Here in the first line we used the lower bound of Lemma \ref{AFLz} and the fact
	that $\bar{\rho}^{(\Delta + 2) / 2} \geqslant (\bar{z} / 4)^{(\Delta + 2) / 2}
	\geqslant 2^{- \tau_{\max} - 2} \bar{z}^{(\Delta + 2) / 2}$ for $\Delta
	\leqslant \tau_{\max}$, in the second line we used the explicit form of
	$F_{\Delta + 2, 0} (b)$, in the third line we bounded $a^{(\Delta + 2) / 2}$,
	$4^{(\Delta + 2) / 2}$, $b^{\Delta + 2}$ and ${}_2 F_1$ by their minimal values in the range $\nu \leqslant
	\Delta \leqslant \tau_{\max}$. $D_3>0$ because $a, b \in (0, 1)$ and $\tau_{\max} < \infty$.
	
	Putting these together, we have for $0 \leqslant \bar{z} \leqslant a < b
	\leqslant z < 1$ and $\nu < \Delta \leqslant \tau_{\max}$:
	\begin{equation}
		\frac{g_{\Delta, 0} (z, \bar{z})}{g_{\Delta, 0} (b, a)} \leqslant
		\frac{D_1}{D_2 D_3} (\Delta - \nu) \left( 1 + \frac{1}{\Delta - \nu} \right)
		\bar{z}^{\Delta / 2} \log \left( \frac{1}{1 - z} \right) \leqslant B_2
		\bar{z}^{\Delta / 2} \log \left( \frac{1}{1 - z} \right),
	\end{equation}
	where $B_2 \assign \frac{D_1}{D_2 D_3} (1 + \tau_{\max} - \nu)$ is finite and
	depends only on $a$, $b$ and $\tau_{\max}$.

	\small

\bibliography{lightcone}

\providecommand{\href}[2]{#2}\begingroup\raggedright\begin{thebibliography}{10}

\bibitem{Fitzpatrick:2012yx}
A.~L. Fitzpatrick, J.~Kaplan, D.~Poland, and D.~Simmons-Duffin, ``{The Analytic
  Bootstrap and AdS Superhorizon Locality},''
  \href{http://dx.doi.org/10.1007/JHEP12(2013)004}{{\em JHEP} {\bfseries 12}
  (2013) 004}, \href{http://arxiv.org/abs/1212.3616}{{\ttfamily arXiv:1212.3616
  [hep-th]}}.

\bibitem{Komargodski:2012ek}
Z.~Komargodski and A.~Zhiboedov, ``{Convexity and Liberation at Large Spin},''
  \href{http://dx.doi.org/10.1007/JHEP11(2013)140}{{\em JHEP} {\bfseries 11}
  (2013) 140}, \href{http://arxiv.org/abs/1212.4103}{{\ttfamily arXiv:1212.4103
  [hep-th]}}.

\bibitem{Parisi:1973xn}
G.~Parisi, ``{How to measure the dimension of the parton field},''
  \href{http://dx.doi.org/10.1016/0550-3213(73)90666-4}{{\em Nucl. Phys. B}
  {\bfseries 59} (1973) 641--646}.

\bibitem{Callan:1973pu}
C.~G. Callan, Jr. and D.~J. Gross, ``{Bjorken scaling in quantum field
  theory},'' \href{http://dx.doi.org/10.1103/PhysRevD.8.4383}{{\em Phys. Rev.
  D} {\bfseries 8} (1973) 4383--4394}.

\bibitem{Collier:2016cls}
S.~Collier, Y.-H. Lin, and X.~Yin, ``{Modular Bootstrap Revisited},''
  \href{http://dx.doi.org/10.1007/JHEP09(2018)061}{{\em JHEP} {\bfseries 09}
  (2018) 061}, \href{http://arxiv.org/abs/1608.06241}{{\ttfamily
  arXiv:1608.06241 [hep-th]}}.

\bibitem{Afkhami-Jeddi:2017idc}
N.~Afkhami-Jeddi, K.~Colville, T.~Hartman, A.~Maloney, and E.~Perlmutter,
  ``{Constraints on higher spin CFT$_{2}$},''
  \href{http://dx.doi.org/10.1007/JHEP05(2018)092}{{\em JHEP} {\bfseries 05}
  (2018) 092}, \href{http://arxiv.org/abs/1707.07717}{{\ttfamily
  arXiv:1707.07717 [hep-th]}}.

\bibitem{Benjamin:2019stq}
N.~Benjamin, H.~Ooguri, S.-H. Shao, and Y.~Wang, ``{Light-cone modular
  bootstrap and pure gravity},''
  \href{http://dx.doi.org/10.1103/PhysRevD.100.066029}{{\em Phys. Rev. D}
  {\bfseries 100} no.~6, (2019) 066029},
  \href{http://arxiv.org/abs/1906.04184}{{\ttfamily arXiv:1906.04184
  [hep-th]}}.

\bibitem{Kravchuk:2021kwe}
P.~Kravchuk, J.~Qiao, and S.~Rychkov, ``{Distributions in CFT. Part II.
  Minkowski space},'' \href{http://dx.doi.org/10.1007/JHEP08(2021)094}{{\em
  JHEP} {\bfseries 08} (2021) 094},
  \href{http://arxiv.org/abs/2104.02090}{{\ttfamily arXiv:2104.02090
  [hep-th]}}.

\bibitem{Pappadopulo:2012jk}
D.~Pappadopulo, S.~Rychkov, J.~Espin, and R.~Rattazzi, ``{OPE Convergence in
  Conformal Field Theory},''
  \href{http://dx.doi.org/10.1103/PhysRevD.86.105043}{{\em Phys. Rev. D}
  {\bfseries 86} (2012) 105043},
  \href{http://arxiv.org/abs/1208.6449}{{\ttfamily arXiv:1208.6449 [hep-th]}}.

\bibitem{Kusuki:2018wpa}
Y.~Kusuki, ``{Light Cone Bootstrap in General 2D CFTs and Entanglement from
  Light Cone Singularity},''
  \href{http://dx.doi.org/10.1007/JHEP01(2019)025}{{\em JHEP} {\bfseries 01}
  (2019) 025}, \href{http://arxiv.org/abs/1810.01335}{{\ttfamily
  arXiv:1810.01335 [hep-th]}}.

\bibitem{Collier:2018exn}
S.~Collier, Y.~Gobeil, H.~Maxfield, and E.~Perlmutter, ``{Quantum Regge
  Trajectories and the Virasoro Analytic Bootstrap},''
  \href{http://dx.doi.org/10.1007/JHEP05(2019)212}{{\em JHEP} {\bfseries 05}
  (2019) 212}, \href{http://arxiv.org/abs/1811.05710}{{\ttfamily
  arXiv:1811.05710 [hep-th]}}.

\bibitem{Qiao:2017xif}
J.~Qiao and S.~Rychkov, ``{A tauberian theorem for the conformal bootstrap},''
  \href{http://dx.doi.org/10.1007/JHEP12(2017)119}{{\em JHEP} {\bfseries 12}
  (2017) 119}, \href{http://arxiv.org/abs/1709.00008}{{\ttfamily
  arXiv:1709.00008 [hep-th]}}.

\bibitem{Alday:2007mf}
L.~F. Alday and J.~M. Maldacena, ``{Comments on operators with large spin},''
  \href{http://dx.doi.org/10.1088/1126-6708/2007/11/019}{{\em JHEP} {\bfseries
  11} (2007) 019}, \href{http://arxiv.org/abs/0708.0672}{{\ttfamily
  arXiv:0708.0672 [hep-th]}}.

\bibitem{Hogervorst:2016hal}
M.~Hogervorst, ``{Dimensional Reduction for Conformal Blocks},''
  \href{http://dx.doi.org/10.1007/JHEP09(2016)017}{{\em JHEP} {\bfseries 09}
  (2016) 017}, \href{http://arxiv.org/abs/1604.08913}{{\ttfamily
  arXiv:1604.08913 [hep-th]}}.

\bibitem{Karateev:2019pvw}
D.~Karateev, P.~Kravchuk, M.~Serone, and A.~Vichi, ``{Fermion Conformal
  Bootstrap in 4d},'' \href{http://dx.doi.org/10.1007/JHEP06(2019)088}{{\em
  JHEP} {\bfseries 06} (2019) 088},
  \href{http://arxiv.org/abs/1902.05969}{{\ttfamily arXiv:1902.05969
  [hep-th]}}.

\bibitem{Fitzpatrick:2015qma}
A.~L. Fitzpatrick, J.~Kaplan, M.~T. Walters, and J.~Wang, ``{Eikonalization of
  Conformal Blocks},'' \href{http://dx.doi.org/10.1007/JHEP09(2015)019}{{\em
  JHEP} {\bfseries 09} (2015) 019},
  \href{http://arxiv.org/abs/1504.01737}{{\ttfamily arXiv:1504.01737
  [hep-th]}}.

\bibitem{Li:2015itl}
D.~Li, D.~Meltzer, and D.~Poland, ``{Conformal Collider Physics from the
  Lightcone Bootstrap},'' \href{http://dx.doi.org/10.1007/JHEP02(2016)143}{{\em
  JHEP} {\bfseries 02} (2016) 143},
  \href{http://arxiv.org/abs/1511.08025}{{\ttfamily arXiv:1511.08025
  [hep-th]}}.

\bibitem{Seiberg:1990eb}
N.~Seiberg, ``{Notes on quantum Liouville theory and quantum gravity},''
  \href{http://dx.doi.org/10.1143/PTPS.102.319}{{\em Prog. Theor. Phys. Suppl.}
  {\bfseries 102} (1990) 319--349}.

\bibitem{Maloney:2007ud}
A.~Maloney and E.~Witten, ``{Quantum Gravity Partition Functions in Three
  Dimensions},'' \href{http://dx.doi.org/10.1007/JHEP02(2010)029}{{\em JHEP}
  {\bfseries 02} (2010) 029}, \href{http://arxiv.org/abs/0712.0155}{{\ttfamily
  arXiv:0712.0155 [hep-th]}}.

\bibitem{Hartman:2019pcd}
T.~Hartman, D.~Maz\'a\v{c}, and L.~Rastelli, ``{Sphere Packing and Quantum
  Gravity},'' \href{http://dx.doi.org/10.1007/JHEP12(2019)048}{{\em JHEP}
  {\bfseries 12} (2019) 048}, \href{http://arxiv.org/abs/1905.01319}{{\ttfamily
  arXiv:1905.01319 [hep-th]}}.

\bibitem{Dixon:1986qv}
L.~J. Dixon, D.~Friedan, E.~J. Martinec, and S.~H. Shenker, ``{The Conformal
  Field Theory of Orbifolds},''
  \href{http://dx.doi.org/10.1016/0550-3213(87)90676-6}{{\em Nucl. Phys. B}
  {\bfseries 282} (1987) 13--73}.

\bibitem{Ponsot:1999uf}
B.~Ponsot and J.~Teschner, ``{Liouville bootstrap via harmonic analysis on a
  noncompact quantum group},''
  \href{http://arxiv.org/abs/hep-th/9911110}{{\ttfamily arXiv:hep-th/9911110}}.

\bibitem{Ponsot:2000mt}
B.~Ponsot and J.~Teschner, ``{Clebsch-Gordan and Racah-Wigner coefficients for
  a continuous series of representations of U(q)(sl(2,R))},''
  \href{http://dx.doi.org/10.1007/PL00005590}{{\em Commun. Math. Phys.}
  {\bfseries 224} (2001) 613--655},
  \href{http://arxiv.org/abs/math/0007097}{{\ttfamily arXiv:math/0007097}}.

\bibitem{Yin:2017yyn}
X.~Yin, ``{Aspects of Two-Dimensional Conformal Field Theories},''
  \href{http://dx.doi.org/10.22323/1.305.0003}{{\em PoS} {\bfseries TASI2017}
  (2017) 003}.

\bibitem{Dotsenko:1998gyp}
V.~Dotsenko, J.~L. Jacobsen, M.-A. Lewis, and M.~Picco, ``{Coupled Potts
  models: Self-duality and fixed point structure},''
  \href{http://dx.doi.org/10.1016/S0550-3213(99)00097-8}{{\em Nucl. Phys. B}
  {\bfseries 546} (1999) 505--557},
  \href{http://arxiv.org/abs/cond-mat/9812227}{{\ttfamily
  arXiv:cond-mat/9812227}}.

\bibitem{Antunes:2022vtb}
A.~Antunes and C.~Behan, ``{Coupled Minimal Conformal Field Theory Models
  Revisited},'' \href{http://dx.doi.org/10.1103/PhysRevLett.130.071602}{{\em
  Phys. Rev. Lett.} {\bfseries 130} no.~7, (2023) 071602},
  \href{http://arxiv.org/abs/2211.16503}{{\ttfamily arXiv:2211.16503
  [hep-th]}}.

\bibitem{Simmons-Duffin:2016wlq}
D.~Simmons-Duffin, ``{The Lightcone Bootstrap and the Spectrum of the 3d Ising
  CFT},'' \href{http://dx.doi.org/10.1007/JHEP03(2017)086}{{\em JHEP}
  {\bfseries 03} (2017) 086}, \href{http://arxiv.org/abs/1612.08471}{{\ttfamily
  arXiv:1612.08471 [hep-th]}}.

\bibitem{Liu:2020tpf}
J.~Liu, D.~Meltzer, D.~Poland, and D.~Simmons-Duffin, ``{The Lorentzian
  inversion formula and the spectrum of the 3d O(2) CFT},''
  \href{http://dx.doi.org/10.1007/JHEP09(2020)115}{{\em JHEP} {\bfseries 09}
  (2020) 115}, \href{http://arxiv.org/abs/2007.07914}{{\ttfamily
  arXiv:2007.07914 [hep-th]}}. [Erratum: JHEP 01, 206 (2021)].

\bibitem{Caron-Huot:2020ouj}
S.~Caron-Huot, Y.~Gobeil, and Z.~Zahraee, ``{The leading trajectory in the 2+1D
  Ising CFT},'' \href{http://arxiv.org/abs/2007.11647}{{\ttfamily
  arXiv:2007.11647 [hep-th]}}.

\bibitem{Caron-Huot:2017vep}
S.~Caron-Huot, ``{Analyticity in Spin in Conformal Theories},''
  \href{http://dx.doi.org/10.1007/JHEP09(2017)078}{{\em JHEP} {\bfseries 09}
  (2017) 078}, \href{http://arxiv.org/abs/1703.00278}{{\ttfamily
  arXiv:1703.00278 [hep-th]}}.

\bibitem{Simmons-Duffin:2017nub}
D.~Simmons-Duffin, D.~Stanford, and E.~Witten, ``{A spacetime derivation of the
  Lorentzian OPE inversion formula},''
  \href{http://dx.doi.org/10.1007/JHEP07(2018)085}{{\em JHEP} {\bfseries 07}
  (2018) 085}, \href{http://arxiv.org/abs/1711.03816}{{\ttfamily
  arXiv:1711.03816 [hep-th]}}.

\bibitem{Mukhametzhanov:2018zja}
B.~Mukhametzhanov and A.~Zhiboedov, ``{Analytic Euclidean Bootstrap},''
  \href{http://dx.doi.org/10.1007/JHEP10(2019)270}{{\em JHEP} {\bfseries 10}
  (2019) 270}, \href{http://arxiv.org/abs/1808.03212}{{\ttfamily
  arXiv:1808.03212 [hep-th]}}.

\bibitem{Mukhametzhanov:2019pzy}
B.~Mukhametzhanov and A.~Zhiboedov, ``{Modular invariance, tauberian theorems
  and microcanonical entropy},''
  \href{http://dx.doi.org/10.1007/JHEP10(2019)261}{{\em JHEP} {\bfseries 10}
  (2019) 261}, \href{http://arxiv.org/abs/1904.06359}{{\ttfamily
  arXiv:1904.06359 [hep-th]}}.

\bibitem{Ganguly:2019ksp}
S.~Ganguly and S.~Pal, ``{Bounds on the density of states and the spectral gap
  in CFT$_{2}$},'' \href{http://dx.doi.org/10.1103/PhysRevD.101.106022}{{\em
  Phys. Rev. D} {\bfseries 101} no.~10, (2020) 106022},
  \href{http://arxiv.org/abs/1905.12636}{{\ttfamily arXiv:1905.12636
  [hep-th]}}.

\bibitem{Pal:2019zzr}
S.~Pal and Z.~Sun, ``{Tauberian-Cardy formula with spin},''
  \href{http://dx.doi.org/10.1007/JHEP01(2020)135}{{\em JHEP} {\bfseries 01}
  (2020) 135}, \href{http://arxiv.org/abs/1910.07727}{{\ttfamily
  arXiv:1910.07727 [hep-th]}}.

\bibitem{Pal:2019yhz}
S.~Pal, ``{Bound on asymptotics of magnitude of three point coefficients in 2D
  CFT},'' \href{http://dx.doi.org/10.1007/JHEP01(2020)023}{{\em JHEP}
  {\bfseries 01} (2020) 023}, \href{http://arxiv.org/abs/1906.11223}{{\ttfamily
  arXiv:1906.11223 [hep-th]}}.

\bibitem{Mukhametzhanov:2020swe}
B.~Mukhametzhanov and S.~Pal, ``{Beurling-Selberg Extremization and Modular
  Bootstrap at High Energies},''
  \href{http://dx.doi.org/10.21468/SciPostPhys.8.6.088}{{\em SciPost Phys.}
  {\bfseries 8} no.~6, (2020) 088},
  \href{http://arxiv.org/abs/2003.14316}{{\ttfamily arXiv:2003.14316
  [hep-th]}}.

\bibitem{Pal:2020wwd}
S.~Pal and Z.~Sun, ``{High Energy Modular Bootstrap, Global Symmetries and
  Defects},'' \href{http://dx.doi.org/10.1007/JHEP08(2020)064}{{\em JHEP}
  {\bfseries 08} (2020) 064}, \href{http://arxiv.org/abs/2004.12557}{{\ttfamily
  arXiv:2004.12557 [hep-th]}}.

\bibitem{Lamouret}
Q.~Lamouret, ``{Analytical bootstrap and double twist operators},''. Master
  thesis, Ecole Normale Supérieure, 2022.

\bibitem{Li:2019cwm}
W.~Li, ``{Lightcone expansions of conformal blocks in closed form},''
  \href{http://dx.doi.org/10.1007/JHEP06(2020)105}{{\em JHEP} {\bfseries 06}
  (2020) 105}, \href{http://arxiv.org/abs/1912.01168}{{\ttfamily
  arXiv:1912.01168 [hep-th]}}.

\bibitem{Li:2020ijq}
W.~Li, ``{Factorized lightcone expansion of conformal blocks},''
  \href{http://dx.doi.org/10.1007/JHEP05(2021)128}{{\em JHEP} {\bfseries 05}
  (2021) 128}, \href{http://arxiv.org/abs/2012.09710}{{\ttfamily
  arXiv:2012.09710 [hep-th]}}.

\bibitem{Kos:2013tga}
F.~Kos, D.~Poland, and D.~Simmons-Duffin, ``{Bootstrapping the $O(N)$ vector
  models},'' \href{http://dx.doi.org/10.1007/JHEP06(2014)091}{{\em JHEP}
  {\bfseries 06} (2014) 091},
\href{http://arxiv.org/abs/1307.6856}{{\ttfamily arXiv:1307.6856 [hep-th]}}.

\bibitem{Penedones:2015aga}
J.~Penedones, E.~Trevisani, and M.~Yamazaki, ``{Recursion Relations for
  Conformal Blocks},'' \href{http://dx.doi.org/10.1007/JHEP09(2016)070}{{\em
  JHEP} {\bfseries 09} (2016) 070},
\href{http://arxiv.org/abs/1509.00428}{{\ttfamily arXiv:1509.00428 [hep-th]}}.

\end{thebibliography}\endgroup
\bibliographystyle{utphys}

\end{document}